\definecolor{weborange}{rgb}{.8,.3,.3}
\definecolor{webblue}{rgb}{0,0,.8}
\definecolor{internallinkcolor}{rgb}{0,.5,0}
\definecolor{externallinkcolor}{rgb}{0,0,.5}
\providecommand{\remove}[1]{}
\newcommand{\Draft}[1]{\ifdefined\IsDraft \texttt{ #1} \fi}
\newcommand{\TLLNCS}[2]{\ifdefined\IsLLNCS#1\else #2 \fi}
\newcommand{\authnote}[2]{{\bf [{\color{red} #1's Note:} {\color{blue} #2}]}}
\newcommand{\authnote}[2]{}
\newcommand{\sdotfill}{\textcolor[rgb]{0.8,0.8,0.8}{\dotfill}} 
\newenvironment{protocol}{\begin{proto}}{\vspace{-\topsep}\sdotfill\end{proto}}
\newcommand{\ie} {i.e.,\ }
\newcommand{\eg} {e.g.,\ }
\newcommand{\wrt} {with respect to\ }
\newcommand{\cf}{{cf.,\ }}
\newcommand{\ceil}[1]{\left\lceil #1 \right\rceil}
\newcommand{\set}[1]{\ens{#1}}
\newcommand{\half}{\tfrac{1}{2}}
\newcommand{\eqdef}{:=}
\newcommand{\N}{{\mathbb{N}}}
\newcommand{\F}{{\mathbb F}}
\newcommand{\zo}{\set{0,1}}
\newcommand{\zs}{\zo^\ast}
\newcommand{\suchthat}{{\;\; : \;\;}}
\newcommand{\al}{\alpha}
\newcommand{\be}{\beta}
\newcommand{\eps}{\epsilon}
\newcommand{\from}{\leftarrow}
\newcommand{\Exp}{\operatorname*{E}}
\newcommand{\Supp}{\operatorname{Supp}}
\newcommand{\Image}{\operatorname{Im}}
\newcommand{\MathFam}[1]{\mathcal{#1}}
\newcommand{\FFam}{\MathFam{F}}
\newcommand{\MathAlg}[1]{\mathsf{#1}\xspace}
\renewcommand{\cref}{\Cref}
	\newaliascnt{claiml}{theorem}
	\newtheorem{claiml}[claiml]{Claim}
	\renewenvironment{claim}{\begin{claiml}}{\end{claiml}}
	\newaliascnt{lemmal}{theorem}
	\newtheorem{lemmal}[lemmal]{Lemma}
	\renewenvironment{lemma}{\begin{lemmal}}{\end{lemmal}}
	\newaliascnt{propositionl}{theorem}
	\newtheorem{propositionl}[propositionl]{Proposition}
	\renewenvironment{proposition}{\begin{propositionl}}{\end{propositionl}}
	\newaliascnt{definitionl}{theorem}
	\newtheorem{definitionl}[definitionl]{Definition}
	\renewenvironment{definition}{\begin{definitionl}}{\end{definitionl}}
	\newaliascnt{corollaryl}{theorem}
	\newtheorem{corollaryl}[definitionl]{Corollary}
	\renewenvironment{corollary}{\begin{corollaryl}}{\end{corollaryl}}
	\newtheorem{theorem}{Theorem}[section]
	\newaliascnt{lemma}{theorem}
	\newtheorem{lemma}[lemma]{Lemma}
	\newaliascnt{claim}{theorem}
	\newtheorem{claim}[claim]{Claim}
	\newaliascnt{corollary}{theorem}
	\newtheorem{corollary}[corollary]{Corollary}
	\newaliascnt{proposition}{theorem}
	\newaliascnt{conjecture}{theorem}
	\newaliascnt{definition}{theorem}
	\newtheorem{definition}[definition]{Definition}
	\newaliascnt{remark}{theorem}
	\newtheorem{remark}[remark]{Remark}
	\newaliascnt{example}{theorem}
\crefname{lemma}{Lemma}{Lemmas}
\crefname{figure}{Figure}{Figures}
\crefname{claim}{Claim}{Claims}
\crefname{corollary}{Corollary}{Corollaries}
\crefname{proposition}{Proposition}{Propositions}
\crefname{conjecture}{Conjecture}{Conjectures}
\crefname{definition}{Definition}{Definitions}
\crefname{remark}{Remark}{Remarks}
\crefname{exmaple}{Example}{Examples}
\newaliascnt{construction}{theorem}
\crefname{construction}{Construction}{Constructions}
\newaliascnt{fact}{theorem}
\newtheorem{fact}[fact]{Fact}
\crefname{fact}{Fact}{Facts}
\newaliascnt{notation}{theorem}
\newtheorem{notation}[notation]{Notation}
\crefname{notation}{Notation}{Notation}
\crefname{equation}{Equation}{Equations}
\newaliascnt{proto}{theorem}
\newtheorem{proto}[proto]{Protocol}
\crefname{proto}{protocol}{protocols}
\newaliascnt{algo}{theorem}
\newtheorem{algo}[algo]{Algorithm}
\crefname{algo}{algorithm}{algorithms}
\newaliascnt{expr}{theorem}
\newtheorem{expr}[expr]{Experiment}
\crefname{experiment}{experiment}{experiments}
\newcommand{\stepref}[1]{Step~\ref{#1}}
\def\FullBox{$\Box$}
\def\qed{\ifmmode\qquad\FullBox\else{\unskip\nobreak\hfil
\penalty50\hskip1em\null\nobreak\hfil\FullBox
\parfillskip=0pt\finalhyphendemerits=0\endgraf}\fi}
\def\qedsketch{\ifmmode\Box\else{\unskip\nobreak\hfil
\penalty50\hskip1em\null\nobreak\hfil$\Box$
\parfillskip=0pt\finalhyphendemerits=0\endgraf}\fi}
\newcommand{\ex}[2]{\Exp_{#1}\left[#2\right]}
\newcommand{\pr}[1]{\Pr\left[#1\right]}
\newcommand{\ppr}[2]{\Pr_{#1}\left[#2\right]}
\newcommand{\ens}[1]{\left\{#1\right\}}
\newcommand{\size}[1]{\left|#1\right|}
\newcommand{\cs}{{\cal{S}}}
\newcommand{\cE}{\mathcal{E}}
\newcommand{\cX}{\mathcal{X}}
\newcommand{\cY}{\mathcal{Y}}
\newcommand{\cF}{\mathcal{F}}
\newcommand{\cG}{{\mathcal{G}}}
\newcommand{\Ac}{\MathAlgX{A}}
\newcommand{\Bc}{\MathAlgX{B}}
\newcommand{\Pc}{\MathAlgX{P}}
\newcommand{\Tableofcontents}{
	\ifdefined\IsLLNCS \else
	\thispagestyle{empty}
	\pagenumbering{gobble}
	\clearpage
	\ifdefined\IsSubmission \else
	\setcounter{tocdepth}{2}
	\tableofcontents
	\thispagestyle{empty}
	\clearpage
	\fi
	\pagenumbering{arabic}
	\fi
}
\newcommand{\hide}[1]{ }
\newcommand{\rv}[1]{\mathrm{#1}}
\newcommand{\nat}{\mathbb{N}}
\newcommand{\setx}{{\cal{X}}}
\newcommand{\sety}{{\cal{Y}}}
\DeclareMathAlphabet \mathbfcal{OMS}{cmsy}{b}{n}
\newcommand{\setY}{{\rv{Y}}}
\newcommand{\Ensuremath}[1]{\ensuremath{#1}\xspace}
\newcommand{\MathAlgX}[1]{\Ensuremath{\MathAlg{#1}}}
\newcommand{\Th}[1]{#1\ensuremath{^{\rm th}}}
\newcommand{\ith}{\Th{i}\xspace}
\newcommand{\jth}{\Th{j}\xspace}
\newcommand{\nfrac}[2]{\nicefrac{#1}{#2}}
\newcommand{\Cuv}{\cE}
\newcommand{\fld}{\ensuremath{\mathbb{F}}\xspace}
\newcommand{\str}[1]{\ensuremath{{\zo}^{#1}\xspace}}
\newcommand{\inrprd}[1]{\langle #1 \rangle}
\newcommand{\cK}{\ensuremath{\mathcal{K}}}
\newcommand{\cS}{\ensuremath{\mathcal{S}}}
\newcommand{\cT}{\ensuremath{\mathcal{T}}}
\newcommand{\cL}{\ensuremath{\mathcal{L}}}
\newcommand{\cQ}{\ensuremath{\mathcal{Q}}}
\newcommand{\func}[1]{ \qopname\relax o{#1} }
\newcommand{\rank}{\func{rank}}
\newcommand{\Span}{\func{Span}}
\newcommand{\stack}[2]{ \begin{pmatrix} #1 \\ #2 \end{pmatrix} }
\newcommand{\numqry}{q}
\newcommand{\maxi}{ i }
\newcommand{\Zto}[1]{\ensuremath{Z_{#1}}}
\newcommand{\fspace}{ \ensuremath{ {\FFam_n} }}
\newcommand{\getf}{ \ensuremath{ f \from \fspace } }
\newcommand{\matl}{ \fld^{\ell \times n} }
\newcommand{\matd}{ \fld^{\numqry \times n} }
\newcommand{\vecl}{ \fld^{\ell} }
\newcommand{ \pA }{\Ac}
\newcommand{ \pB }{\Bc}
\newcommand{\true}{\text{True}}
\newcommand{\false}{\text{False}}
\newcommand{\Inv}{\MathAlgX{C}}
\newcommand{\preprocessor}{\Inv_{\mathsf{pre}}}
\newcommand{\queries}{\Inv_\mathsf{qry}}
\newcommand{\decoder}{\Inv_{\mathsf{dec}}}
\newcommand{\triplet}{( \preprocessor, \queries, \decoder) }
\newcommand{\InvB}{\MathAlgX{D}}
\newcommand{\queriesB}{\InvB_\mathsf{qry}}
\newcommand{\decoderB}{\InvB_{\mathsf{dec}}}
\newcommand{\pref}[1]{\ensuremath{f^{-1}(#1)}}
\newcommand{\algOutputfa}[2]{ \decoder \left( #1, #2, f\left( \queries(#1, #2) \right)  \right) }
\newcommand{\mat}[1]{\textbf{#1}}
\newcommand{\mE}{\mat{E}}
\newcommand{\mA}{\mat{A}}
\newcommand{\mB}{\mat{B}}
\newcommand{\mM}{\mat{M}}
\newcommand{\Dnote}[1]{\authnote{Dror}{#1}}
\newcommand{\Nnote}[1]{\authnote{noam}{#1}}
\newcommand{\Inote}[1]{\authnote{Iftach}{#1}}
\newcommand{\Reviewer}[1]{\authnote{Reviewer}{#1}}
\title{Lower Bounds on  the Time/Memory Tradeoff of\\  Function Inversion
	\Draft{\\\small \sc Working Draft: Please Do not Distribute}
	\thanks{An extended abstract of this work appeared  in TCC 2020 \cite{chawin2020lower}}
}
 	\author{}
	 \date{}
    \author{Dror Chawin\thanks{School of Computer Science, Tel Aviv University. Emails: \texttt{quefumas@gmail.com,iftachh@cs.tau.ac.il,noammaz@gmail.com}. Research supported by ERC starting grant 638121 and Israel Science Foundation grant   666/19.}  
    	\and Iftach Haitner\footnotemark[2] \thanks{Member of the  Check Point Institute for Information Security. }
    	    	\and Noam Mazor\footnotemark[2]
    }
\begin{document}
\sloppy
\maketitle

\begin{abstract}
	
We study time/memory tradeoffs of \textit{function inversion}: an algorithm, \ie  an  \textit{inverter},  equipped with  an $s$-bit advice on a randomly chosen function $f\colon [n] \mapsto [n]$ and using $q$  oracle queries to $f$, tries to invert a randomly chosen output $y$ of $f$, \ie to find $x\in f^{-1}(y)$.  Much progress was done regarding  \textit{adaptive} function inversion---the inverter is allowed to make  \emph{adaptive}  oracle queries. \citeauthor{Hellman80} [IEEE transactions on Information Theory '80]  presented an adaptive  inverter   that inverts with high probability  a random $f$. \citeauthor{FiatN00s} [SICOMP '00] proved that  for any $s,q$ with $s^3 q = n^3$ (ignoring low-order  terms), an $s$-advice, $q$-query   variant of \citeauthor{Hellman80}'s  algorithm inverts a constant fraction of the image points of \emph{any} function.   \citeauthor{Yao90} [STOC '90] proved a lower bound of $sq\ge n$ for  this problem. Closing the gap between the above lower and upper bounds is a long-standing open question.

Very little is known for the \textit{non-adaptive} variant of the question---the inverter chooses  its queries \emph{in advance}. The only known upper bounds, \ie inverters, are the \emph{trivial} ones (with $s+q= n$), and the only lower bound is the above bound of  \citeauthor{Yao90}. In a recent work, \citeauthor{CorriganK19} [TCC '19] partially justified the difficulty of finding lower bounds on non-adaptive inverters, showing that  a lower bound on the time/memory tradeoff of  non-adaptive inverters implies a lower  bound on low-depth Boolean circuits. Bounds that,  for a strong  enough choice of parameters, are notoriously hard to prove. 

We make progress on the above intriguing  question, both for the adaptive and the non-adaptive case, proving the  following lower bounds on restricted families of  inverters:
\begin{description}
	\item[Linear-advice (adaptive inverter).] If the advice string is a linear function of $f$ (\eg $A\times f$, {for some matrix $A$,} viewing $f$ as  a vector in $[n]^n$), then $s+q \in \Omega(n)$. The bound generalizes to the  case where the advice string of $f_1 + f_2$, \ie the  coordinate-wise addition of the truth tables of $f_1$ and $f_2$, can be computed from the description of $f_1$ and $f_2$  by a \emph{low} communication protocol.
	
	\item[Affine non-adaptive decoders.] If the  non-adaptive inverter    has an \textit{affine decoder}---it  outputs  a linear function, determined by the advice string and the element to invert, of the query answers---then $s \in \Omega(n)$ (regardless of   $q$).
	
	\item [Affine non-adaptive decision trees.] If the  non-adaptive inversion algorithm is a $d$-depth \textit{affine decision tree}---it outputs the evaluation of a decision tree whose nodes compute  a linear function of the answers to the queries---and $q < cn$ for some universal $c>0$, then $s\in \Omega(n/d \log n)$.
\end{description}
 
\end{abstract}


\Tableofcontents

\section{Introduction}
In the \textit{function-inversion} problem, an algorithm,  \textit{inverter}, attempts to find a preimage for a randomly chosen $y\in[n]$ of a random function $f\colon [n] \to [n]$. The inverter is equipped with  an $s$-bit advice on $f$, and may  make $q$   oracle queries to $f$. Since $s$  lowerbounds  the inverter space complexity and $q$ lowerbounds the inverter time complexity, it is common to refer to the relation between  $s$ and $q$ as the inverter's  \textit{time/memory tradeoff}.  The function-inversion problem is central to both theoretical and  practical cryptography. On the  theoretical end, the security of many systems relies on the existence of one-way functions. While the task of inverting one-way functions  is very different from  that  of inverting random functions,  understanding the latter task is critical towards developing lower bounds on  the possible (black-box) implications of one-way functions, \eg \citet{ImpagliazzoRu89,GennaroGKT05}.   But advances on this problem (at least on the positive side, \ie inverters) are likely to find practical applications. Indeed,  algorithms for function inversion are used to expose weaknesses in  existing cryptosystems.

Much progress was done regarding   \emph{adaptive} function inversion---the inverter may choose its queries adaptively (\ie based on answers for previous queries).  \citet{Hellman80}   presented an adaptive  inverter   that inverts with high probability  a random $f$. \citet{FiatN00s} proved that  for any $s,q$ with $s^3 q = n^3$ (ignoring low-order  terms), an $s$-advice $q$-query   variant of \citeauthor{Hellman80}'s  algorithm inverts a constant fraction of the image points of \emph{any} function.  \citet{Yao90} proved a lower bound of $s\cdot q\ge n$ for this  problem.  Closing the gap between the above lower and upper bounds is a long-standing open question.  In contrast, very little is known about the non-adaptive variant of this problem---the inverter  performs all queries at once.  This variant is interesting since such inverter is likely  be highly  parallelizable, making it significantly more tractable for real world applications.  The only known upper bounds for this variant, \ie inverters, are the \emph{trivial} ones (\ie $s+q = n$), and the only known lower bound is the above bound of  \citet{Yao90}. In a recent work, \citet{CorriganK19} have partially justified the difficulty of finding lower bounds on this seemingly   easier to tackle problem, showing that  lower bounds on non-adaptive inversion imply  circuit  lower bounds that, for strong enough parameters, are notoriously hard (see further details in \cref{sec:intro:OurResult:Applications}).

\subsection{Our Results}\label{sec:intro:OurResult}
We make progress on this intriguing  question, proving lower bounds on restricted families of  inverters.   To state our results, we use the following formalization to capture  inverters with a preprocessing phase: such  inverters have  two parts, the \emph{preprocessing} algorithm that gets as input the function to invert $f$ and  outputs an advice string $a$, and the \emph{decoding} algorithm that takes as input the element to invert $y$, the advice string $a$, and using restricted query access to $f$ tries to find a preimage of $y$.  We start with describing our bound for the time/memory tradeoff of linear-advice (adaptive) inverters, and then present our lower bounds for non-adaptive inverters. In the following, fix $n\in \N$ and let $\FFam$ be the set of all functions from $[n]$ to $[n]$.

\subsubsection{Linear-advice Inverters}
We start with a more formal description of adaptive function inverters.
\begin{definition}[Adaptive inverters, informal]\label{def:intro,adaptiveInverterWithSuccess}
An {\sf $s$-advice, $\numqry$-query adaptive inverter} is a  deterministic algorithm pair $\Inv\eqdef(\preprocessor, \decoder)$, where $\preprocessor:\FFam \to \str{s}$, and $\decoder^{(\cdot)}:[n]\times\str{s}\to [n]$ is a $q$-query  algorithm. We say that {\sf $\Inv$ inverts $\FFam$ with high probability} if 
	
	\begin{align*}
	\ppr{\stackrel{f\gets \FFam}{a \eqdef \preprocessor(f)}}{ \ppr{\stackrel{x\gets[n]}{y\eqdef f(x)}}{\decoder^f(y,a) \in f^{-1}(y)}\ge 1/2}\ge 1/2. 
	\end{align*}
\end{definition}
It is common to refer to $a$ ($\eqdef \preprocessor(f)$) as the \textit{advice string}.
 In \textit{linear-advice} inverters, the preprocessing  algorithm $\preprocessor$ is restricted to output  a linear function of $f$. That is,   $\preprocessor(f_1) + \preprocessor(f_2) = \preprocessor(f_1+ f_2)$, where the addition $f_1+ f_2$ is coordinate-wise \wrt an  arbitrary group over $[n]$, and the addition $\preprocessor(f_1) + \preprocessor(f_2)$ is over an arbitrary group that contains the image of   $\preprocessor$. An example of such a preprocessing algorithm is $\preprocessor(f)  \eqdef A\times f$, for $A\in \zo^{s\times n}$, viewing $f\in \FFam$ as a vector in $[n]^n$. For such inverters, we present the following bound.

\begin{theorem}[Bound on linear-advice inverters]\label{thm:intro:LinearAdvice}
	Assume there exists an $s$-advice $\numqry$-query inverter with linear preprocessing that inverts  $\FFam$ with high probability. Then   $s + \numqry \cdot \log n \in \Omega(n)$.
\end{theorem}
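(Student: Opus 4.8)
The plan is to reduce the existence of such an inverter to a two‑party communication problem and then invoke a communication lower bound; the linearity of the preprocessing is exactly what lets the reduction go through with little communication, and the adaptivity of the decoder turns out to be harmless in the communication setting. Consider the following game between Alice and Bob. Alice gets a uniformly random $f_A\colon[n]\to[n]$, and Bob gets a uniformly random $f_B\colon[n]\to[n]$ together with a value $y\eqdef f(X)$, where $f\eqdef f_A+f_B$ (coordinate‑wise, with respect to the group on $[n]$ used by $\preprocessor$) and $X$ is uniform in $[n]$; crucially, $X$ itself is not given to Bob. Their goal is to output some $x\in f^{-1}(y)$. Writing $h(x)\eqdef y-f_B(x)$ — a function Bob can compute — we have $f(x)=y$ iff $f_A(x)=h(x)$, so equivalently the players must find a coordinate on which Alice's function $f_A$ and Bob's function $h$ agree; one such coordinate, namely $X$, is guaranteed to exist. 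The game is nontrivial because neither party's input alone reveals $X$: only the joint pair does.

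Given an $s$‑advice, $\numqry$‑query inverter $\Inv=(\preprocessor,\decoder)$ with linear preprocessing, this game has a cheap protocol. Alice sends $\preprocessor(f_A)$ (at most $s$ bits, since $\preprocessor$ is a group homomorphism and its image has size $\le 2^s$); by linearity Bob computes $a\eqdef\preprocessor(f_A)+\preprocessor(f_B)=\preprocessor(f)$; Bob then simulates $\decoder^{f}(y,a)$, and whenever the simulated decoder issues an oracle query $x_i$, Bob sends $x_i$ to Alice ($\lceil\log n\rceil$ bits) and Alice replies with $f_A(x_i)$ ($\lceil\log n\rceil$ bits), so Bob can feed back $f(x_i)=f_A(x_i)+f_B(x_i)$; finally Bob outputs the decoder's output. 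The total communication is $s+2\numqry\lceil\log n\rceil+O(1)$. Since $f$ is uniform and $y$ is a random image point of $f$, the inverter's guarantee gives this protocol success probability at least $1/2\cdot 1/2=1/4$ over the game's input distribution. (For the stated generalization, replace Alice's first message by the given low‑communication protocol for computing $\preprocessor(f_1+f_2)$, and the bound on $s$ by that protocol's communication cost.)

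It remains to show that the game requires $\Omega(n)$ bits of communication for any protocol with success probability bounded away from $0$. This bound is tight, since Alice can reveal $f_A$ one bit‑plane at a time, with Bob halving the set of candidate agreement coordinates after each plane, for $O(n)$ bits total — so the lower bound is genuinely $\Omega(n)$ and not more. I would prove it by a reduction from \textsc{Set‑Disjointness} over a universe of size $\Theta(n)$ (whose bounded‑error randomized communication complexity is $\Omega(n)$): encode a disjointness instance as a pair of functions whose agreement coordinates encode the intersection, using shared randomness so that the resulting instance is distributed (close to) like the game's input, and use that after running a game protocol Bob can locally check — with one extra query — whether the returned $x$ is a genuine agreement, hence distinguish "the intersection is nonempty" from "it is empty" with constant advantage. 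Combining with the previous paragraph, $s+2\numqry\lceil\log n\rceil+O(1)\in\Omega(n)$, hence $s+\numqry\cdot\log n\in\Omega(n)$.

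The main obstacle is precisely this last communication lower bound, and specifically making the reduction from \textsc{Set‑Disjointness} faithful. The inverter's guarantee holds only on average — over a random function and a random image point — so one cannot feed it worst‑case disjointness instances; the embedding must therefore be randomized so that a hard disjointness distribution maps into (a distribution dominated by) the game's distribution. One must also reconcile the promise structure: the disjoint case of \textsc{Set‑Disjointness} yields an instance with no agreement to find, whereas the inverter assumes a preimage exists; this is where the local‑verification step does its work, letting the empty and nonempty cases still be told apart even though the inverter behaves arbitrarily on the former. Everything else — the homomorphism manipulation by which Alice's single short message determines $a$, and the $O(\log n)$‑bit simulation of each oracle query, which is also what makes the whole argument apply verbatim to adaptive decoders — is routine.
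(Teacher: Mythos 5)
Your overall strategy---reduce from set disjointness via a randomly shifted embedding, use linearity so that Alice's single $s$-bit message lets Bob reconstruct $\preprocessor(f_\Ac+f_\Bc)$, and simulate each oracle query with $O(\log n)$ bits---is exactly the paper's (Protocol~\ref{proto:set_intersection_reduction}), and those parts of your argument are sound. The gap is in the final distinguishing step. You propose that Bob verify ``whether the returned $x$ is a genuine agreement,'' i.e.\ whether $f(x)=y$. This does not separate the two cases: in your embedding the pair $(f,y)$ is a uniformly random function together with an independent uniform target \emph{in both the disjoint and the intersecting case} (the only difference being whether a preimage of $y$ has been planted at the intersection-derived coordinate). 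A good inverter finds \emph{some} preimage of a uniform $y$ under a uniform $f$ with constant probability whether or not anything was planted---$y$ lies in $\Image(f)$ with probability about $1-1/e$---so the agreement check passes with constant probability on disjoint instances too, and yields no advantage.

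The fix, which is what the paper does, is to check instead whether the returned coordinate lies in \emph{both} input sets (Bob sends $i=x-d$ and $b_i$; Alice checks $a_i$), giving one-sided error: on disjoint inputs the check never fires. But for this check to fire with noticeable probability on intersecting inputs, it is not enough that the inverter finds \emph{a} preimage of $y$---it must return the \emph{planted} coordinate $W$ itself. This requires a separate argument, \cref{claim:getting_the_correct_preimage} in the paper: if $\Pr[\Inv(f,f(x))\in f^{-1}(f(x))]\ge\alpha$ for uniform $f$ and $x$, then $\Pr[\Inv(f,f(x))=x]\ge\alpha^2/8$ (because a random $f$ has few spurious preimages in expectation). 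This claim, together with the resulting $O(\log(1/\e)/\alpha^2)$-fold repetition to amplify the one-sided success probability $\alpha^2/8$ before invoking Razborov's bound, is the missing content of your proof. (Your side remark that the game's upper bound is $O(n)$ via bit-planes is also off---that protocol costs $\Theta(n\log n)$ or requires Bob to repeatedly transmit candidate sets---but nothing in the lower bound depends on it.)
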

	\remove{\Reviewer{maybe instead of $s$, bound with the $CC$ of computing $\preprocessor(f_A + f_B)$, to generalize to more preprocessors?}
	\Reviewer{\#3 agrees, suggests an example which I didn't understand}}

We prove \cref{thm:intro:LinearAdvice}  via a reduction from \textit{set disjointness}, a classical problem in the study of two-party communication complexity. The above result generalizes to the following bound that replaces the  restriction on the decoder (\eg linear and short output) with the ability to compute the advice string of $f_1 + f_2$  by a low-communication protocol over the inputs $f_1$ and $f_2$.
\begin{theorem}[Bound on additive-advice inverters, informal]\label{thm:intro:LinearAdviceGen}
Assume there exists a $\numqry$-query inverter $\Inv\eqdef(\preprocessor, \cdot)$ and an $s$-bit communication two-party protocol $(\Pc_1,\Pc_2)$ such that for every $f_1,f_2 \in \FFam$,  the output of $\Pc_1$ in $(\Pc_1(f_1),\Pc_2(f_2))$  equals with constant probability   to $\preprocessor(f_1 + f_2)$. Then   $s + \numqry \cdot \log n \in \Omega(n)$.
\end{theorem}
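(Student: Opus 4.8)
The plan is to reduce from set disjointness, \DSR: Alice holds $S\subseteq[m]$, Bob holds $T\subseteq[m]$ with $m=\Theta(n)$, and they must decide whether $S\cap T=\emptyset$. Recall the classical fact that any public-coin two-party protocol solving \DSR on $[m]$ with advantage bounded away from $0$ communicates $\Omega(m)$ bits. Given the $\numqry$-query inverter $\Inv\eqdef(\preprocessor,\decoder)$ and the $s$-bit protocol $(\Pc_1,\Pc_2)$ from the hypothesis, I would build from them a public-coin \DSR protocol communicating $O(s+\numqry\log n)$ bits; with $m=\Theta(n)$ this forces $s+\numqry\log n\in\Omega(n)$. (Here \cref{thm:intro:LinearAdvice} is the special case in which $(\Pc_1,\Pc_2)$ is the trivial protocol where Bob sends $\preprocessor(f_2)$ and Alice outputs $\preprocessor(f_1)+\preprocessor(f_2)=\preprocessor(f_1+f_2)$, correct with probability $1$ and using $s$ bits by linearity of $\preprocessor$; the general statement merely substitutes an arbitrary such protocol.)

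The \DSR protocol runs in four stages. \emph{The gadget.} Using the public random string $\rho$, Alice and Bob locally and deterministically turn their inputs into functions $f_1=f_1(S;\rho)$ and $f_2=f_2(T;\rho)$ in $\FFam$ and agree on a target $y^\ast=y^\ast(\rho)$, arranged so that, when $S\cap T=\emptyset$, the pair $(f\eqdef f_1+f_2,\,y^\ast)$ is distributed so that the hypothesis ``$\Inv$ inverts $\FFam$ with high probability'' lets us conclude $\decoder^f(y^\ast,\preprocessor(f))\in f^{-1}(y^\ast)$ with probability at least a positive constant, while when $S\cap T\neq\emptyset$ this event has noticeably smaller probability --- in the most favorable instantiation, $y^\ast$ provably has no $f$-preimage at all, so the event is impossible. \emph{Computing the advice.} The parties run $(\Pc_1,\Pc_2)$ on $(f_1,f_2)$, after which Alice holds, with constant probability, $a=\preprocessor(f_1+f_2)=\preprocessor(f)$, at a cost of $s$ bits ($O(s)$ bits for \cref{thm:intro:LinearAdvice}). \emph{Simulating the oracle.} Alice runs the deterministic computation $\decoder^f(y^\ast,a)$ herself, answering each of its (at most $\numqry$) oracle calls $x$ by sending $x$ to Bob and receiving $f_2(x)$, whence $f(x)=f_1(x)+f_2(x)$; this costs $O(\numqry\log n)$ bits. \emph{Deciding.} Let $\hat x$ be the decoder's output; after one more $O(\log n)$-bit exchange the parties learn whether $f(\hat x)=y^\ast$ and output ``disjoint'' or ``not disjoint'' according to the correspondence fixed in the gadget. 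The inverter's promise gives the success event probability at least a positive constant in the first case and a strictly smaller value in the second, so the protocol decides \DSR with constant advantage while communicating $s+O(\numqry\log n)$ bits.

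I expect the gadget to be the main obstacle, as it must reconcile two opposing forces: the more the inputs $S,T$ shape $f$, the farther $f$ drifts from the distribution on which $\Inv$ is guaranteed to succeed; but if $S,T$ shape $f$ too little, the inverter's behaviour on $f$ cannot encode whether $S\cap T=\emptyset$. Making this work means choosing which coordinates of $f$ carry the disjointness signal and how the additive split $f_1+f_2$ writes that signal there, randomizing the remaining coordinates and $y^\ast$ so that the $S\cap T=\emptyset$ restriction of $(f,y^\ast)$ is close enough to what $\Inv$ expects, and then bounding the resulting statistical distances and image-point probabilities --- this is where the real (though routine) probabilistic accounting sits, the rest being the communication bookkeeping above. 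Two small remarks: one takes \DSR on a ground set of size $\Theta(n)$ to match the $n$ coordinates of $f$, which is what makes the bound $\Omega(n)$; and the merely constant success probabilities of $(\Pc_1,\Pc_2)$ and of $\Inv$ do no harm, since the $\Omega(n)$ bound for \DSR tolerates any constant advantage (and the whole reduction can be repeated $O(1)$ times with fresh $\rho$ to boost it).
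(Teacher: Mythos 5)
Your outer shell --- reduce to set disjointness over a ground set of size $\Theta(n)$, obtain the advice for $f=f_1+f_2$ by running $(\Pc_1,\Pc_2)$ at cost $s$, simulate the decoder's $\numqry$ oracle calls with $O(\numqry\log n)$ bits, finish with an $O(\log n)$-bit check, and amplify by $O(1)$ repetitions --- is exactly the paper's architecture (\cref{proto:set_intersection_reduction}, \cref{thm:razborov}). But the entire content of the proof is the gadget, which you explicitly leave open, and the direction you sketch for it is not the one that works. You propose that the disjointness signal be carried by \emph{whether inversion succeeds}, ideally by arranging that $y^\ast$ has no $f$-preimage when $S\cap T\neq\emptyset$. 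Additively this is essentially unrealizable: the only place the two parties can coordinate without communication is the intersection itself, and coordination there can only \emph{plant} a value $f(w)=f_1(w)+f_2(w)$ at the (unknown to either party) intersection point $w$ --- it cannot globally remove $y^\ast$ from the image of $f$, which would require constraining all $n$ coordinates while simultaneously keeping $f$ close to uniform.

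The paper's gadget inverts your logic. It works with the unique-intersection promise version $\cQ=\set{\cX,\cY : \size{\cX\cap\cY}\le 1}$ (for which Razborov's $\Omega(n)$ bound holds), applies a public random cyclic shift $d$, and sets $f_1(i+d)=0$ for $i\in S$ and $f_2(i+d)=y$ for $i\in T$ (uniform otherwise), so that in the intersecting case $f$ is \emph{exactly} uniform, the planted point $W=w+d$ is uniform and independent of $f$, and $(f,W,y)\equiv(F,X,F(X))$ --- no statistical-distance accounting is needed. The signal is then read not from success/failure but from the \emph{location} of the returned preimage: the parties check whether $\hat x-d$ lies in both sets. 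Making this sound requires the one genuinely nontrivial supporting fact you are missing, \cref{claim:getting_the_correct_preimage}: an inverter that finds \emph{some} preimage with probability $\alpha$ must return the \emph{specific} sampled $x$ with probability $\ge\alpha^2/8$ (because a random function has few collisions). In the disjoint case the protocol never errs, giving one-sided error. So while your communication bookkeeping is correct, the missing gadget is not "routine probabilistic accounting" --- it is the heart of the argument, and the mechanism you anticipate for it would not go through.
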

The above bound indeed generalizes \cref{thm:intro:LinearAdvice}:  a preprocessing algorithm of the type required by  \cref{thm:intro:LinearAdvice} immediately  implies  a two-party  protocol of the type required by \cref{thm:intro:LinearAdviceGen}.

\subsubsection{Non-adaptive Inverters}
In the non-adaptive setting, the decoding algorithm has two phases: the \textit{query selection} algorithm that chooses the queries as a function of the advice and the element to invert $y$, and the actual decoder that receives the answers to these queries along with the advice string and $y$.

\begin{definition}[Non-adaptive inverters, informal]\label{def:intro:nonAdaptInverters}
	An  {\sf $s$-advice, $q$-query   non-adaptive inverter} is a deterministic algorithm  triplet of the form $\Inv\eqdef \triplet$, where $
	\preprocessor\colon\FFam \to \str{s}$,
	$\queries\colon [n] \times \str{s} \to [n]^\numqry$ and
	$\decoder \colon[n] \times \str{s} \times [n]^\numqry \to [n]$. We say that {\sf $\Inv$ inverts $\FFam$ with high probability} if 
	
	\begin{align*}
	\ppr{\stackrel{f\gets \FFam}{a = \preprocessor(f)}}{ \ppr{\stackrel{x\gets[n]}{\stackrel{y=f(x)}{v = \queries(y,a)}}}{\decoder(y,a,f(v)) \in f^{-1}(y)}\ge 1/2}\ge 1/2. 
	\end{align*}
\end{definition}
Note that  the query vector $v$ is of length $q$, so the answer vector $f(v)$ contains $q$ answers.   Assuming there exists  a field $\F$ of size $n$ (see \cref{rem:fieldSize}),  we provide two lower bounds for such inverters.

\paragraph{Affine decoders.}
The first bound regards inverters with \emph{affine decoders}.   A decoder algorithm $\decoder$ is \emph{affine} if it computes an affine function of $f$'s answers. That is, for every  image $y\in [n]$ and advice   $a\in \zo^s$, there exists  a $\numqry$-sparse vector $\al_y^a\in\fld^n$ and a field element $\be_y^a\in\fld$ such that  $\decoder(y,a,f(\queries(y,a)))= \inrprd{\al_y^a, f}+ \be_y^a$ for every $f\in \FFam$. For this type of inverters, we present the following lower bound.

\begin{theorem}[Bound on non-adaptive  inverters with affine decoders, informal]\label{thm:intro:AffineDecoders}
Assume there exists an $s$-advice non-adaptive function  inverter with an affine decoder, that inverts $\FFam$ with high probability. Then $ s \in \Omega(n)$.
\end{theorem}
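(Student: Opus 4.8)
First I would fix an advice value $a\in\str{s}$. By the affine-decoder hypothesis, for each image $y$ there are a $\numqry$-sparse $\al_y\in\fld^n$ and $\be_y\in\fld$ with $\decoder(y,a,f(\queries(y,a)))=\inrprd{\al_y,f}+\be_y=:o_y(f)$ for every $f\in\FFam$, so once $a$ is fixed the ``well-inverted image set'' $I(f):=\{y\in[n]\colon f(o_y(f))=y\}$ is a function of $f$ alone. Let $\mathcal G$ be the set of functions on which the inverter succeeds (so $|\mathcal G|\ge\tfrac12|\FFam|$) and $\mathcal G_a:=\{f\in\mathcal G\colon\preprocessor(f)=a\}$. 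For $f\in\mathcal G_a$ the decoder inverts at least half of the inputs, i.e. $\sum_{y\in I(f)}|f^{-1}(y)|=|f^{-1}(I(f))|\ge n/2$; since $\sum_{y\in I(f)}|f^{-1}(y)|\le|I(f)|+\bigl(n-|\Image(f)|\bigr)$ and $|\Image(f)|\ge 0.6n$ for all but a $2^{-\Omega(n)}$-fraction of $f$ (standard occupancy concentration), all but $2^{-\Omega(n)}|\FFam|$ of the functions in $\mathcal G_a$ satisfy $|I(f)|\ge c_0 n$ for an absolute $c_0>0$. Hence it suffices to prove that for \emph{every} affine structure $\{(\al_y,\be_y)\}_{y\in[n]}$ the set $N:=\{f\colon|I(f)|\ge c_0 n\}$ has $|N|\le|\FFam|\cdot 2^{-\Omega(n)}$; then $\tfrac12|\FFam|\le\sum_a|\mathcal G_a|\le 2^s\cdot|\FFam|\cdot 2^{-\Omega(n)}$ forces $s\in\Omega(n)$.

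\textbf{Compressing functions in $N$.} To bound $|N|$ I would give each $f\in N$ an injective description of length $\log|\FFam|-\Omega(n)$ bits. Fix $f\in N$, put $I:=I(f)$, $k:=|I|\ge c_0 n$, $x_y:=o_y(f)$, $X:=\{x_y\colon y\in I\}$ (these are distinct preimages of distinct images, so $|X|=k$), $W:=\Span\{\al_y\colon y\in I\}$, $X^c:=[n]\setminus X$, and $d_0:=\dim\bigl(W\cap\fld^{X}\bigr)$ (where $\fld^X$ means vectors supported on $X$). The key leverage is that knowing the bijection $\sigma\colon y\mapsto x_y$ recovers $f$ on $X$ (via $f(x_y)=y$) and, since it then also reveals the right-hand sides of the linear equations $\inrprd{\al_y,f}=x_y-\be_y$, confines $f$ to an affine subspace of dimension $n-k-\dim\pi_{X^c}(W)$, where $\pi_{X^c}$ is the projection onto coordinates outside $X$. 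I would split into two cases according to how ``aligned'' $W$ is with $X$.

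\textbf{Case 1: $d_0\le k-\tfrac{2n}{\log n}$ (the linear constraints help).} The $k-\dim W$ independent relations $\sum_y\lambda_y\al_y=0$ become $k-\dim W$ independent linear equations $\sum_y\lambda_y x_y=\sum_y\lambda_y\be_y$ on $\sigma$, so $\sigma$ lies in a decoder-reconstructible (from $I$) affine subspace of dimension $\dim W$. I encode $f$ by: the set $I$ ($\le n$ bits); the position of $\sigma$ in that subspace ($(\dim W)\log n$ bits, which also reveals $X=\sigma(I)$ and $f|_X$); and $f|_{X^c}$ inside its confining subspace ($\bigl((n-k)-\dim\pi_{X^c}(W)\bigr)\log n$ bits). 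Using $\dim W-\dim\pi_{X^c}(W)=d_0$, the total is $n\log n-(k-d_0)\log n+n+O(\log n)\le n\log n-n+O(\log n)$. \textbf{Case 2: $d_0>k-\tfrac{2n}{\log n}$ ($W$ almost sits in $\fld^X$).} Here $\dim\pi_{X^c}(W)=\dim W-d_0=o(n)$. The sub-lemma I would prove is that any two $k$-sets $Y,Y'$ with $\dim\pi_{[n]\setminus Y}(W),\dim\pi_{[n]\setminus Y'}(W)<\tfrac{2n}{\log n}$ satisfy $\dim(W\cap\fld^{Y\cap Y'})\ge\dim W-o(n)$ (submodularity of $Y\mapsto\dim(W\cap\fld^{Y})$ against $\dim W$), hence $|Y\cap Y'|\ge k-o(n)$ and $|Y\mathbin\triangle Y'|=o(n)$; so, given $W$ (which the decoder recomputes from $I$), the true $X$ is one of only $2^{o(n)}$ candidates. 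I then encode $f$ by: $I$ ($\le\log\binom nk\le H(k/n)\,n$ bits, $H$ binary entropy); the index of $X$ among the candidates ($o(n)$ bits); the bijection $\sigma\colon I\to X$ ($\log(k!)$ bits); and $f|_{X^c}$ ($(n-k)\log n$ bits). With $t=k/n$ and Stirling ($\log(k!)\le k\log n+k\log t-k\log e+O(\log k)$) the length simplifies to $n\log n-\bigl(t\log e+(1-t)\log(1-t)\bigr)n+o(n)$, and the coefficient $t\log e+(1-t)\log(1-t)$ is strictly positive and increasing on $(0,1]$ (its derivative is $-\log(1-t)\ge0$ and it vanishes at $0$), hence $\ge g(c_0)>0$ for $t\ge c_0$.

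\textbf{Wrap-up and the main obstacle.} In both cases the decoder reconstructs $f$ exactly from the stored data, so the encoding is injective of length $\log|\FFam|-\Omega(n)$, giving $|N|\le|\FFam|\cdot 2^{-\Omega(n)}$, which with the first paragraph yields $s\in\Omega(n)$, proving \cref{thm:intro:AffineDecoders}. I expect the hard part to be Case 2: it is exactly the regime where the linear equations carry no useful information, so one must instead argue that the preimage positions $X$ are essentially determined by the publicly known span $W$ (this is the sub-lemma), and the accounting only barely closes because $\log(k!)$ beats $k\log n$ by $\Theta(k)=\Theta(n)$ bits, which must outweigh the $H(k/n)\,n$ cost of naming $I$ — a term-by-term balance that would need to be carried out carefully. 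A secondary point to get right is the reduction from the success condition to $|I(f)|\ge c_0 n$, i.e. confirming that inverting half of the \emph{inputs} $x$ (not half of the \emph{images}) still forces linearly many distinct images to be hit.
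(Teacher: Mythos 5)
Your proposal is correct in its essentials (modulo the bookkeeping you flag yourself), but it takes a genuinely different route from the paper's. The paper also fixes the advice string and union-bounds over its $2^s$ values, but its core argument is sequential rather than one-shot: it draws independent uniform challenges $Y_1,Y_2,\dots$ and proves (\cref{lem:AffineDecoders}) that even conditioned on the zero-advice inverter having answered the first $i-1$ challenges correctly, the probability of answering the $i$-th is at most roughly $2i/n+\mu$ plus an exponentially small term. The conditioning is packaged as a linear system $\mM\times F=V$ built from the affine outputs $\inrprd{\al^{Y_j},F}$ and the verification equations $F(X_j)=Y_j$; the $i$-th step is then split according to whether the new answer $X_i$ lies in $\Cuv(\mM)$, the set of coordinates whose unit vectors are spanned by the constraint matrix (if not, $F(X_i)$ remains uniform given the constraints, contributing $1/n$; if so, $Y_i$ must land in the image of a small, essentially pre-determined set, which \cref{not_many_good_indices_conditioned} shows is unlikely). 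Multiplying over $i\le m$ and converting ``inverts $\tau n$ inputs'' into ``inverts $\delta n$ images'' via \cref{pre:tau_delta_bound} gives the theorem. Your one-shot incompressibility argument replaces this product of conditional probabilities by a direct injective encoding of every function in $N$, and the paper's $\Cuv$ dichotomy is replaced by your $d_0$ dichotomy: either the span $W$ of the used rows $\al_y$ is transverse enough to the recovered preimage set $X$ that the linear relations among the $\al_y$ compress $\sigma$ (your Case 1), or $W$ is concentrated on $X$, in which case your sub-lemma shows $X$ is essentially determined by $W$ and the savings come from $\log(k!)$ beating $k\log n$ by $\Theta(n)$ (your Case 2, which indeed is the delicate balance, and which closes because $t\log e+(1-t)\log(1-t)>0$ for $t>0$). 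What each approach buys: yours is self-contained and more transparent about where the $\Omega(n)$ bits of advice are forced; the paper's sequential formulation is what lets it scale to the much richer class of affine decision-tree decoders (\cref{thm:AffineTreeDecoders}), where the constraint matrix must be grown adaptively node-by-node along the computation path, and it directly yields the quantitative bound $s\in\Omega(\tau^2 n)$ for sub-constant success rates $\tau\ge 2n^{-1/8}$ (\cref{cor:AffineDecoders}) --- although the $\Theta(t^2)$ behavior of your coefficient near $t=0$ suggests your route could likely be pushed to that regime as well.
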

Note that the above bound on $s$ holds even if the inverter queries $f$ on all inputs. While \cref{thm:intro:AffineDecoders} is not very insightful for its own sake,  as we cannot expect a non-adaptive inverter to have such  a limiting structure, it is important since it can be  generalized to \emph{affine decision trees},  a much richer family of non-adaptive inverters defined below.  In addition, the result should be contrasted with the question  of \textit{black-box function computation}, see \cref{sec:intro:RelatedWork:IndexCoding}, for which   linear algorithm are  \emph{optimal}. Thus,  \cref{thm:intro:AffineDecoders} highlights the differences between these two related problems.

\remove{\Reviewer{\#1 interprets Affine decoders as having a single ``affine query'' to $f$, rather than $q$ individual queries, so this theorem suggests that affine queries are not stronger than normal single queries}}
\remove{\Reviewer{\#2 wants more justification for addressing these classes of inverters, and mentions they are very weak since they don't even include a trivial inverter which simply looks at the query results and sees if the required pre-image is among them. Dror: actually, we did address this kind of trivial inverter in earlier drafts, and obtained bounds that are dependent on $q$. Perhaps we should consider addressing this issue?}}
\remove{\Reviewer{\#4 would like to see more discussion of our choice of models, and why linear preprocessing admits adaptive queries, while linear decoding doesn't}}
\remove{\Reviewer{\#3 would like us to address possible (even trivial) upper bounds for these models. I'm not sure we can come up with one}}

\paragraph{Affine decision trees.}
The second  bound regards inverters whose decoders are  \emph{affine decision trees}.  An \emph{affine decision tree} is a decision tree  whose nodes compute  an \emph{affine} function, over $\F$,  of the input vector.  A decoder algorithm $\decoder$ is an  \emph{affine decision tree},  if for  every  image $y\in [n]$, advice   $a\in \zo^s$ and queries $v = \queries(y,a)$, there exists an affine decision tree $\cT^{y,a}$   such that $\decoder(y,a,f(v))= \cT^{y,a}(f)$ (\ie the output of $\cT^{y,a}$ on input $f$) for every $f\in \FFam$. For such inverters, we present the following bound.

\begin{theorem}[Bounds on non-adaptive  inverters with affine decision-tree decoders]\label{thm:intro:AffineTree}
	Assume there exists an $s$-advice $q$-query   non-adaptive function  inverter with a $d$-depth affine decision-tree decoder, that inverts $\FFam$ with high probability. Then  the following hold:
	\begin{itemize}
		\item 	$q < c  n$, for some universal constant  $c$, $\implies $ $s \in \Omega(n/d\log n)$.
		
		\item 	$q \in  n^{1- \Theta(1)}$ $\implies $  $ s \in \Omega( n/d)$.
	\end{itemize}
\end{theorem}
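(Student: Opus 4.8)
The plan is to prove \cref{thm:intro:AffineTree} by an incompressibility (compression) argument. Assuming the inverter $\Inv=(\preprocessor,\queries,\decoder)$ inverts $\FFam$ with high probability, I would exhibit an injective encoding of the functions in a constant-fraction ``good'' set $G\subseteq\FFam$ whose length is strictly below $\log|G|$ whenever $s$ is too small, contradicting pigeonhole. The only place where the $d$-depth affine decision-tree structure enters is the observation that, for each image point $y$ and advice $a$, the value $\decoder(y,a,f(\queries(y,a)))=\cT^{y,a}(f)$ depends on $f$ only through the at most $d$ branch values along the root-to-leaf path (each an affine functional of $f|_{\queries(y,a)}$) together with the leaf's affine output functional; hence once these $O(d\log n)$ bits of ``path information'' are fixed, the output can be recomputed from $f|_{\queries(y,a)}$ alone, and when in addition $\queries(y,a)$ avoids the coordinates we intend to recover, no path information is needed at all.

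Concretely, the steps I would carry out are: (i) let $G$ be the set of $f$ on which the decoder inverts at least $n/2$ of the inputs, so $|G|\ge|\FFam|/2$ by averaging; fix $f\in G$, $a=\preprocessor(f)$, and let $X_f$ (size $\ge n/2$) be the inverted inputs. (ii) Exploit non-adaptivity: since $\sum_y|\queries(y,a)|\le qn$, all but a small number of coordinates are ``light'', i.e.\ lie in few query sets, and when $q\le n^{1-\Omega(1)}$ this sparsity is quantitatively stronger. (iii) Greedily grow a set $I\subseteq[n]$ of recoverable coordinates, together with a set $Y^{\ast}\subseteq f(X_f)$ of image points and the map $\phi\colon Y^{\ast}\to I$ given by $\phi(y)=\decoder(y,a,f(\queries(y,a)))\in f^{-1}(y)$ (which is injective, and $I$ is defined as its image), keeping the invariant that the decompressor can replay $\cT^{y,a}$ on $f$ from $a$, from $f|_{[n]\setminus I}$, and from a hint $h_y$ that is empty when $\queries(y,a)\cap I=\emptyset$ and otherwise records the $\le d$ branch values and the output ($O(d\log n)$ bits). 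A counting argument---bounding via (ii) how many $y$'s are ``blocked'' by the current $I$, and using $|f(X_f)|=\Omega(n)$ for typical $f\in G$---shows the greedy reaches $|I|=\Omega(n/(d\log n))$ (resp.\ $\Omega(n/d)$ under the stronger sparsity). (iv) Encode $f\in G$ by $a$ ($s$ bits), the sets $Y^{\ast},I$ (lower-order), the hints $\{h_y\}_{y\in Y^{\ast}}$, and $f|_{[n]\setminus I}$; the decompressor reads off $f|_{[n]\setminus I}$ and then processes $Y^{\ast}$ in a fixed order, replaying each decision-tree computation (using $h_y$ when its query set meets $I$) to recover $\phi(y)$ and hence $f(\phi(y))=y$. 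Injectivity on $G$ forces the encoding length to be $\ge\log|G|$, which rearranges to $s+\sum_{y}|h_y|+o(|I|\log n)\ge|I|\log n$, and substituting the size of $I$ gives the stated bounds.

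I expect the main obstacle to be the amortization of the path hints. Charging the full $\Theta(d\log n)$ bits for each of the $\Omega(n)$ inverted points would cost $\Theta(nd\log n)\gg n\log n$ bits and yield no compression whatsoever; the argument can only work if the greedy---together with the light/heavy dichotomy coming from non-adaptivity---is set up so that the $y\in Y^{\ast}$ whose query sets actually meet $I$ are few enough that $\sum_y|h_y|$ stays well below the savings $|I|\log n$, while $I$ stays of size $\Omega(n/(d\log n))$. Making these two competing requirements compatible, and teasing out the exact dependence on $d$ as well as the removal of the $\log n$ factor in the regime $q\le n^{1-\Omega(1)}$ (where the better query sparsity lets $I$ be taken larger), is the technical crux. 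A secondary but standard point requiring care is making the greedy selection replayable by the decompressor despite the circularity that $I$---and hence which coordinates of $f$ are stored explicitly---depends on the very values $f|_I$ that we are trying to recover.
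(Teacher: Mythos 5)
Your compression strategy has a genuine gap, and it is exactly the issue you flag as ``the technical crux'': the amortization of the path hints cannot be made to work with the set-theoretic interference bound you propose, and the obstacle is quantitative, not merely technical. Your greedy controls interference via the incidence count $\sum_y \size{\queries(y,a)}\le qn$: a ``light'' coordinate lies in $O(q)$ query sets, so adding a single coordinate to $I$ already forces up to $O(q)$ of the candidate image points to need a $\Theta(d\log n)$-bit hint (or to be discarded). Hence the number of coordinates you can recover before essentially every surviving $y\in Y^\ast$ intersects $I$ is $O(n/q)$, and the best the scheme can yield is $s\in\Omega((n/q)\log n)$ --- i.e.\ Yao's bound. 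In the regime of the first bullet, $q=\Theta(n)$, this degenerates to $s\in\Omega(1)$; in the regime of the second bullet, $q\le n^{1-\eps}$, it gives $s\in\Omega(n^{\eps}\log n)$, which for constant $d$ is far weaker than the claimed $\Omega(n/d)=\Omega(n)$. No choice of light/heavy threshold fixes this, and replacing the $d\log n$-bit hint by the preimage itself costs $\log n$ bits per point, exactly cancelling the savings. The theorem's whole content is that one can beat the $n/q$ barrier by exploiting the algebraic structure of the decoder, and the proposal never uses that structure beyond ``the output depends on $d$ branch values,'' which is not enough.

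The paper's proof takes a different route that never requires query sets to avoid the recovered coordinates. After fixing the advice (and union-bounding over all $2^s$ strings), it bounds $\pr{Z_i\mid Z_{i-1}}$, the probability of inverting an $i$-th independent random challenge given success on the first $i-1$. The key observation (\cref{lem:AffineTreeDecoders}) is that the conditioning event, together with the transcript, is equivalent to a linear system $\mM^{i-1}\times F=V^{i-1}$ contributing only $d+1$ rows \emph{per previous challenge} --- the $d$ affine branch values and the single constraint $F(X_j)=Y_j$ --- irrespective of $q$; the $q$ query rows enter only once, for the current challenge. Conditioned on such a system, $F(X_i)$ is uniform unless $e_{X_i}$ lies in the span of the at most $(d+1)(i-1)+q$ rows (\cref{clm:known_unit_vectors,clm:not_many_good_indices_advanced}), so $\pr{Z_i\mid Z_{i-1}}\lesssim ((d+1)i+q)/n$ plus tail terms. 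Multiplying over $i\le m\approx n/(d\log n)$ challenges and converting, via the $\al_{\tau,\delta}$ image-compression term, into a bound on the fraction of invertible images yields the theorem. If you want to salvage an encoding-style proof, the hint for a blocked $y$ must record not raw branch values but only the components of the affine functionals outside the span of what the decompressor already knows --- at which point you are re-deriving the rank argument, so you may as well follow the conditional-probability formulation directly.
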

That is, we pay  a factor of $1/d$ comparing to the affine decoder bound, and the bound on $s$ only holds for not too large $q$. Affine decision trees  are much stronger than affine decoders, since the choice of the affine operations  it computes can be \emph{adaptively dependent} on the results of previous affine operations.   For example, a depth $d$  affine decision tree can compute \emph{any} function on $d$ linear combinations of the inputs. In particular,  multiplication of function values, \eg $f(1)\cdot f(2)$, which cannot be computed by an affine decoder, can be computed by a depth two decision tree. We note that an affine decision tree of depth $q$ can compute \emph{any} function of its $q$ queries.  Unfortunately, for $d=q$, our bound only  reproduces (up to log factors) the lower bound of \citet{Yao90}.

\remove{
\Reviewer{\#3: Two of your results focus on non-adaptive inverters. Does it seem possible to prove better-than-Yao lower bounds against partially adaptive inverters? For example, inverters that get to make their queries in a constant number of rounds?}
\Dnote{ Yao is (quite) tight for the adaptive case, and improving on Yao for the non-adaptive case implies new circuit bounds.
Improving on Yao at any point on the adaptiveness spectrum is likely to imply new bounds for either of the extremes, which is very unlikely.
}
\Reviewer{\#3 would be interested in seeing explicit discussion of the tightness of the bounds}

\Reviewer{\#4 ``It would also be nice if you could (informally) explain why the trivial counting argument/Yao's lower bound does not give any new results for the case when one of the algorithms is linear.''}
\Dnote{If we can generalize Yao's argument to accept 1 or $d$ ``affine queries'', instead of $q$ individual queries, perhaps that would imply our lower bound as well, in a much simpler way?}
}

\begin{remark}[Field size]\label{rem:fieldSize}
	In \cref{thm:intro:AffineDecoders,thm:intro:AffineTree}, the field size is assumed to be exactly $n$ (the domain of the function to invert). Decoders over fields smaller than $n$ are not particularly useful, since their output cannot cover all possible preimages of $f$. Our proof breaks down for fields of size larger than $n$, since we cannot use linear equations to represent the constraint that the decoder's output must be contained in the smaller set $[n]$. 
\end{remark}

\subsubsection{Applications to Valiant's Common-bit Model}\label{sec:intro:OurResult:Applications}
\citet{CorriganK19} showed that  a lower bound on the time/memory tradeoff of  \textit{strongly non-adaptive} function inverters---the queries may not depend  on the advice---implies a lower  bound on  circuit size in  \textit{Valiant's common-bit model} \cite{valiant1977graph,valiant1992boolean}.  Applying the reduction  of \cite{CorriganK19} with \cref{thm:intro:AffineTree} yields the following  bound:   for every $n\in \N$ for which  there exits an $n$-size field $\fld$, there is an explicit function $f\colon\fld^n\mapsto\fld^n$ that  cannot be computed by a three-layer circuit of the following structure:
\begin{enumerate}
	\item It has  $o(\nfrac n {d\log n})$ middle layer gates.

	\item  Each  output gate  is connected to  $n^{1-\Theta(1)}$ inputs gates (and to an arbitrary number of middle-layer gates).
	
	\item  Each  output gate computes  a function which is computable by a $d$-depth affine decision tree in the inputs (and depends arbitrarily on the middle layer).
	
\end{enumerate}
In fact, our bound yields that such circuits cannot even approximate $f$ so that every output gate outputs the right value with probability larger than $1/2$, over the inputs. \Inote{it it really $1/2$ or smaller $c>0$} \Nnote{ It can be every constant (but we used 1/2 when stated our main result)}

\subsection{Additional Related Work}\label{sec:intro:RelatedWork}

\subsubsection{Adaptive Inverters}

\paragraph{Upper bounds.}
The main result in this setting is the $s$-advice, $q$-query inverter  of \citet{Hellman80,FiatN00s} that inverts a constant fraction of the image of any function, for any $s,q$ such that $s^3q = n^3$ (ignoring low-order terms). When used for random \emph{permutations}, a variant on the same idea implies an  optimum inverter with  $s\cdot q = n$. The inverter  of \citeauthor{Hellman80,FiatN00s} has found application  to practical cryptanalysis, \eg \citet{biryukov2000cryptanalytic,biryukov2000real,oechslin2003making}.

\paragraph{Lower bounds.}

A long line of research (\citet{GennaroGKT05,dodis2017fixing,abusalah2017beyond,unruh2007random,coretti2018random,de2010time}) provides lower bounds for various variations on the classical setting, such as that of randomized inversion algorithms that succeed on a sub-constant fraction of functions. None of these lower bounds, however,  manage to improve on \citeauthor{Yao90}'s lower bound of $s\cdot q = n$, leaving a large  gap between this lower bound and \citeauthor{Hellman80,FiatN00s}'s inverter.

\subsubsection{Non-adaptive Inverters}
\paragraph{Upper bounds.}
In contrast with the adaptive case, it is not clear how to exploit  non-adaptive queries in a non trivial way.  Indeed, the only known inverters are the  trivial ones (roughly, the advice is the   function description, or the inverter queries the function on all inputs).

\paragraph{Lower bounds.}
Somewhat surprisingly, the  only known lower bound for non-adaptive inverters is \citeauthor{Yao90}'s, mentioned above. This defies the basic intuition that this task should be easier than the adaptive case, due to the extreme limitations under which  non-adaptive inverters operate. This difficulty was partially justified by the recent reduction of \citet{CorriganK19} (see  \cref{sec:intro:OurResult:Applications}) that implies  that a  strong enough lower bound  on even strongly non-adaptive inverters, would yield a lower bound on low-depth Boolean circuits that is   notoriously hard to prove. 

\subsubsection{Relation to Data Structures}
The problem of function inversion with advice may also be phrased as a problem in data structures, where the advice string serves as a succinct data structure for answering questions about $f$. In particular, it bears strong similarity to the {\em substring search }problem using the cell-probe model \cite{yao1981should}. This is the task of ascertaining the existence of a certain element within a large, unsorted database, using as few queries to the database and as little preprocessing as possible. Upper and lower bounds easily carry over between the two problems, a connection which was made in \citet{CorriganK19}, where it was used to obtain previously unknown upper bounds on substring search.

\remove{\Reviewer{\#3 would feel better if we mention here some relevant lower/upper bounds on data structures.
		Mentions in particular: Gál, Anna, and Peter Bro Miltersen 2007 "The cell probe complexity of succinctdata structures.",
		and Brody, Joshua, and Kasper Green Larsen 2012 "Adapt or die: Polynomial lower bounds for non-adaptive dynamic data structures." 
}}

\subsubsection{Index Coding and Black-box Function Computation}\label{sec:intro:RelatedWork:IndexCoding}
A syntactically related problem to function inversion is the so-called \textit{black-box function computation}: an algorithm tries to compute $f(x)$, for a randomly chosen $x$, using an advice of length $s$ on $f$, and by querying  $f$ on $q$ inputs other than $x$.   \citet{Yao82}  proved that $s\cdot q \ge n$, and presented a linear,  non-adaptive algorithm  that matches this lower bound. \Inote{see my edits above}

A much-researched special case of this problem is known as the \textit{index coding} problem \cite{baryossef2011indexcoding}, originally inspired by information distribution over networks. In this setting, a single party is in possession of a vector $f$, and broadcasts a short message $a$ such that $n$ different recipients may each recover a particular value of $f$, using the broadcast message and knowledge of certain other values of $f$, as determined by a \emph{knowledge graph}. The analogy to  non-adaptive black-box function computation  is obvious when considering  $a$ as the advice string, and the access to various values of $f$ as queries. While \citeauthor{Yao90}'s bound on the time/memory tradeoff  also holds  for the index coding problem,  other  lower bounds, some of which consider ``linear'' algorithms \cite{baryossef2011indexcoding,haviv2012linear,lubetzky2009nonlinear,golovnev2018minrank,alon2020minrank},  do not seem to be relevant for the function inversion problem.

\subsection*{Open Questions}
The main challenge remains   to gain a better understanding on the power of adaptive and non-adaptive function inverters. A more  specific challenge is  to generalize our bound on affine decoders and decision trees to affine operations over  arbitrary (large) fields.

\subsection*{Paper Organization}
A rather detailed  description of  our proof technique is given in  \cref{sec:technique}.  Basic notations, definitions and facts are given in \cref{sec:Preliminaries}, where we also prove several basic claims regarding   random function inversion. The bound on linear-advice inverters is  given in \cref{sec:LinearAdvice}, and the bounds on non-adaptive inverters are given in \cref{sec:nonAdaptInverters}.   \ifdefined \IsLLnCS Omitted proofs can be found in the full version of this paper \cite{ChawinHM20ECCC}.\fi

\section{Our Technique}\label{sec:technique}
In this section we provide a rather elaborate description of our proof technique. We start with the bound on linear-advice  inverters in \cref{sec:technique:LinearAdvice}, and then in   \cref{sec:technique:NonAdaptive}  describe the bounds for  non-adaptive inverters.

\subsection{Linear-advice Inverters}\label{sec:technique:LinearAdvice} 
Our lower bound for inverters  with linear advice (and its immediate generalization to additive-advice inverters) is proved via a reduction from \emph{set disjointness}, a classical problem in the study of two-party communication complexity. In the set disjointness problem,  two parties, Alice and Bob, receive two subsets, $\cX$ and $\cY \subseteq[n]$, respectively, and by  communicating with each other try to   determine whether $\cX\cap \cY= \emptyset$. The question is how many bits the parties have to  exchange in order to output the right answer with high probability. Given an inverter  with linear advice,  we use  it to construct a protocol that solves the set disjointness problem on \emph{all} inputs in $\cQ\eqdef\set{\cX,\cY\subseteq [n] \colon \size{\cX\cap \cY} \le 1}$ by exchanging $s+q\cdot \log n$ bits.  \citet{razborov1992distributional} proved that to answer    with  constant success probability on all  input pairs in  $\cQ$,   the parties have to exchange $\Omega(n)$ bits.   Hence,  the above reduction  implies the desired lower bound on the time/memory tradeoff of such  inverters.

Fix a $q$-query  $s$-advice inverter $\Inv\eqdef(\preprocessor, \decoder)$ with linear advice, and assume for simplicity that $\Inv$'s success probability is one.  The following observation immediately follows by definition:  let $a_f \eqdef \preprocessor(f)$ and $a_g \eqdef \preprocessor(g)$  be the advice strings for  some functions  $f$ and $g\in \FFam$, respectively.  The  linearity of  $\preprocessor$ yields that $a \eqdef  a_f+ a_g   = \preprocessor(f +g)$. That is,  $a$ is the advice for the function $f+ g$ (all additions  are over the appropriate groups).  Given this  observation, we use $\Inv$ to solve set disjointness as follows:  Alice and Bob (locally) convert  their input sets $\cX$ and $\cY$ to functions $f_\Ac$ and $f_\Bc$ respectively, such that for any $x\in \cX\cap \cY$ it holds that $f(x)\eqdef (f_\Ac+ f_\Bc)(x)=0$, and $f(x)$ is \emph{uniform} for  $x\notin \cX\cap \cY$. Alice then sends $a_\Ac \eqdef \preprocessor(f_\Ac)$ to Bob who uses it to compute $a\eqdef \preprocessor(f) = a_\Ac +   \preprocessor(f_\Bc)$.  Equipped with the advice $a$ and the help of Alice, Bob then emulates $\decoder(0,a)$ and finds $x\in f^{-1}(0)$, if such exists. Since $f$ is unlikely to map many elements outside of $\cX \cap \cY$ to $0$,  finding such $x$ is highly correlated with  $\cX \cap \cY \ne \emptyset$.   In more details, the set disjointness protocol is defined as follows. 
 \begin{protocol}[Set disjointness protocol $\Pi= (\Ac(\cX),\Bc(\cY))$]~
\begin{enumerate}
	\item   \Ac  samples $f_\Ac \in \FFam$ by letting $f_\Ac(i)\eqdef 
	\begin{cases}
	0 & i \in \cX\\
	\sim [n] & \text{otherwise}.
	\end{cases}$

	\item \Bc  samples $f_\Bc \in \FFam$ analogously, \wrt $\cY$.

	\item[$-$] Let  $f \eqdef f_\Ac+ f_\Bc$.

	\item \Ac sends $a_\Ac \eqdef \preprocessor(f_\Ac)$ to  $\Bc$, and  $\Bc$ sets $a \eqdef  a_\Ac + \preprocessor(f_\Bc)$. \footnote{\Inote{new} If the inverter is only assumed to have additive advice, this step is replaced with the parties interacting in the guaranteed protocol  for computing the advice for $f$ from the descriptions of  $f_\Ac$ and $f_\Bc$.}
	
	\item $\Bc$ emulates $\decoder^f(0,a)$ while answering each query $r$ that $\decoder$  makes  to $f$ as follows:
	\begin{enumerate}
		\item \Bc sends $r$ to \Ac.
		\item \Ac sends $w_\Ac \eqdef f_\Ac(r)$ back to \Bc.
		\item \Bc replies $w \eqdef w_\Ac  + f_\Bc(r)$ to $\decoder$ (as the value of $f(r)$).
	\end{enumerate} 

   \item[$-$]  Let $x$ be $\decoder$'s answer at the end of the above emulation.

\item The parties reject if $x \in \cX \cap \cY$ (using an additional $\Theta( \log n)$ bits to   find it out), and accept otherwise.

\end{enumerate}
\end{protocol}
The communication complexity of $\Pi$ is essentially $s+q\cdot \log n$. It is also clear that the parties accept  if $\cX\cap \cY=\emptyset$. For the complementary case, by construction, the intersection point of   $\cX\cap \cY$  is in $f^{-1}(0)$. Furthermore, since $f(i)$ is a random value for all $i\notin \cX\cap \cY$, with constant probability \emph{only} the intersection point is in $f^{-1}(0)$. Therefore,   the protocol is likely to answer correctly also in the  case that $\size{\cX\cap \cY}=1$.

\subsection{Non-adaptive Inverters}\label{sec:technique:NonAdaptive}	
We focus on inverters with an affine decoder, and  discuss the extension to  affine decision tree decoders in  \cref{sec:technique:NonAdaptive:DT}.  The proof follows by bounding the  success probability of \emph{zero-advice}  inverters---the preprocessing algorithm  outputs an empty string. In particular, we prove that  the success probability of such inverters is at most  $2^{-\Omega(n)}$. Thus,  by a union bound over all advice strings, in order to invert $\FFam$ with high probability, the advice string of a general (non-zero-advice)  inverter has to be of length $\Omega(n)$.\footnote{This first part of the proof is rather standard, \cf  \citet{akshima2020time}.} Let $\Inv\eqdef (\queries,\decoder)$ be a zero-advice $q$-query   non-adaptive inverter with an  affine decoder. Let $F$ be a random element of $\FFam$, and for $i\in [n]$, let   $Y_i$ be a randomly and independently selected element  of $[n]$. Let $X_i \eqdef \decoder(Y_i,F(\queries(Y_i)))$, \ie $\Inv$'s answer on challenge $Y_i$, and  let $Z_i$ be the indicator for $\set{F(X_j) = Y_j}$ for all $j\in [i]$, \ie the event that $\Inv$ answers the first $i$ challenges correctly.   We prove the bound by showing that for some $m\in \Theta(n)$ it holds that 
\begin{align}\label{eq:technique:NonAdaptive:1}
\pr{Z_m} \in 2^{-\Omega(m)}
\end{align}
Note that \cref{eq:technique:NonAdaptive:1}   bounds the probability that \Inv inverts  $m$ random elements drawn from $[n]$ (where some of them might have no preimage at all), whereas we are interested in bounding the probability that  \Inv inverts a \emph{random output}  of $F$. Yet,  since $F$ is a random function, its image  covers with very high probability a constant fraction of $[n]$, and thus  \cref{eq:technique:NonAdaptive:1} can be easily manipulated to derive that  \remove{\Nnote{when we move to distribution over $f(x)$ we pay $\alpha_{\tau,\delta}$, which can be constant. In the proof we actually do it after the union bound over all the possible advices. Maybe we want to say something about it?}}
\begin{align}\label{eq:technique:NonAdaptive:2}
\ppr{f\gets \FFam}{ \ppr{\stackrel{x\gets[n]}{\stackrel{y=f(x)}{v = \queries(f,y)}}}{\decoder(y,f(v)) \in f^{-1}(y)}\ge 1/2}< 2^{-\Omega(m)}= 2^{-\Omega(n)}
\end{align}
Hence,  in order to invert a random function with high probability, a non-zero-advice inverter has to use advice of length $\Omega(n)$.

We prove \cref{eq:technique:NonAdaptive:1} by showing that for every $i\in [m]$ it holds that 
\begin{align}\label{eq:technique:NonAdaptive:3}
\pr{Z_i \mid Z_{i-1}} < 3/5
\end{align}
That is, for small enough  $i$,  the algorithm $\Inv$ is likely to fail on inverting the \ith challenge, even when conditioned on the successful inversion of the  first $i-1$ challenges. We note that it is easy to bound $\pr{Z_i \mid Z_{i-1}}$ for \emph{zero}-query  inverters. The conditioning on  $Z_{i-1}$ roughly gives $\Theta(i)$ bits of information about $F$. Thus, this conditioning   gives at most one  bit of information about $F^{-1}(Y_i)$, and the inverter  does not have enough information to invert  $Y_i$. When moving to non-zero-queries inverters, however, the situation gets much more complicated. By making the right queries, that may depend on $Y_{i}$, the inverter can exploit this ``small'' amount of information to find the preimage of $Y_i$. This is what happens, for instance, in  the  adaptive inverter of   \citet{Hellman80}. Hence, for bounding $\pr{Z_i \mid Z_{i-1}}$,  we critically exploit the assumption that  $\Inv$ is non-adaptive and has an affine decoder. In particular, we  bound $\pr{Z_i \mid Z_{i-1}}$ by translating the event $Z_{i}$ into an  affine system  of equations and then use a few   observations about the structure of the above   system to derive the desired bound.  These equations will have the form   $M \times F = V$, viewing $F$ as a random vector in $[n]^n$, for $\mM \eqdef \stack{\mM^{i-1}}{\mM^{i}}$  and $V \eqdef\stack{V^{i-1}}{V^{i}} $, such that:
\begin{enumerate}
	
		\item $\mM^{i-1}$ is a deterministic function of $(X_{< i},Y_{<i})$ and $\mM^{i}$ is a deterministic function of $Y_i$, letting $X_{< i}$ stand for $(X_1,\ldots,X_{i-1})$   and likewise for $Y_{< i}$. \label{item:technique:NonAdaptive:1}
		
	\item The event $M^{i-1}\times F' = V^{i-1}$  is the event
	$\bigwedge_{j<i} \set{(F'(X_j) = Y_j) \land  (\decoder(Y_j,F'(\queries(Y_j))) = X_j)}$, for $F'$ being a uniform, and independent, element of $\FFam$.  \label{item:technique:NonAdaptive:2}
	
	(In particular,  $M^{i-1}\times F=V^{i-1}$ implies that $Z_{i-1}$ holds,  and  binds the value of $(X_{<i},Y_{<i})$ to  $V^{i-1}$.)
	
	\item The event $M^i\times F' = V^i$  is the event $\set{\decoder(Y_i,F'(\queries(Y_i))) = X_i}$.
	
	(In particular,  $M^{i}\times F=V^{i }$ binds the value of $X_i$ to  $V^i$.)
	
	 \label{item:technique:NonAdaptive:3}
\end{enumerate}

The above $\mM$ and $V$ are defined as follows: assume for ease of notation that  $\Inv$ has a \emph{linear}, and not affine,  decoder. That is,  for every $y \in [n]$  there exists a ($\numqry$-sparse) vector $\al_y \in \fld^n$ such that $ \inrprd{\al_y, F} = X_y$. By definition, for every $j < i$:

\begin{enumerate}
	\item $\inrprd{\al_{Y_j}, F}=  X_{j}$.
\end{enumerate}
Conditioning on $Z_{i-1}$ further   implies that  for every $j < i$:
\begin{enumerate}
	\item[2.]  $ F(X_j) = Y_j$.
\end{enumerate}
Let $\ell \eqdef 2i-2$, and let $\mM^{i-1} \in \matl$ be the  (random) matrix defined by $\mM^{i-1}_{2k-1} \eqdef \alpha_{Y_k}$ and $\mM^{i-1}_{2k} \eqdef e_{X_k}$, letting $e_j$ being the \textit{unit vector} $(0^{j-1},1,0^{n-j})$. Let $V^{i-1} \in \vecl$ be the  (random) vector defined by $V^{i-1}_{2k-1} \eqdef  X_k$ and $V^{i-1}_{2k} = Y_k$. By definition, the event  $Z_{i-1}$ is equivalent to the event    $\mM^{i-1} \times F = V^{i-1}$. The computation $\Inv$ makes on input  $Y_i$ can also be described by  the  linear equation $\inrprd{\al_{Y_i}, F}=  X_i$.  Let  $\mM\eqdef \stack{\mM^{i-1}}{\al_{Y_i}}$ and $V \eqdef \stack{V^{i-1}}{X_i}$. We make use of the following claims (see proofs in \cref{sec:prelim:Linear}).
\begin{definition}[Spanned unit vectors]
		For a matrix $\mA\in\fld^{a\times n}$, let $\Cuv(\mA) \eqdef \set{ {j \in [n]} \colon e_j \in \Span(\mA) }$, for    $\Span(\mA)$ being the (linear) span of $\mA$'s rows. 
\end{definition}
That is, $\Cuv(\mA)$ is the set of indices of all unit vectors spanned by $\mA$. It is clear that $\size{\Cuv(\mA)} \le \rank(\mA) \le \min\set{a, n}$. The following  claim states that for  $j \notin \Cuv(\mA) $, knowing the value of  $\mA\times F$ gives no information about  $F_j$. 
\begin{claim}\label{clm:tech:known_unit_vectors}
	Let  $\mA\in \fld^{a\times n}$ and   $v \in   \Image(\mA)$. Then for every  $j\in [n] \setminus \Cuv(\mA)$ and  $y\in  [n]$, it holds that $\ppr{f\gets [n]^n}{f_j = y \mid  \mA\times f = v} = 1/n$.
\end{claim}
The second claim roughly states that by concatenating a $c$-row matrix to a given matrix $\mA$, one does not increase the  spanned unit set of $\mA$ by more than $c$ elements.
\begin{claim}\label{claim:tech:existence_general_specific_spans}
For every $\mA \in \fld^{\ell\times n}$ there exists an $\ell$-size set $\cs_A \subseteq [n]$ such that the following holds: for every $\mB \in \fld^{c\times n}$
there exists a $c$-size set $\cS_B \subseteq [n]$
such that  $\Cuv\stack{\mA}{\mB} \subseteq \cs_A \cup \cS_B$.
\end{claim}

For bounding $\pr{Z_i \mid Z_{i-1}}$ using the above observations, we   write
 \begin{align}\label{eq:tech:split_by_eM}
 \pr{Z_i \mid Z_{i-1}} &= \pr{Z_i \land X_i \in \Cuv(\mM) \mid Z_{i-1}}
 + \pr{Z_i \land X_i \notin \Cuv(\mM) \mid Z_{i-1}}
 \end{align}
 and finish the proof   by separately bounding the two terms of the above equation. Let $H \eqdef (X_i,Y_{\le i},\mM,V)$. We first note that 
 \begin{align}
 \lefteqn{\pr{Z_i \land X_i \notin \Cuv(\mM) \mid Z_{i-1}} \le  \pr{Z_i \mid  X_i \notin \Cuv(\mM), Z_{i-1}}}\\
 &=  \ex{ (x_i,y_{\le i},m,v) \gets H\mid X_i \notin \Cuv(\mM), Z_{i-1}}{\pr{F(x_i) = y_i \mid m\times F = v,Y_{\le i} = y_{\le i}}}\nonumber\\
 &=  \ex{ (x_i,y_{\le i},m,v) \gets H\mid X_i \notin \Cuv(\mM), Z_{i-1}}{\pr{F(x_i) = y_i \mid m\times F = v}}\nonumber\\
 &= 1/n.\nonumber
 \end{align} \newcommand{\vd}{T}
The first equality holds by definition of $Z_{i-1}$,  the second equality since $F$ is independent of $Y$, and the last one follows by \cref{clm:tech:known_unit_vectors}.  For bounding the left-hand term of \cref{eq:tech:split_by_eM}, let $\cs$ and  $\vd$ be the $\ell$-size set and the index guaranteed  by \cref{claim:tech:existence_general_specific_spans} for the matrices $\mM^{i-1}$ and vector $\al_{Y_i}$, respectively. Compute,

 \begin{align}
\pr{Z_i \land X_i \in \Cuv(\mM) \mid  Z_{i-1}}&\le \pr{Y_i \in F(\Cuv(\mM)) \mid  Z_{i-1}}\\
 &\le \pr{Y_i \in F(\cs \cup \set{\vd}) \mid  Z_{i-1}}\nonumber\\
 &\leq \pr{Y_i \in F(\cs)  \mid  Z_{i-1}} + \pr{Y_i = F(\vd)  \mid  Z_{i-1}}.\nonumber
 \end{align} 
 The second inequality is by \cref{claim:tech:existence_general_specific_spans}.
 Since  $F(\cs)$ is independent of $Y_i$, it holds that 
 \begin{align}
 \pr{Y_i \in F(\cs)  \mid  Z_{i-1}} \le \size{\cs}/n = \ell/n 
 \end{align}
 Bounding $ \pr{Y_i = F(\vd)  \mid  Z_{i-1}}$ is  more involved since  $\vd$ might depend on $Y_i$.\footnote{ Indeed, this  dependency between the  queries to $f$ and the value to invert  is exactly  what makes (efficient) inversion by adaptive  inverters  possible.} Yet  since $f$ is a random function, a simple counting argument yields that for any (fixed and independent of $f$) function $g$:
 \begin{align}\label{eq:not_many_good_indices_advanced}
\ppr{f \gets \FFam}{\ppr{y\gets [n]}{ y =  f(g(y))}\ge 1/2} \le  n^{-n/3}
 \end{align}
 Let $H\eqdef (X_{<i},Y_{<i})$, and for  $h = (x_{<i},y_{<i})\in \Supp(H)$ compute
 \begin{align}
 \lefteqn{\ppr{f\gets F|Z_{i-1},H=h}{\pr{Y_i = f(\vd)\mid H=h}\ge 1/2 }}\\
 &\le \frac 1{\pr{H=h,Z_{i-1}\mid Y_{< i } = y_{<i}}}\cdot \ppr{f\gets F\mid Y_{< i } = y_{<i}}{\pr{Y_i = F(\vd)\mid H=h}\ge 1/2 }\nonumber\\
 &= \frac 1{\pr{H=h,Z_{i-1}\mid Y_{< i } = y_{<i}}}\cdot \ppr{f\gets F}{\pr{Y_i = F(\vd)\mid H=h}\ge 1/2 }\nonumber\\
 &\le  \frac 1{\pr{H=h,Z_{i-1}\mid Y_{< i } = y_{<i}}}\cdot  n^{-n/3}\nonumber\\
  &\le   n^{n/4}\cdot  n^{-n/3} \in o(1).\nonumber
 \end{align}

 The first equality holds since $F$ is independent of $Y$. The second inequality holds by \cref{eq:not_many_good_indices_advanced}, noting that under the conditioning on $H=h$, the value of $T$ is a deterministic function of $Y_i$.
The third inequality holds since for not too big $i$,  $\pr{H=h,Z_{i-1}\mid Y_{< i } = y_{<i}}\ge n^{-n/4}$,
since this probabilistic event is essentially a system of linear equations over a randomly selected vector.
 \Nnote{ Was: The third inequality holds since for not too big $i$ we may approximate for the sake of simplicity that  $\pr{H=h,Z_{i-1}\mid Y_{< i } = y_{<i}}\ge n^{-n/4}$,
 due to this probabilistic event being essentially an array of linear equations over a randomly selected vector, so that each equation holds with probability about $1/n$.}
 Since the above holds for any $h$, we conclude that  
 $ \pr{Y_i = F(\vd)  \mid  Z_{i-1}} \le 1/2 + o(1)$. Putting it all together, yields that $\pr{Z_i \mid Z_{i-1}} < 1/n + \ell/n + 1/2 + o(1) < 3/5$, for not too large  $i$.

\subsubsection{Affine Decision Trees}\label{sec:technique:NonAdaptive:DT}
Similarly to the affine decoder case, we prove the theorem by bounding $\pr{Z_i \mid Z_{i-1}}$ for all ``not too large $i$''. Also in this case, we bound this probability by translating the conditioning on $Z_{i-1}$ into a system of affine  equations. In particular, we  would like to find  proper definitions for the matrix $\mM=\stack{\mM^{i-1}}{\mM^{i}}$ and  vector $V =\stack{V^{i-1}}{V^i} $, functions of $(X_{\le i},Y_{\le i})$, such that   conditions \ref{item:technique:NonAdaptive:1}--\ref{item:technique:NonAdaptive:3}  mentioned in the affine decoder case  hold.
   
We achieve  these  conditions by adding for each $j<i$  an equation for each of the linear computations done in the decision tree that computes $X_j$ from $Y_j$. The price is that  rather than having $\Theta(i)$ equations, we now have $\Theta(d\cdot i)$, for $d$ being the depth of the decision tree.  In order to have $\mM^{i}$  a deterministic function of $Y_i$ alone, we cannot simply  make
$\mM^i$ reflect the $d$ linear computations performed by the decoder, since each of these may depend on the results of previous computations, and thus depend on $F$. \remove{ $\mM^{i}$ a function of the the $d$ linear computations \emph{actually} used by the decoder  on  $X_i$ (as we did for affine decoders). This is  since what computation is done in a given node, is determined by the  \emph{results} of  the computations done at higher nodes. }   So rather, we have to add a row (\ie an equation) for each of the $q$ queries the decoder might use (queries that span all possible computations), which by definition also imply the dependency on $q$.  Taking these additional rows into account yields the desired bound.

\section{Preliminaries}\label{sec:Preliminaries}
\subsection{Notation}\label{sec:prelim:notation}
All logarithms considered here are in base two. We use calligraphic letters to denote sets, uppercase for random variables and probabilistic events, lowercase for functions and fixed values, and bold uppercase for matrices.  Let $[n] \eqdef \set{1,\ldots,n}$. Given a vector $v\in \Sigma^n$, let $v_i$ denote its \ith entry, let $v_{< i} \eqdef  v_{1,\ldots,i-1}$ and   let $v_{\le i} \eqdef v_{1,\ldots,i}$.  Let $\binom{[n]}{k}$ denote the set  of all subsets of $[n]$ of size $k$. The  vector $v$ is \emph{$q$-sparse} if it has no more than $q$ non-zero entries.

\paragraph{Functions.}
We naturally view functions from $[n]$ to $[m] $ as vectors in $[m]^n$, by letting   $f_i =  f(i)$.  For a finite ordered set $\cS\eqdef \set{ s_1,\ldots,s_k}$, let $f(\cS)\eqdef \set{ f(s_1), f(s_2), \ldots, f(s_k)}$.  Let  $f^{-1}(y)\eqdef \set{ x\in [n] \colon f(x)=y }$ and let $\Image(f) = \set{f(x)\colon x\in [n] }$. A function $f\colon\fld^n \to \fld$ , for a  field $\fld$  and $n\in \N$,   is \emph{affine} if there exist a vector $v\in \fld^n$ and a constant $\be \in \fld$ such that $f(x)=\inrprd{v,x} + \be$ for every $x \in \fld^n$, letting $\inrprd{v,x} \eqdef \sum v_i \cdot x_i$ (all operations are over  $\fld$).

\paragraph{Distributions and random variables.}
The support of a distribution $P$ over a finite set $\cS$ is defined by $\Supp(P) \eqdef \set{x\in \cS: P(x)>0}$. For a  set $\cS$, let $s\from \cS$ denote that $s$ is  uniformly drawn from $\cS$. 
For $\delta \in [0,1]$, let $h(\delta) \eqdef -\delta\log \delta - (1-\delta)\log(1-\delta)$, \ie the binary entropy function.

\subsection{Matrices and Linear Spaces}\label{sec:prelim:Linear}
We identify the elements of a finite  field of size $n$ with the elements of the set $[n]$, using some arbitrary, fixed, mapping. Let  $e_i$  denote the \ith  unit vector $e_j=(0^{j-1},1,0^{n-j})$.

For a matrix $\mA \in \fld^{a\times b}$, let  $\mA_i$ denote  the \ith row of $\mA$.  The span of $\mA$'s rows  is defined by  $\Span(\mA) \eqdef \set{ v \in \fld^b \colon \exists \delta_1,\ldots,\delta_a \in \fld \colon v = \sum_{i=1}^{a} \delta_i \mA_{i}}$. Let $ \Image(\mA) = \set{ v \in \fld^a  \colon \exists w\in \fld^b \suchthat \mA\times w = v}$, or equivalently, the image set of the function $f_\mA(w) \eqdef \mA\times w$.
We use the following well-known fact:
\begin{fact}\label{pre:solution_set_size}
	Let $\fld$ be a finite field of size $n$, let $\mA\in \fld^{a \times b}$, let $v\in \Image(\mA)$, and  let $\cF \subseteq \fld^b$ be the solution set of the system of equations $\mA \times F = v$. Then $ \size{\cF } = n^{b - \rank(\mA)}$.
\end{fact}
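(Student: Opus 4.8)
The plan is to realize $\cF$ as an affine translate of the kernel (null space) of the linear map $w \mapsto \mA \times w$, and then invoke the rank--nullity theorem together with the fact that a $k$-dimensional subspace of $\fld^b$ has exactly $|\fld|^k = n^k$ elements.

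First I would use the hypothesis $v \in \Image(\mA)$ to fix some particular solution $w_0 \in \fld^b$ with $\mA \times w_0 = v$. Next I would show the standard bijection $\cF = w_0 + \ker(\mA)$, where $\ker(\mA) \eqdef \set{w \in \fld^b \colon \mA \times w = 0}$: for any $w \in \fld^b$, linearity gives $\mA \times w = v \iff \mA \times (w - w_0) = 0 \iff w - w_0 \in \ker(\mA)$, so $w \mapsto w - w_0$ is a bijection from $\cF$ onto $\ker(\mA)$. In particular $\size{\cF} = \size{\ker(\mA)}$.

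Then I would recall that $\ker(\mA)$ is a linear subspace of $\fld^b$ of dimension $b - \rank(\mA)$ by the rank--nullity theorem (the row rank and column rank of $\mA$ coincide, so $\rank(\mA)$ is the dimension of the column space $\Image(\mA)$, and the kernel has complementary dimension). Finally, a subspace $W \subseteq \fld^b$ of dimension $k$ admits a basis of size $k$, so every element of $W$ is uniquely a $\fld$-linear combination of these $k$ basis vectors; since there are $n$ choices of coefficient per basis vector, $\size{W} = n^k$. Applying this with $k = b - \rank(\mA)$ yields $\size{\cF} = \size{\ker(\mA)} = n^{b - \rank(\mA)}$, as claimed.

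There is no serious obstacle here: the statement is a textbook consequence of linear algebra over finite fields, and the only mild care needed is to note that all of rank--nullity, the coset structure of solution sets, and the counting of subspace elements hold verbatim over an arbitrary finite field $\fld$ (here of size $n$), not just over $\R$ or $\Q$.
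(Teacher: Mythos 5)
Your proof is correct and complete: the coset decomposition $\cF = w_0 + \ker(\mA)$, rank--nullity, and the count $n^k$ for a $k$-dimensional subspace of $\fld^b$ are exactly the right ingredients, and all hold over any finite field. The paper states this as a well-known fact and gives no proof, so there is nothing to compare against; your argument is the standard one and fills the gap properly.
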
 

We also make use  of the following less standard notion.
\begin{definition}[Spanned unit vectors]\label{def:CovUNitVectors}
	For a matrix $\mA\in\fld^{a\times b}$, let $\Cuv(\mA) \eqdef \set{ {j \in [b]} \colon e_j \in \Span(\mA) }$. 
	\Reviewer{note this duplicates 2.2 \Dnote{consider enforcing consistency in names} \Inote{please do}}
\end{definition}
That is, $\Cuv(\mA)$ is the indices of all unit vectors spanned by $\mA$. It is clear that $\size{\Cuv(\mA)} \le \rank(\mA) \le \min\set{a, b}$. It is also easy to see that  for any  $v\in  \Image(\mA)$, $\Cuv(\mA)$ holds those  entries that are \emph{common to all  solutions $w$ of the system $\mA\times w = v$}.~\footnote{That is, for every $i\in \Cuv(\mA)$, $w_i$ can be described as a linear combination of the entries of $v$, and thus $w_i$ is fixed by $v$.} The following  claim  states that for  $i\notin \Cuv(\mA)$, the number of solutions $w$ of the system $\mA\times w = v$ with $w_i=y$, is the same  for every $y$.

\begin{claim}\label{clm:known_unit_vectors}
	Let $\fld$ be a finite field of size $n$, let $\mA\in \fld^{a\times b}$  and $v \in   \Image(\mA)$. Then for every $i\in [n] \setminus \Cuv(\mA)$ and  $y\in  [n]$, it holds that $\ppr{f\gets [n]^b}{f_i = y \mid  \mA\times f = v} = 1/n$.
\end{claim}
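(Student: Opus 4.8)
The plan is to use \cref{pre:solution_set_size} to count solutions of both the unconstrained and the constrained systems, and then relate the two. Fix $i \in [n] \setminus \Cuv(\mA)$ and $y \in [n]$. The set $\cF$ of all $f \in \fld^b$ with $\mA \times f = v$ has size $n^{b - \rank(\mA)}$ by \cref{pre:solution_set_size}, since $v \in \Image(\mA)$. Now I want to count the subset $\cF_y \eqdef \set{f \in \cF \colon f_i = y}$. The natural move is to rewrite this as the solution set of the augmented system obtained by appending the row $e_i$ to $\mA$ and the entry $y$ to $v$: that is, $\cF_y$ is the solution set of $\mA' \times f = v'$ where $\mA' \eqdef \binom{\mA}{e_i}$ (stacking $e_i$ as a new row) and $v' \eqdef \binom{v}{y}$.

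First I would argue this augmented system is consistent, i.e. $v' \in \Image(\mA')$. Since $v \in \Image(\mA)$, pick any $f^* \in \cF$; then $f^* + (y - f^*_i) e_i$ — wait, this need not satisfy $\mA \times f = v$ unless $e_i \in \ker$-compatible directions. The cleaner argument: because $i \notin \Cuv(\mA)$, the vector $e_i \notin \Span(\mA)$, hence appending the constraint $f_i = y$ is "independent" of the existing constraints. Concretely, I would show $\rank(\mA') = \rank(\mA) + 1$ (adding a row not in the row span strictly increases rank) and that consistency is preserved: the system $\mA \times f = v$ has a solution, and since $e_i$ is linearly independent of the rows of $\mA$, for any target value $y$ there is still a solution — this follows because the map $f \mapsto (\mA \times f, \langle e_i, f\rangle)$ restricted to $\cF$ (an affine subspace of dimension $b - \rank(\mA)$) has image that is a coset of a subspace of dimension $b - \rank(\mA) - (\rank(\mA') - \rank(\mA)) = b - \rank(\mA') $... more directly: the function $g \colon \cF \to \fld$, $g(f) = f_i$, is an affine map on the affine space $\cF$; it is non-constant precisely because $e_i \notin \Span(\mA)$ (if it were constant, $f_i$ would be determined by $v$, forcing $e_i \in \Span(\mA)$, i.e. $i \in \Cuv(\mA)$). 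A non-constant affine map on an affine space over $\fld$ is surjective onto $\fld$ and every fiber has the same size, namely $|\cF|/n$.

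Therefore $\size{\cF_y} = \size{\cF}/n = n^{b - \rank(\mA) - 1}$ for every $y$, and hence
\[
\ppr{f \gets [n]^b}{f_i = y \mid \mA \times f = v} = \frac{\size{\cF_y}}{\size{\cF}} = \frac{1}{n},
\]
which is exactly the claim. Equivalently, one can present the same computation via \cref{pre:solution_set_size} applied to $\mA'$: $\size{\cF_y} = n^{b - \rank(\mA')} = n^{b - \rank(\mA) - 1}$, once consistency of $\mA' \times f = v'$ and $\rank(\mA') = \rank(\mA)+1$ are established.

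\textbf{Main obstacle.} The only real content is the "fiber" argument: justifying that $e_i \notin \Span(\mA)$ implies both (a) the augmented system stays consistent for \emph{every} $y$, and (b) all fibers are equinumerous. I expect to phrase this cleanly by observing that $f \mapsto f_i$ is an affine functional on the affine solution space $\cF$, that it is constant iff $e_i$ lies in the row span of $\mA$ (which is the definition of $i \in \Cuv(\mA)$), and that a non-constant affine functional on a finite-dimensional affine space over a field is a surjection with uniform fiber sizes. Everything else is a direct application of \cref{pre:solution_set_size}.
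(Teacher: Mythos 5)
Your proposal is correct and follows essentially the same route as the paper: count $\abs{\cF}$ via \cref{pre:solution_set_size}, augment $\mA$ with the row $e_i$ and $v$ with the entry $y$, and use $e_i \notin \Span(\mA)$ to get $\rank(\mA') = \rank(\mA)+1$ and consistency, yielding fiber size $n^{b-\rank(\mA)-1}$. Your affine-functional argument for why the augmented system remains consistent for every $y$ is a nice explicit justification of a step the paper asserts without elaboration, but it is not a different approach.
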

\begin{proof}
	Let $\cF_{\mA,v} \eqdef \set{ f\in [n]^b \colon \mA\times f = v }$  be the set of solutions for the equation $\mA \times F = v$.
	Since, by assumption, $\mA\times F=v$ has a solution, by  \cref{pre:solution_set_size} it holds that $| \cF_{\mA,v} | = n^{b - \rank(\mA)}$.
	Next, let $\mA' \eqdef \stack{\mA}{e_i}, v' \eqdef \stack{v}{y}$,
	and $\cF_{\mA,v,i,y} \eqdef \set{ f\in [n]^b \colon \mA'\times f = v'}$ (\ie $\cF_{\mA,v,i,y}$ is the set of solutions for $\mA' \times F = v'$).
	Since, by assumption, $e_i \notin \Span(\mA)$, it holds that  $\mA' \times F = v'$ has a solution and $\size{\cF_{\mA,v,i,y}} = n^{b - \rank(\mA')} = n^{b - \rank(\mA)-1}$. We conclude that  $\ppr{f\gets [n]^b}{f_i = y \mid  \mA\times f = v} = \frac{ |\cF_{\mA,v,i,y}| }{ | \cF_{\mA,v}| }  = 1/n$.
\end{proof}

\newcommand{\qrymatrix}{ c }
\newcommand{\matdtemp}{ \fld^{\qrymatrix \times n} }
The following claim  states that adding a small number of rows to a given matrix $\mA$ does not increase the set $\Cuv(\mA)$ by much.
\begin{claim}\label{existence_general_specific_spans}
	For every $\mA \in \matl $ there exists an $\ell$-size set $\cS_\mA \subseteq [n]$
	such that the following holds:
	for any $ \mB \in \matdtemp$
	there exists a $\qrymatrix$-size set $\cS_\mB \subseteq [n]$
	for which $\Cuv\stack{\mA}{\mB} \subseteq \cS_\mA \cup \cS_\mB$.
\end{claim}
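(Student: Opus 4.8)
The plan is to show that the set $\Cuv(\mA)$ is, up to an $\ell$-size ``exceptional'' set depending only on $\mA$, a monotone object that can only grow by the number of rows appended. Concretely, I will define $\cS_\mA$ to be a minimal set of row-indices of $\mA$ that witnesses all of $\Cuv(\mA)$, together with enough padding to reach size exactly $\ell$. The key observation is this: if $e_j \in \Span\stack{\mA}{\mB}$, then $e_j = \sum_i \delta_i \mA_i + \sum_k \mu_k \mB_k$ for scalars $\delta_i,\mu_k \in \fld$. If in this representation \emph{all} the $\mu_k$ are zero, then $e_j \in \Span(\mA)$, so $j \in \Cuv(\mA)$. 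Otherwise, the combination genuinely uses some rows of $\mB$. So the task is to bound the number of indices $j$ of the second type.

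First I would set up the right basis. Let $r = \rank(\mA)$ and choose $\cS_\mA \subseteq [n]$, an $\ell$-size superset of a set of $r$ row-indices whose corresponding rows of $\mA$ are linearly independent and whose span equals $\Span(\mA)$ (such a set of size $\ell$ always exists since $\mA$ has $\ell$ rows; if $r < \ell$ just pad arbitrarily with other row-indices). Now I claim $\Cuv(\mA) \subseteq \cS_\mA$ is \emph{not} quite what we need — rather, we need that the unit vectors themselves, as far as they are forced, are indexed by $\cS_\mA$... Actually, re-examining the statement: $\Cuv$ returns \emph{column} indices $j \in [n]$ with $e_j \in \Span(\mA)$, and $\cS_\mA, \cS_\mB$ are subsets of $[n]$, i.e.\ also column-indices (equivalently indices into the ambient space $\fld^n$). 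So I should instead argue directly about column indices. Here is the cleaner route: work with a spanning independent subset of rows. Pick a maximal linearly independent subset of the rows of $\stack{\mA}{\mB}$ that \emph{first} takes as many rows of $\mA$ as possible; say it uses rows $\mA_{i_1},\dots,\mA_{i_r}$ (so $r = \rank(\mA)$) and then $\mB_{k_1},\dots,\mB_{k_t}$ with $t \le \qrymatrix$. For each $j \in \Cuv\stack{\mA}{\mB}$, the vector $e_j$ lies in the span of these $r+t$ rows. Consider the $(r+t) \times n$ matrix $\mM$ with these rows; since its rows are independent, for each such $j$ there is a \emph{unique} coefficient vector expressing $e_j$, and $j \in \Cuv(\mM)$. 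Now I would invoke a sub-lemma: \emph{for any matrix with independent rows, $\size{\Cuv(\mM)} \le (\text{number of rows})$}, which follows because distinct spanned unit vectors $e_j$ are themselves linearly independent, hence there are at most $\rank(\mM) = r+t$ of them — but that only gives $\size{\Cuv\stack{\mA}{\mB}} \le r+t \le \ell + \qrymatrix$, which is a bound on the size, not a statement about membership in $\cS_\mA \cup \cS_\mB$.

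To get the stronger membership statement I would do a Gaussian-elimination / pivot argument. Put $\mM = \stack{\mA}{\mB}$ (all $\ell + \qrymatrix$ rows) and row-reduce it to reduced row echelon form $\widehat{\mM}$, being careful about pivot order: process rows so that the reduction corresponds to first using $\mA$'s rows then $\mB$'s. An index $j$ is in $\Cuv(\mM)$ iff $e_j$ is a row of $\widehat{\mM}$, i.e.\ iff there is a pivot row of $\widehat{\mM}$ whose only nonzero entry is in column $j$; call these the \emph{singleton pivot columns}. Each singleton pivot row arises either from the $\mA$-block of the elimination or involves a $\mB$-row. I would define $\cS_\mA$ to be the set of singleton pivot columns produced when row-reducing $\mA$ alone (padded to size $\ell$ — legitimate since there are at most $\rank(\mA) \le \ell$ of them), and $\cS_\mB$ to be the columns $j$ such that, in the combined reduction, the singleton pivot row for column $j$ is not already a singleton pivot row of the $\mA$-only reduction. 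The point is that each such ``new'' singleton pivot row must use at least one $\mB$-row nontrivially (otherwise it lies in $\Span(\mA)$ and its column was already in $\Cuv(\mA) \subseteq \cS_\mA$), and since there are only $\qrymatrix$ rows in $\mB$, a counting/linear-independence argument over the $\mB$-rows caps the number of such new columns at $\qrymatrix$. Then by construction $\Cuv\stack{\mA}{\mB} \subseteq \cS_\mA \cup \cS_\mB$ and $\size{\cS_\mA} = \ell$, $\size{\cS_\mB} \le \qrymatrix$, as required.

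The main obstacle I anticipate is making the ``each new singleton pivot column consumes a row of $\mB$'' bookkeeping rigorous: one must argue that the map from new columns of $\cS_\mB$ to the $\mB$-rows they genuinely depend on is injective (or at least that its image is large enough), and this requires choosing the elimination order canonically and tracking which $\mB$-rows remain independent modulo $\Span(\mA)$. The clean way to phrase this is via the quotient space $\fld^n / \Span(\mA)$: the images $\bar{\mB_1},\dots,\bar{\mB_{\qrymatrix}}$ span a space of dimension $t \le \qrymatrix$, and a column $j \notin \Cuv(\mA)$ lies in $\Cuv\stack{\mA}{\mB}$ only if $\bar{e_j} \in \Span(\bar{\mB_1},\dots,\bar{\mB_{\qrymatrix}})$; since distinct such $\bar{e_j}$ that also happen to be \emph{forced} (singleton modulo $\Span(\mA)$) behave like spanned unit vectors in the quotient, there are at most $t \le \qrymatrix$ of them. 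I would take $\cS_\mB$ to be exactly that set of at most $\qrymatrix$ columns. This quotient-space formulation should dispatch the obstacle with only a short argument, and it avoids any explicit pivoting.
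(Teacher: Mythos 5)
There is a genuine gap, and it sits exactly at the obstacle you yourself flagged. Your construction takes $\cS_\mA$ to be (essentially) $\Cuv(\mA)$, padded \emph{arbitrarily} to size $\ell$, and then needs the count $\size{\Cuv\stack{\mA}{\mB} \setminus \Cuv(\mA)} \le \qrymatrix$. Both of your proposed justifications --- injectivity of the map from new singleton pivot columns to the $\mB$-rows they consume, and the claim that the relevant quotient images $\bar{e_j}$ behave like independent vectors inside the $t$-dimensional space $W = \Span(\bar{\mB_1},\dots,\bar{\mB_{\qrymatrix}})$ --- fail, and for the same reason: distinct unit vectors can become linearly \emph{dependent} modulo $\Span(\mA)$. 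Concretely, take $\ell = \qrymatrix = 1$, $\mA = (1,1,0,\dots,0)$ and $\mB = (1,0,\dots,0)$. Then $\Cuv(\mA) = \emptyset$, yet $\Span\stack{\mA}{\mB} = \Span(e_1,e_2)$, so $\Cuv\stack{\mA}{\mB} = \set{1,2}$ gains \emph{two} new elements from a single row of $\mB$; in $\fld^n/\Span(\mA)$ one has $\bar{e_1} = -\bar{e_2}$, so both lie in the one-dimensional $W$, and both ``new'' singleton pivot rows of the combined reduction use the same $\mB$-row. No definition of ``forced'' can rescue the count: both indices genuinely belong to $\Cuv\stack{\mA}{\mB}$, at most one fits into a $\qrymatrix$-size $\cS_\mB$, so the other must already sit in $\cS_\mA$ --- which therefore cannot be an arbitrary padding of $\Cuv(\mA)$.

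The repair, which is the paper's actual proof, is to choose $\cS_\mA$ to be the \emph{full} set of leading-one (pivot) columns of the row canonical form of $\mA$ --- a set of size $\rank(\mA) \le \ell$ that in general strictly contains $\Cuv(\mA)$ and records which coordinates are ``at risk'' of becoming spanned unit vectors later. One then uses three facts: (i) for any matrix $\mM$ in row canonical form, $\Cuv(\mM)$ is contained in its set of pivot columns (a spanned $e_j$ forces all combination coefficients to equal its entries at the pivot columns, so a non-pivot $j$ would force $e_j=0$); (ii) appending rows can only enlarge the pivot-column set, so $\cS_\mA$ is contained in the pivot set $\cS_\mE$ of the reduced form $\mE$ of $\stack{\mA}{\mB}$; and (iii) $\size{\cS_\mE} = \rank(\mE) \le \rank(\mA) + \qrymatrix$, so $\cS_\mB \eqdef \cS_\mE \setminus \cS_\mA$ has size at most $\qrymatrix$. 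In your counterexample this places column $1$ into $\cS_\mA$ ahead of time, which is precisely the information your arbitrary padding discards.
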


\begin{proof}
	Standard row operations performed on a matrix $\mM$ do not affect  $\Span(\mM)$, and thus do not affect $\Cuv(\mM)$. Therefore, we may assume that both $\mA$ and $\mB$ are  in row canonical form.\footnote{(1) all zero rows are at the bottom  (2) the first non-zero entry in each row is equal to $1$ (known as the ``leading 1'') (3)  the leading $1$ in each row appears strictly to the right of the leading $1$ in all the rows above it  (4) a column that contains a leading $1$ is zero in all other entries. It is a well-known  that a matrix can be reduced to row canonical form using  Gaussian elimination, and the set of columns containing a leading $1$ is unique.} For a matrix $\mM$ in row canonical form, let $\cL(\mM) \eqdef \{ i \in [n] \colon  \text{the \ith column of $\mM$ contains a leading 1} \}$.
	Let $\cS_\mA \eqdef \cL(\mA)$ and note that $ |\cS_\mA | = \rank(\mA) \le \ell$. Perform Gaussian elimination on $\stack{\mA}{\mB}$ to yield a matrix $\mE$ in row canonical form, and let $\cS_\mE \eqdef \cL(\mE)$. Note that $\cS_\mA \subseteq \cS_\mE$, since adding rows to a matrix may only expand the set of leading ones. Furthermore, $ \size{\cS_\mE } = \rank(\mE) \le \rank(\mA) + \qrymatrix$. Clearly, $\Cuv(\mE) \subseteq \cS_\mE$, and we can  write $\cS_\mE = \cS_\mA \cup \cS_\mB$, for $\cS_\mB \eqdef (\cS_\mE \setminus \cS_\mA)$. Finally, $\size{\cS_\mB} = \size{\cS_\mE} - \size{\cS_\mA} \le \rank(\mA) + \qrymatrix - \rank(\mA) = \qrymatrix$,	and the proof follows.
\end{proof}

\subsection{Random Functions}
Let $\fspace$ be the set of all functions from $[n]$ to $ [n]$. We make the following observations. 

\begin{claim} \label{not_many_good_indices}
	Let $\cS_1,\ldots,\cS_n \subseteq [n]$ be $\qrymatrix$-size sets, and for $f\in \fspace$ let $\cK_f\eqdef \set{ y\in [n] \colon y \in f(\cS_y) }$. Then for any $\mu \in [0,\half]$:
	\begin{equation*}
	\ppr{f\gets \fspace}{\size{\cK_f} \ge \mu n} \le 2^{2\lceil \mu n\rceil\log(1/\mu) + \lceil \mu n\rceil \log (\qrymatrix / n)}.
	\end{equation*}
\end{claim}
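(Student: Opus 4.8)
The plan is to bound the probability via a union bound over all $\lceil\mu n\rceil$-size candidate sets $\cK\subseteq[n]$ that could arise as (a subset of) $\cK_f$, together with a bound on the probability that a fixed such set is contained in $\cK_f$. First I would observe that the events ``$y\in\cK_f$'', i.e. ``$y\in f(\cS_y)$'', depend only on the values $f(x)$ for $x\in\bigcup_y\cS_y$; more importantly, for a \emph{fixed} target set $\cK$ of size $k\eqdef\lceil\mu n\rceil$, the event $\cK\subseteq\cK_f$ says that for every $y\in\cK$ there is some $x\in\cS_y$ with $f(x)=y$. I would like to say these events are independent over $y\in\cK$ and each has probability at most $|\cS_y|/n\le c/n$ (writing $c$ for \qrymatrix), but the sets $\cS_y$ need not be disjoint, so a fixed coordinate $x$ could be responsible for several $y$'s — a single value $f(x)$ can equal at most one $y$, so if $x\in\cS_y\cap\cS_{y'}$ then $f(x)$ cannot certify both. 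Hence the cleanest route is: for each $y\in\cK$ pick a witness $x_y\in\cS_y$ with $f(x_y)=y$; the map $y\mapsto x_y$ is injective (distinct $y$'s need distinct witnesses since $f(x_y)=y$), so $\cK\subseteq\cK_f$ implies the existence of an injection $g$ from $\cK$ into $[n]$ with $g(y)\in\cS_y$ and $f(g(y))=y$ for all $y$.

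Next I would bound $\Pr[\cK\subseteq\cK_f]$ by summing over all such injections $g$: there are at most $c^k$ choices for $g$ (at most $c$ choices of $g(y)\in\cS_y$ for each of the $k$ values $y$), and for a fixed injection $g$, the events $\{f(g(y))=y\}_{y\in\cK}$ are genuinely independent (distinct coordinates $g(y)$ of the random function $f$) and each has probability exactly $1/n$. So $\Pr[\cK\subseteq\cK_f]\le c^k\cdot n^{-k}=(c/n)^k$. Then a union bound over the $\binom{n}{k}\le (en/k)^k$ choices of $\cK$ gives
\begin{equation*}
\ppr{f\gets\fspace}{|\cK_f|\ge\mu n}\le \binom{n}{k}\left(\frac{c}{n}\right)^k\le \left(\frac{en}{k}\right)^k\left(\frac{c}{n}\right)^k=\left(\frac{ec}{k}\right)^k.
\end{equation*}
The remaining work is purely to massage this into the stated form $2^{2k\log(1/\mu)+k\log(c/n)}$. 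Writing $k=\lceil\mu n\rceil\ge\mu n$, we have $1/k\le 1/(\mu n)$, so $(ec/k)^k\le (ec/(\mu n))^k = 2^{k\log(c/n)+k\log(e/\mu)}$, and since $\mu\le 1/2$ we have $\log(e/\mu)=\log e+\log(1/\mu)\le 2\log(1/\mu)$ (as $\log(1/\mu)\ge 1$ and $\log e<2$, hence $\log e\le \log(1/\mu)$ — I may need the slightly weaker slack, e.g. $\log e + \log(1/\mu) \le \log 2 + \log(1/\mu) + \tfrac12 \le 2\log(1/\mu)$, which still holds comfortably for $\mu\le 1/2$). This yields the claimed bound $2^{2k\log(1/\mu)+k\log(c/n)}$ with $k=\lceil\mu n\rceil$.

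The main obstacle I anticipate is the overlapping-$\cS_y$ issue in the counting step: one must not naively multiply $|\cS_y|/n$ across $y$ as if independent, nor overcount injections. The injective-witness reframing handles this, and it is the one genuinely non-routine idea; everything else is a standard union bound plus elementary estimates on $\binom{n}{k}$ and the binary entropy-type algebra. A secondary point to be careful about is the ceiling $\lceil\mu n\rceil$ versus $\mu n$ and ensuring all the logarithmic inequalities go in the right direction for the whole range $\mu\in[0,1/2]$ (the $\mu=0$ case being trivial since the probability is then being compared against $2^0=1$ after noting $|\cK_f|\ge 0$ always, though one should state it so the bound is vacuous or handle it separately).
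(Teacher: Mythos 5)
Your core argument is exactly the paper's: fix a candidate set $\cT$ of size $k=\lceil\mu n\rceil$, bound $\Pr[\cT\subseteq\cK_f]$ by counting witness assignments $y\mapsto x_y\in\cS_y$ with $f(x_y)=y$ (at most $\qrymatrix^{k}$ of them, each forcing $k$ coordinates of $f$ and hence an event of probability $n^{-k}$), conclude $\Pr[\cT\subseteq\cK_f]\le(\qrymatrix/n)^k$, and union-bound over the $\binom{n}{k}$ choices of $\cT$. Your injectivity observation is a clean way to justify the independence of the $k$ events for a fixed witness map, but it is not otherwise needed. The one genuine problem is in the final massaging. Starting from $\binom{n}{k}\le(en/k)^k$ you reduce to the inequality $\log(e/\mu)\le 2\log(1/\mu)$, which is equivalent to $\mu\le 1/e$ and therefore \emph{fails} on part of the stated range: at $\mu=1/2$ the left side is $\log(2e)\approx 2.44$ against $2$ on the right. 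Your proposed patch $\log e+\log(1/\mu)\le\log 2+\log(1/\mu)+\tfrac12\le 2\log(1/\mu)$ has the same defect, since the second inequality requires $\log(1/\mu)\ge 3/2$, i.e.\ $\mu\le 2^{-3/2}\approx 0.354$; it does not ``hold comfortably'' up to $1/2$. The culprit is the spurious factor $e^k$ introduced by the crude estimate $\binom{n}{k}\le(en/k)^k$. The paper avoids it by using $\binom{n}{k}\le 2^{nh(k/n)}$ together with $h(\delta)\le-2\delta\log\delta$, which gives $\binom{n}{k}\le(n/k)^{2k}\le(1/\mu)^{2k}=2^{2k\log(1/\mu)}$ with no extra constant; substituting that estimate for yours repairs the argument with no other changes.
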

\begin{proof}
	For $\cT\eqdef\set{t_1,\ldots,t_{\lceil \mu n\rceil} } \subseteq [n]$,  let $\FFam_{\cT} \eqdef \set {f\in \fspace \colon \cT \subseteq \cK_f}$.  We make a rough over-counting for  the size of $\FFam_\cT$: 
	one can  describe $f \in \FFam_\cT$ by choosing  $x_i \in[n]$ for  each set $\cS_{t_i}$, and require that $f(x_i) = t_i$ (to ensure $t_i \in f(\cS_{t_i})$). There are at most $\qrymatrix^{\lceil \mu n\rceil}$ ways to perform these choices. There are no constraints on the remaining $n-\lceil \mu n\rceil$ values of $f$. Therefore 
	$\size{\FFam_{\cT}} \le \qrymatrix^{\lceil \mu n\rceil} \cdot n^{n-\lceil \mu n\rceil}$.
	This immediately implies that $\ppr{f\gets \fspace,\cT \from \binom{[n]}{\lceil \mu n\rceil}}{ \cT \subseteq \cK_f } \le \left(\frac{\qrymatrix}{n}\right)^{\lceil \mu n\rceil}$.
	We conclude that 
	\begin{align*}
	\ppr{f\gets \fspace}{ |\cK_f| \ge \mu n } &=  \pr{\exists \cT\subseteq \cK_f\colon |\cT|= \lceil \mu n\rceil}\\
	&\le \sum_{\cT \in {[n] \choose \lceil \mu n\rceil}} \ppr{f\gets \fspace}{ \cT \subseteq \cK_f}\le  \binom{n}{\lceil \mu n\rceil} \cdot \left( \frac{\qrymatrix}{n} \right)^{\lceil \mu n\rceil}\le 2^{2\lceil \mu n\rceil\log(1/\mu) + \lceil \mu n\rceil \log (\qrymatrix / n)}.
	\end{align*}
	The last inequality follows  \cref{pre:entropy_binomial_fact,pre:binary_entropy_bound}, and the fact that $\log(1/\mu) \ge \log(n/\lceil \mu n\rceil)$.
\end{proof}


\begin{claim}\label{not_many_good_indices_conditioned} Let $n\in \N$,  let $F \gets \FFam_n$ and let $W$ be an event (jointly distributed with $F$) of probability at least $ p$.  Let  $Y\gets [n]$ be independent of $F$ and $W$. Then for every $\qrymatrix$-size sets $\cS_1,\ldots,\cS_n \subseteq [n]$ and $\gamma \in [0,\half]$, it holds that
	\begin{align*}
	\pr{Y \in F(\cS_{Y}) \mid W }\le \gamma + 
	2^{2\lceil \gamma n\rceil\log(1/\gamma) + \lceil \gamma n\rceil \log (\qrymatrix / n) + \log (1/p)}.
	\end{align*}
\end{claim}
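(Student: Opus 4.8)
The plan is to derive this conditional bound from the unconditional concentration bound of \cref{not_many_good_indices} by an averaging argument. First I would note that, with $\cK_f \eqdef \set{y \in [n] \colon y \in f(\cS_y)}$ as in \cref{not_many_good_indices}, the event $\set{Y \in F(\cS_Y)}$ is by definition the same as $\set{Y \in \cK_F}$. Since $Y$ is uniform over $[n]$ and independent of the joint pair $(F, W)$, the conditional distribution of $Y$ given $\set{F = f}$ and $W$ is still uniform over $[n]$, so $\pr{Y \in \cK_F \mid F = f, W} = \size{\cK_f}/n$. Averaging over $f$ drawn from the conditional law of $F$ given $W$ gives
\begin{align*}
	\pr{Y \in F(\cS_Y) \mid W} = \frac1n \cdot \Ex\left[\size{\cK_F} \mid W\right].
\end{align*}

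Next I would split the expectation according to whether $\size{\cK_F} < \gamma n$ or $\size{\cK_F} \ge \gamma n$. The first regime contributes at most $\gamma n$, and the second contributes at most $n \cdot \pr{\size{\cK_F} \ge \gamma n \mid W}$. Dividing by $n$ yields
\begin{align*}
	\pr{Y \in F(\cS_Y) \mid W} \le \gamma + \pr{\size{\cK_F} \ge \gamma n \mid W}.
\end{align*}
Finally, using $\pr{E \mid W} \le \pr{E}/\pr{W} \le \pr{E}/p$ together with \cref{not_many_good_indices} applied with $\mu \eqdef \gamma$, I would bound $\pr{\size{\cK_F} \ge \gamma n \mid W} \le 2^{2\lceil \gamma n\rceil\log(1/\gamma) + \lceil \gamma n\rceil \log(\qrymatrix/n)}/p$, and absorbing the $1/p$ factor into the exponent as an additive $\log(1/p)$ term gives exactly the claimed inequality.

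This argument is essentially routine; the only point that needs a sentence of justification is that conditioning on $W$ does not disturb the uniformity of $Y$ or its independence from $F$, which is immediate because $Y$ was assumed independent of the pair $(F, W)$. So I do not expect a genuine obstacle here — just the bookkeeping of combining the threshold split with the prior claim and the elementary inequality $\pr{E\mid W}\le \pr{E}/p$.
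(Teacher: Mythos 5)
Your proof is correct and follows essentially the same route as the paper's: both reduce $\pr{Y\in F(\cS_Y)\mid W}$ to the size of $\cK_F$ using the independence of $Y$ from $(F,W)$, split on the event $\size{\cK_F}\ge\gamma n$, and then apply $\pr{E\mid W}\le\pr{E}/p$ together with \cref{not_many_good_indices}. The only cosmetic difference is that you phrase the split as a bound on $\tfrac1n\Ex[\size{\cK_F}\mid W]$ while the paper conditions the probability directly on the two regimes; the two decompositions are interchangeable.
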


\begin{proof}
	Let $\cK_f \eqdef \set{ y\in [n] \colon y \in f(\cS_y) }$. For $\gamma \in [0,1]$, compute: 
	\begin{align}\label{eq:not_many_good_indices_conditioned:1}
	\lefteqn{\pr{ Y\in F(\cS_Y) \mid W}  = \pr{Y \in \cK_F \mid W }}\\
	& \le \pr{ \size{\cK_F} \ge \gamma n \mid W } \cdot \pr{Y \in \cK_F \mid W, \size{\cK_F} \ge \gamma n} + \pr{ \size{\cK_F} < \gamma n \mid W }\cdot\pr{Y \in \cK_F \mid W, \size{\cK_F} < \gamma n }\nonumber \\
	& \le \pr{ \size{\cK_F} \ge \gamma n \mid W } + \gamma .\nonumber 
	\end{align}
	The last  inequality holds since $Y$ is independent of $W$ and $F$. Since  $\pr{W} \ge p$,  it holds that:
	\begin{align}
	\pr{ |\cK_F| \ge \gamma n \mid W } & \le \frac{  \pr{ |\cK_F| \ge \gamma n }   }{	\pr{ W} } \le 2^{2\lceil \gamma n\rceil\log(1/\gamma) + \lceil \gamma n\rceil \log (\qrymatrix / n) + \log (1/p)}
	\end{align}
	The second inequality is by \cref{not_many_good_indices}. 
	We conclude that:
	\begin{align*}
	\pr{Y \in F(\cS_{Y}) \mid W }\le
	\gamma + 2^{2\lceil \gamma n\rceil\log(1/\gamma) + \lceil \gamma n\rceil \log (\qrymatrix / n) + \log (1/p)}.
	\end{align*}
\end{proof}

The next  claim bounds the probability that a random function compresses an image set.

\begin{claim}\label{pre:tau_delta_bound}
	For any $n\in \N$ and  $\tau, \delta \in [0,\half]$, it holds that 
	
	$\alpha_{ \tau, \delta }\eqdef \ppr{\getf }{ \exists \cX\subseteq [n]\colon \size{\cX} \ge \tau n \land  \size{f(\cX)} \le \delta n } \le 2^{n (h(\tau) + h(\delta)) + \lfloor\tau n\rfloor \log \delta}$. 
\end{claim}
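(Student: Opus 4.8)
The plan is a double union bound over the sets $\cX$ that witness the bad event.

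\emph{Step 1: reduce to a fixed set size.} If some $\cX$ satisfies $\size{\cX}\ge\tau n$ and $\size{f(\cX)}\le\delta n$, then any subset $\cX''\subseteq\cX$ of size exactly $k\eqdef\floor{\tau n}$ (which exists, as $\floor{\tau n}\le\tau n\le\size{\cX}$) also satisfies $\size{f(\cX'')}\le\size{f(\cX)}\le\delta n$. Hence the bad event is contained in $\{\exists\,\cX''\in\binom{[n]}{k}:\size{f(\cX'')}\le\delta n\}$, and a union bound over the $\binom nk$ choices of $\cX''$ gives
\[
\alpha_{\tau,\delta}\le\binom nk\cdot\max_{\cX''\in\binom{[n]}{k}}\ppr{\getf}{\size{f(\cX'')}\le\delta n}.
\]

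\emph{Step 2: bound the inner probability.} Fix $\cX''$ with $\size{\cX''}=k$, and set $m\eqdef\floor{\delta n}$. Since $\size{f(\cX'')}$ is an integer, $\size{f(\cX'')}\le\delta n$ is equivalent to $\size{f(\cX'')}\le m$, which in turn implies that $f$ maps $\cX''$ into some $\cY\in\binom{[n]}{m}$. As $f\gets\FFam_n$ is uniform, its values on the $k$ points of $\cX''$ are independent and uniform over $[n]$, so $\ppr{\getf}{f(\cX'')\subseteq\cY}=(m/n)^k$ for each fixed $\cY$. A union bound over $\cY$, together with $m\le\delta n$ and $\delta\le 1$, yields
\[
\ppr{\getf}{\size{f(\cX'')}\le\delta n}\le\binom nm\Bigl(\frac mn\Bigr)^{k}\le\binom{n}{\floor{\delta n}}\cdot\delta^{\floor{\tau n}}.
\]

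\emph{Step 3: pass to entropy form.} Using the standard estimate $\binom nj\le 2^{n h(j/n)}$ (\cref{pre:entropy_binomial_fact}) and the monotonicity of $h$ on $[0,\half]$ (\cref{pre:binary_entropy_bound}), together with $\floor{\tau n}/n\le\tau\le\half$ and $\floor{\delta n}/n\le\delta\le\half$, we get $\binom{n}{\floor{\tau n}}\le 2^{n h(\tau)}$ and $\binom{n}{\floor{\delta n}}\le 2^{n h(\delta)}$. Combining the three steps,
\[
\alpha_{\tau,\delta}\le 2^{n h(\tau)}\cdot 2^{n h(\delta)}\cdot\delta^{\floor{\tau n}}=2^{n(h(\tau)+h(\delta))+\floor{\tau n}\log\delta},
\]
which is the claimed bound. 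The argument itself is routine; the only delicate point is the floor/ceiling bookkeeping — reducing to witness sets of size exactly $\floor{\tau n}$ and to image sets of size exactly $\floor{\delta n}$, and applying the entropy estimate only in the regime $\tau,\delta\le\half$ where $h$ is monotone — so that the exponent lands exactly as stated. I do not anticipate any substantive obstacle beyond this.
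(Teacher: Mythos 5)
Your proof is correct and follows essentially the same route as the paper's: a double union bound over witness sets $\cX$ of size exactly $\floor{\tau n}$ and image sets $\cY$ of size exactly $\floor{\delta n}$, the bound $\pr{f(\cX)\subseteq\cY}\le\delta^{\floor{\tau n}}$, and the entropy estimate on the two binomial coefficients. The only cosmetic differences are the order in which the two union bounds are applied and a harmless mislabeling of the monotonicity of $h$ (which is a separate standard fact, not \cref{pre:binary_entropy_bound}).
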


\begin{proof}
	Compute:
	\begin{align*}
	\alpha_{ \tau, \delta }& = \ppr{\getf}{  \exists \cX,\cY\subseteq [n] \colon  \size{\cX} \ge \tau n \land \size{\cY} \le \delta n \land  f(\cX) \subseteq \cY }\\
	& \le \ppr{\getf}{  \exists \cX,\cY\subseteq [n] \colon  \size{\cX} = \lfloor\tau n\rfloor \land \size{\cY} = \lfloor \delta n\rfloor \land  f(\cX) \subseteq \cY }\\
	& \le \sum_{\cY \in {[n] \choose \lfloor \delta n\rfloor}} \sum_{\cX \in {[n] \choose \lfloor\tau n\rfloor}}\pr{f(\cX) \subseteq \cY}\nonumber
	\le \binom{n}{\lfloor \delta n\rfloor}\binom{n}{\lfloor\tau n\rfloor} \cdot  {\delta}^{\lfloor\tau n\rfloor}\le 2^{n (h(\tau) + h(\delta)) + \lfloor\tau n\rfloor \log \delta}\nonumber.
	\end{align*}
	The last inequality follows from \cref{pre:entropy_binomial_fact}, and since $h$ is monotone in $[0,\half]$.
\end{proof}

\newcommand{\prex}{\mathcal{P}_f(x)}
The last claim states that an algorithm that inverts $f(x)$ with good probability, is likely to  return $x$ itself.

\begin{claim}\label{claim:getting_the_correct_preimage}
	Let $\Inv$ be a function from $\fspace \times [n]$ to $[n]$ such that $\ppr{\stackrel{f\gets \fspace}{x\gets [n]}}{\Inv(f,f(x)) \in \pref{f(x)}} \ge \alpha$. Then, $\ppr{\stackrel{f\gets \fspace}{x\gets [n]}}{\Inv(f,f(x)) = x} \ge  \frac{\alpha^2}{8}$.
\end{claim}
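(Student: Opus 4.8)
The plan is to reduce the claim to an exact identity between the two success probabilities, together with a one-line first-moment estimate on the fiber sizes of a random function. Fix $f\in\fspace$ and partition $[n]$ into the fibers $\set{\pref{y}}_{y\in\Image(f)}$. The key point is that $\Inv(f,f(x))$ depends on $x$ only through $y=f(x)$; hence, on a given fiber $\pref{y}$, the algorithm either returns a fixed element of that fiber --- in which case it ``succeeds'' for every $x$ in the fiber but returns the \emph{actual} preimage $x$ for only a $1/\size{\pref{y}}$ fraction of them --- or it returns an element outside the fiber, in which case it succeeds for none of them. Averaging over $x\gets[n]$ (which, conditioned on $f$ and on $y=f(x)$, is uniform on $\pref{y}$, while the success indicator is already determined by $f$ and $y$) and then over $f$ gives the identity
\begin{align*}
\ppr{\stackrel{f\gets\fspace}{x\gets[n]}}{\Inv(f,f(x))=x}
=\Exp_{\stackrel{f\gets\fspace}{x\gets[n]}}\!\left[\frac{\mathbf{1}\set{\Inv(f,f(x))\in\pref{f(x)}}}{\size{\pref{f(x)}}}\right].
\end{align*}
Write $S$ for the event $\Inv(f,f(x))\in\pref{f(x)}$, so $\pr{S}\ge\alpha$ by hypothesis, and $L\eqdef\size{\pref{f(x)}}\ge 1$; the claim is now a lower bound on $\Exp[\mathbf{1}\set{S}/L]$.

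The next step is to bound $\Exp[L]$. By symmetry of the distribution of $f$ under relabeling the domain, $\Exp_{f,x}[L]=\Exp_f[\size{\pref{f(1)}}]$, and $\size{\pref{f(1)}}=1+\size{\set{x'\ne 1:f(x')=f(1)}}$, whose second term is a $\mathrm{Bin}(n-1,1/n)$ random variable over a uniform $f$. Hence $\Exp[L]=1+(n-1)/n<2$.

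Finally I would combine the two facts. Using Markov's inequality $\pr{L\ge t}\le\Exp[L]/t<2/t$, for any $t>0$,
\begin{align*}
\Exp\!\left[\frac{\mathbf{1}\set{S}}{L}\right]\ \ge\ \frac1t\cdot\pr{S\wedge L<t}\ \ge\ \frac1t\bigl(\pr{S}-\pr{L\ge t}\bigr)\ \ge\ \frac1t\Bigl(\alpha-\frac2t\Bigr),
\end{align*}
and the choice $t=4/\alpha$ yields $\Exp[\mathbf{1}\set{S}/L]\ge(\alpha/4)(\alpha/2)=\alpha^2/8$, which is exactly the claimed bound. (A Cauchy--Schwarz estimate, $\alpha\le\pr{S}=\Exp[(\mathbf{1}\set{S}/\sqrt{L})\cdot(\mathbf{1}\set{S}\sqrt{L})]\le\sqrt{\Exp[\mathbf{1}\set{S}/L]}\cdot\sqrt{\Exp[L]}$, would even give the sharper $\alpha^2/\Exp[L]>\alpha^2/2$.)

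The only genuinely delicate point --- and the step I would write out most carefully --- is the derivation of the displayed identity: one must verify that conditioning on $f$ and on $y=f(x)$ leaves $x$ uniform on the fiber $\pref{y}$ while the indicator $\mathbf{1}\set{S}$ is determined by $(f,y)$, so that within each fiber the ``correct'' preimage is hit with probability exactly $1/\size{\pref{y}}$ whenever $\Inv$ succeeds. Everything after that is a routine calculation.
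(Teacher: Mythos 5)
Your proof is correct and is essentially the paper's argument in a cleaner packaging: both rest on the observation that $\Inv(f,f(x))$ depends on $x$ only through $f(x)$ (so, given success, the true preimage is hit with probability exactly $1/\size{\pref{f(x)}}$) together with a Markov bound driven by the first moment of the fiber size (the paper bounds $\Exp[\size{\pref{f(x)}\setminus\set{x}}]<1$, you bound $\Exp[L]<2$), and the paper's conditional-probability chain with threshold $d=2/\alpha$ on the excess matches your threshold $t=4/\alpha$ on $L$ itself. Your parenthetical Cauchy--Schwarz estimate is a genuine, if minor, improvement --- it gives $\alpha^2/2$ in place of $\alpha^2/8$ --- and the exact identity you isolate as the delicate step is precisely the point the paper leaves most implicit.
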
 

\begin{proof}	
	For $f\in \fspace$ let $\prex \eqdef \pref{f(x)} \setminus \set{x}$. We would like to provide a bound on the size of this set to ensure that $x$ is output with high probability. Compute
	\begin{align}\label{eq:linear_encoding:correct_x_bound}
	\ppr{ \substack{ \getf \\ x \from [n] } }{ \Inv(f,f(x)) = x }&= 
	\pr{ \Inv(f,f(x)) = x \land \Inv(f,f(x)) \in \pref{f(x)} }\\
	&\ge  \pr{ \Inv(f,f(x)) = x \mid \Inv(f,f(x)) \in \pref{f(x)} }\cdot \alpha. \nonumber
	\end{align}
	
	We now provide a lower bound for the left-hand term. For $d\geq1$ compute
	\begin{align}\label{eq:linear_encoding:correct_x_cond_success_bound}
	&\ppr{ \substack{ \getf \\ x \from [n] } }{ \Inv(f,f(x)) = x \mid \Inv(f,f(x)) \in \pref{f(x)} }\\ \nonumber
	&\ge 
	\pr{ \Inv(f,f(x)) = x \land |\prex| \le d \mid \Inv(f,f(x)) \in \pref{f(x)} }
	\\ \nonumber
	&=
	\pr{ \Inv(f,f(x)) = x \mid |\prex| \le d, \Inv(f,f(x)) \in \pref{f(x)} }
	\cdot
	\pr{ |\prex| \le d \mid \Inv(f,f(x)) \in \pref{f(x)} }
	\nonumber
	\\ \nonumber
	&\ge 
	\frac{1}{d+1} \cdot \pr{ |\prex| \le d \mid \Inv(f,f(x)) \in \pref{f(x)} }
	\nonumber
	\\ \nonumber
	&= \frac{1}{d+1} \left( 1 - \pr{ |\prex| > d \mid \Inv(f,f(x)) \in \pref{f(x)} } \right).
	\nonumber
	\end{align}

	By linearity of expectation, $\ex{\getf}{\size{\prex}} = \frac{n-1}{n} < 1$.
	Hence by  Markov's inequality, 
	
	$\ppr{\substack{ \getf \\ x \from [n] } }{ |\prex| > d } < 1/d$. It follows that 
	\begin{align}\label{eq:linear_encoding:preimages_upper_bound}
	\ppr{ \substack{ \getf \\ x \from [n] } }{ |\prex| > d \mid \Inv(f,f(x)) \in \pref{f(x)} } &\le \frac{ \pr{ |\prex| > d } }{ \pr{\Inv(f,f(x)) \in \pref{f(x)}} } \le \frac{1}{d\alpha}
	\end{align}
	
	Combining \cref{eq:linear_encoding:correct_x_cond_success_bound,eq:linear_encoding:preimages_upper_bound} yields that 
	\begin{align}\label{eq:linear_encoding:preimages_upper_bound:2}
	\ppr{\substack{ \getf \\ x \from [n] } }{ \Inv(f,f(x)) = x \mid \Inv(f,f(x)) \in \pref{f(x)} } \ge \frac{1}{d+1} \left( 1 - \frac{1}{d\alpha} \right)
	\end{align}
	Finally, by \cref{eq:linear_encoding:correct_x_bound,eq:linear_encoding:preimages_upper_bound:2} we conclude that
	\begin{align*}
	\ppr{\substack{ \getf \\ x \from [n] } }{ \Inv(f,f(x)) = x } \ge \frac{\alpha}{d+1} \left( 1 - \frac{1}{d\alpha} \right)  \ge \frac{\alpha}{2d} \left( 1 - \frac{1}{d\alpha} \right)= \frac{\alpha}{2d} - \frac{1}{2d^2}.
	\end{align*}
	Setting $d = \frac{2}{\alpha}$ yields that $\ppr{\substack{ \getf \\ x \from [n] } }{ \Inv(f,f(x)) = x } \ge \frac{a^2}{4} - \frac{\alpha^2}{8} = \frac{\alpha^2}{8}$.

\end{proof}

\subsection{Additional Inequalities}
We use the following easily-verifiable facts:
\begin{fact}\label{pre:log_bound}
	For $x \ge 1$: $\log x\geq 1-1/x$.
\end{fact}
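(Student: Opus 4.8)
The plan is to treat Fact~\ref{pre:log_bound} as an elementary one-variable calculus exercise, keeping in mind the paper's convention that $\log$ denotes $\log_2$; so the goal is $\log_2 x \ge 1 - 1/x$ for all $x \ge 1$.

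First I would reduce away the base. Since $\log_2 x = \ln x / \ln 2$, $\ln 2 < 1$, and $\ln x \ge 0$ on $[1,\infty)$, we get $\log_2 x \ge \ln x$, so it suffices to prove the standard inequality $\ln x \ge 1 - 1/x$ on $[1,\infty)$. To prove the latter, substitute $t \eqdef 1/x \in (0,1]$, which turns the claim into $\ln t \le t - 1$ — the familiar tangent-line bound at $t = 1$ for the concave function $\ln$, valid for every $t > 0$ (and with equality exactly at $t = 1$, matching the equality at $x = 1$ in the original statement).

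If one prefers a self-contained argument that does not invoke that textbook bound, define $h(x) \eqdef \log_2 x - 1 + 1/x$ on $[1,\infty)$, note $h(1) = 0 - 1 + 1 = 0$, and compute $h'(x) = \frac{1}{x\ln 2} - \frac{1}{x^2} = \frac{1}{x^2}\!\left(\frac{x}{\ln 2} - 1\right)$. For $x \ge 1$ we have $\frac{x}{\ln 2} \ge \frac{1}{\ln 2} > 1$ (as $\ln 2 \approx 0.693$), so $h'(x) > 0$ and $h$ is increasing on $[1,\infty)$; hence $h(x) \ge h(1) = 0$, which is exactly the claim. There is essentially no obstacle here — the only point worth a second's attention is the base-two convention, which makes the inequality (weakly) easier than its natural-logarithm analogue rather than harder.
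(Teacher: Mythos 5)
Your argument is correct: both the reduction $\log_2 x \ge \ln x \ge 1-1/x$ (for $x\ge 1$) and the direct monotonicity argument for $h(x)=\log_2 x - 1 + 1/x$ are sound, and the derivative computation $h'(x)=\frac{1}{x^2}\bigl(\frac{x}{\ln 2}-1\bigr)>0$ checks out. The paper itself offers no proof — it lists this among "easily-verifiable facts" — so there is nothing to compare against; your write-up simply supplies the routine verification the authors omitted.
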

\begin{fact}\label{pre:binary_entropy_bound}
	For $\delta \le 1/2$: $h(\delta) \le - 2 \delta \log \delta  $.
\end{fact}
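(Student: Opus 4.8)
The plan is to reduce the claim to the nonnegativity of a single auxiliary function on $[0,1/2]$ and then to establish that nonnegativity by a one-line concavity argument. Recall $h(\delta)= -\delta\log\delta-(1-\delta)\log(1-\delta)$, so subtracting $h(\delta)$ from the target right-hand side shows that the inequality $h(\delta)\le -2\delta\log\delta$ is equivalent to $\phi(\delta)\ge 0$, where $\phi(\delta)\eqdef -2\delta\log\delta-h(\delta)= -\delta\log\delta+(1-\delta)\log(1-\delta)$ and we use the usual convention $0\log 0=0$ (so $\phi$ extends continuously to $\delta=0$). Equivalently, one only needs $(1-\delta)\log\tfrac{1}{1-\delta}\le \delta\log\tfrac1\delta$ on $(0,1/2]$.

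First I would record the endpoint values: $\phi(0)=0$ and $\phi(1/2)= -\tfrac12\log\tfrac12+\tfrac12\log\tfrac12 = 0$. So the inequality is in fact tight at both ends of the range, and all that remains is to rule out $\phi$ dipping below $0$ in between. For that I would differentiate: $\phi'(\delta)=\tfrac{1}{\ln 2}\bigl(-\ln\delta-\ln(1-\delta)-2\bigr)$ and $\phi''(\delta)=\tfrac{1}{\ln 2}\cdot\tfrac{2\delta-1}{\delta(1-\delta)}$, which is strictly negative for $\delta\in(0,1/2)$. Hence $\phi$ is concave on $[0,1/2]$ (continuity at $0$ together with concavity on the open interval suffices), and a concave function vanishing at both endpoints of an interval is nonnegative on that interval; this gives $\phi\ge 0$ on $[0,1/2]$, i.e. the claim. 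If one prefers to avoid invoking concavity directly: $\phi''<0$ makes $\phi'$ strictly decreasing on $(0,1/2)$, so by Rolle's theorem applied to $\phi$ (with $\phi(0)=\phi(1/2)=0$) there is $c\in(0,1/2)$ with $\phi'(c)=0$, and monotonicity of $\phi'$ forces $\phi'>0$ on $(0,c)$ and $\phi'<0$ on $(c,1/2)$; thus $\phi$ rises from $\phi(0)=0$ and falls back to $\phi(1/2)=0$, so $\phi\ge 0$ throughout.

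There is essentially no obstacle here — it is a routine one-variable calculus check — and the only points that need a little care are that $-\delta\log\delta$ must be read as its continuous extension at $\delta=0$, and that, because the inequality is tight at $\delta=0$ and $\delta=1/2$, cruder estimates fail: applying $\ln z\le z-1$ (equivalently Fact~\ref{pre:log_bound}) to bound $\log\tfrac{1}{1-\delta}\le\tfrac{\delta}{1-\delta}\log e$ only yields $(1-\delta)\log\tfrac{1}{1-\delta}\le\delta\log e\le\delta\log\tfrac1\delta$ for $\delta\le 1/e$, so such a shortcut would still have to be patched on $(1/e,1/2]$, whereas the concavity argument covers the whole range uniformly.
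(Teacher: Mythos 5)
Your proof is correct. The paper itself offers no proof of this fact---it is listed among the ``easily-verifiable facts''---so there is no argument to compare against; your reduction to the equivalent inequality $(1-\delta)\log\frac{1}{1-\delta}\le\delta\log\frac{1}{\delta}$, followed by the observation that $\phi(\delta)=-\delta\log\delta+(1-\delta)\log(1-\delta)$ is concave on $[0,1/2]$ and vanishes at both endpoints, is a clean and complete verification (and your side remark that the cruder bound via Fact~\ref{pre:log_bound} only covers $\delta\le 1/e$ is accurate).
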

We also use the following bound:
\begin{fact}[\cite{galvin2014three}]\label{pre:entropy_binomial_fact}
	$\binom{n}{k} \le 2^{n h(\frac{k}{n})}$.
\end{fact}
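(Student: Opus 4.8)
The plan is to establish $\binom{n}{k}\le 2^{n\,h(k/n)}$ by the standard binomial-theorem (a.k.a. ``probabilistic'') argument, which delivers the stated exponent with its exact constant and needs no asymptotics. Write $p\eqdef k/n$ and recall that $h(p)=-p\log p-(1-p)\log(1-p)$, with the usual convention $0\log 0=0$ so that $h(0)=h(1)=0$.

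The two degenerate cases $k=0$ and $k=n$ I would dispose of first and separately: there $\binom{n}{k}=1$ and $h(k/n)=0$, so the inequality holds with equality. For $0<k<n$ (hence $p\in(0,1)$) the only idea needed is to expand $1=(p+(1-p))^n=\sum_{i=0}^{n}\binom{n}{i}p^i(1-p)^{n-i}$ and keep only the $i=k$ summand, every term being nonnegative; this gives $\binom{n}{k}p^k(1-p)^{n-k}\le 1$, \ie $\binom{n}{k}\le p^{-k}(1-p)^{-(n-k)}$. Taking $\log$ (base two) of both sides and regrouping the two terms is a routine calculation that rewrites the right-hand side as exactly $-k\log(k/n)-(n-k)\log(1-k/n)=n\,h(k/n)$, and the claim follows by exponentiating.

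There is essentially no obstacle here: the only spot that demands a line of care is isolating $k\in\{0,n\}$ up front, since the factor $p^{-k}(1-p)^{-(n-k)}$ is not literally defined at $p\in\{0,1\}$; everything else is the binomial theorem together with the definition of $h$. A Stirling-approximation derivation is an alternative, but it is both longer and, on the nose, weaker (it pins down only the leading term), so the argument above is the one I would carry out.
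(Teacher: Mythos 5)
Your proof is correct and is the standard binomial-theorem argument for this bound (the same one given in the cited reference \cite{galvin2014three}); the paper itself states the fact without proof, deferring to that citation. The edge-case handling for $k\in\{0,n\}$ and the regrouping of $-k\log(k/n)-(n-k)\log(1-k/n)$ into $n\,h(k/n)$ are both accurate, so nothing is missing.
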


\newcommand{\prtcl}{\Pi}
\newcommand{\prtclA}{\pA}
\newcommand{\prtclB}{\pB}

\section{Linear-advice Inverters}\label{sec:LinearAdvice}
In this section we present  lower bounds on the time/memory tradeoff of adaptive function inverters with linear advice.  The extension to additive-advice  inverters is given in \cref{subsec:AdditiveAdvice}.

To simplify notation, the following definitions and results are stated \wrt a fixed $n\in \N$.  Let $\FFam$ be the set of all functions from $[n]$ to $[n]$.  All asymptotic notations (\eg $\Theta$) hide constant terms that are independent of $n$.   We start by formally defining adaptive function inverters.  
\begin{definition}[Adaptive inverters]\label{def:adaptiveInverterWithSuccess}
	An {\sf $s$-advice, $\numqry$-query adaptive inverter} is a  deterministic algorithm pair $\Inv\eqdef(\preprocessor, \decoder)$, where
	$\preprocessor:\FFam \to \str{s}$, and $\decoder^{(\cdot)}\colon [n] \times \str{s} \to [n]$ makes up to $q$ oracle queries. For $f\in \FFam$ and $y\in [n]$, let 
	$$\Inv(y;f) \eqdef  \decoder^f(y,\preprocessor(f)).$$
\end{definition}
That is,  $\preprocessor$ is  the \emph{preprocessing} algorithm that takes as input the function description and outputs a string of length $s$ that  we refer to as the \emph{advice} string.   The oracle-aided  $\decoder$ is the \emph{decoder} algorithm that performs the actual inversion action. It receives the element to invert $y$ and   the advice string,  and using $q$ (possibly adaptive) queries to $f$, attempts to output a preimage of $y$. Finally, $\Inv(y;f)$ is the  candidate preimage the algorithms of $\Inv$ produce for the element to invert  $y$ given the (restricted) access to $f$.   We define adaptive inverters with linear advice as follows, recalling that we may view $f\in \FFam$ as a vector $\in[n]^n$.

\begin{definition}[Linear preprocessing]\label{def:linearpreprocessing}
A  deterministic algorithm $\preprocessor\colon\FFam \to \str{s}$ is {\sf linear} if there exist an additive  group $\cG\subseteq \str{s}$ that contains $\preprocessor(\FFam)$, and an additive group $\cK$ of size $n$ such that for every $f_1,f_2 \in \FFam$ it holds that $\preprocessor(f_1 +_\cK f_2) = \preprocessor(f_1) +_\cG \preprocessor(f_2)$, letting $f_1+_\cK f_2 \eqdef ((f_1)_1+_\cK (f_2)_1,\dots, (f_1)_n+_\cK (f_2)_n)$.
\end{definition}

Below we omit the subscripts from $+_\cG$ and $ +_\cK$ when clear from the context.

We prove the bound for inverters with linear preprocessing by presenting a reduction from the well-known \emph{set disjointness} problem.
\begin{definition}[Set disjointness]\label{def:Setdisjointness}
	A protocol $\Pi=(\Ac,\Bc)$ {\sf solves   set disjointness  with error $\varepsilon$  over all inputs in $\cQ$} $\subseteq \set{ (\cX,\cY) \colon \cX, \cY \subseteq [\nat] } $, if for every $(\setx,\sety)\in \cQ$ 
	$$\ppr{\substack{r_\Ac \gets \zs, r_\Bc \gets \zs\\ r_p \gets \zs}}{(\Ac(\setx;r_\Ac),\Bc(\sety;r_\Bc))(r_p) = (\delta_{\setx,\setY},\delta_{\setx,\setY})} \ge 1- \varepsilon$$
	
	for $\delta_{\setx,\setY}$ being the indicator for $\setx \cap \sety = \emptyset$.
\end{definition}
Namely, except with  probability $\varepsilon$ over their private and public randomness, the two parties find out whether their input sets  intersect.   Set disjointness  is known to require large communication over the following set of inputs.

\remove{

\begin{definition}[Hard distribution for set disjointness]
	Let $\cQ^0 = \{ \cX, \cY \subset [n] \colon  |\cX| = |\cY| = \lfloor n/4 \rfloor,  \cX \cap \cY = \emptyset  \} $
	and let 
	$\cQ^1 = \{ \cX, \cY \subset [n] \colon  |\cX| = |\cY| = \lfloor n/4 \rfloor,  |\cX \cap \cY| = 1 \} $.
	Let $D^0$ and $D^1$ be the uniform distribution over $\cQ^0$ and $\cQ^1$ respectively,
	and let $D = \frac{3}{4}\cdot D^0 + \frac{1}{4}\cdot D^1$.
\end{definition}
\citet{razborov1992distributional} has shown that solving set disjointness $D_n$ with small error requires high communication complexity:
}
\begin{definition}[Communication complexity]
The {\sf communication complexity of a protocol $\Pi=(\Ac,\Bc)$}, denoted $CC(\Pi)$, is the maximal number of bits the parties exchange in an execution (over all possible  inputs and randomness).
\end{definition}

\begin{theorem}[Hardness set disjointness, \citet{razborov1992distributional}] \label{thm:razborov}
	Exists $\varepsilon > 0$ such that for every protocol $\Pi$ that solves set disjointness over {\sf all} inputs in $\cQ \eqdef \set{\cX,\cY \subseteq [n] \colon \size{\cX \cap \cY} \le 1 }$ with error $\varepsilon$, it holds that $CC(\Pi) \ge \Omega(n)$.~\footnote{\cite{razborov1992distributional} proved a stronger result:  there exists a distribution that fails all low communication protocols. For the sake of our argument, however, it is easier to work with the weaker statement of \cref{thm:razborov}.}
\end{theorem}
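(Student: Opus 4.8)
The plan is to derive \cref{thm:razborov} from the \emph{corruption bound} (the rectangle method) applied to Razborov's hard distribution. Set $m \eqdef \floor{n/4}$, let $\cQ^0$ be the set of pairs $(\cX,\cY)$ of disjoint $m$-subsets of $[n]$ and $\cQ^1$ the set of pairs of $m$-subsets with $\size{\cX \cap \cY}=1$, and let $D \eqdef \tfrac34 D^0 + \tfrac14 D^1$, where $D^0$ and $D^1$ are uniform over $\cQ^0$ and $\cQ^1$. Since $\cQ^0 \cup \cQ^1 \subseteq \cQ$, a protocol that solves set disjointness over all of $\cQ$ with worst-case error $\varepsilon$ in particular errs with probability at most $\varepsilon$ when the input is drawn from $D$. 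By an averaging argument (fixing the public and private coins to their best values) it then suffices to prove: every \emph{deterministic} protocol $\Pi'$ with $D$-error at most $\varepsilon$ has $CC(\Pi') \in \Omega(n)$, since the derived $\Pi'$ satisfies $CC(\Pi') \le CC(\Pi)$.

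Next I would recall the standard rectangle structure: a deterministic protocol with communication $c$ partitions the input space $\binom{[n]}{m} \times \binom{[n]}{m}$ into at most $2^c$ combinatorial rectangles $R_t = \cA_t \times \cB_t$, each carrying a constant output label. Let $R_1,\dots,R_k$ (with $k \le 2^c$) be those labelled ``disjoint''. Correctness under $D$ forces the ``not disjoint''-labelled rectangles to carry $D^0$-mass at most $\tfrac43\varepsilon$ (they contribute error $\tfrac34$ times their $D^0$-mass), so $\sum_{j\le k} D^0(R_j) \ge 1 - \tfrac43\varepsilon$; and since $\Pi'$ errs on every $\cQ^1$-instance contained in a ``disjoint''-labelled rectangle, the $D$-error is at least $\tfrac14 \sum_{j \le k} D^1(R_j)$.

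The engine of the proof is \textbf{Razborov's rectangle (corruption) lemma}: there are universal constants $\delta>0$ and $\gamma>0$ such that every combinatorial rectangle $R' = \cA \times \cB$ with $D^0(R') \ge 2^{-\gamma n}$ satisfies $D^1(R') \ge \delta \cdot D^0(R')$ --- informally, a large rectangle cannot consist almost purely of disjoint pairs; it is forced to also contain a constant fraction, relative to its disjoint mass, of singleton-intersecting pairs. Granting the lemma, suppose toward a contradiction that $c \eqdef CC(\Pi') \le \gamma n /2$. The ``disjoint''-labelled rectangles with $D^0(R_j) < 2^{-\gamma n}$ contribute at most $2^c \cdot 2^{-\gamma n} \le 2^{-\gamma n /2}$ to $\sum_j D^0(R_j)$, so the remaining (``large'') ones still satisfy $\sum_{j \text{ large}} D^0(R_j) \ge 1 - \tfrac43\varepsilon - 2^{-\gamma n/2} \ge \tfrac12$ for $\varepsilon$ small and $n$ large. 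Applying the lemma to each such $R_j$ and summing gives $\sum_{j \le k} D^1(R_j) \ge \delta \sum_{j\text{ large}} D^0(R_j) \ge \delta/2$, so the $D$-error is at least $\delta/8$. Choosing $\varepsilon < \delta/8$ (and small enough for the earlier estimates) contradicts $D$-error $\le\varepsilon$, so $CC(\Pi') > \gamma n/2 \in \Omega(n)$, which together with the reduction of the first paragraph proves \cref{thm:razborov}.

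The main obstacle is, unsurprisingly, proving the rectangle lemma. The idea is to couple $D^1$ to $D^0$ by a ``plant-an-intersection'' sampling procedure and to exploit the product structure of $R' = \cA \times \cB$: one shows that if $D^0(R')$ is not exponentially small then the conditional densities of the columns of $\cB$ over the rows appearing in $\cA$ (and symmetrically) cannot all be tiny, and then a concentration/anti-concentration estimate shows that merging a random coordinate of a near-disjoint pair keeps it inside $R'$ with probability $\Omega(D^0(R'))$ after re-normalizing to $D^1$. Razborov's original argument carries this out via a delicate counting over subsets; a cleaner modern alternative is the information-complexity route of Bar-Yossef--Jayram--Kumar--Sivakumar, which proves $CC \in \Omega(n)$ through a direct-sum theorem for internal information complexity plus an $\Omega(1)$ lower bound on the information complexity of a single $\mathsf{AND}$ coordinate (the latter via a cut-and-paste argument bounding Hellinger distance). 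Either way, this single-coordinate / rectangle step is where essentially all the difficulty lies; the reduction and the rectangle-counting manipulations above are routine.
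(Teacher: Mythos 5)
Your derivation is correct and is essentially the same route the paper takes: the paper proves nothing here beyond citing \citet{razborov1992distributional}, with the footnote noting that his stronger distributional result implies the stated worst-case bound, and your argument is exactly that standard bridge---averaging to a deterministic protocol with small error under the hard distribution $D=\tfrac34 D^0+\tfrac14 D^1$, the rectangle decomposition and error accounting, and the corruption (rectangle) lemma, which you, like the paper, defer to Razborov (or the BJKS information-complexity proof). So the proposal is sound, with no gap beyond the explicitly cited corruption bound, which is acceptable since the theorem is itself an imported result.
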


Our main result is the following reduction from  set disjointness to function inversion.
\begin{theorem}[From set disjointness to function inversion]\label{thm:linear_encoding:inverter_solves_set_disjointness}
	Assume exists an $s$-advice, $\numqry$-query linear-advice inversion algorithm with 	$\ppr{\stackrel{f\gets  \FFam}{x\gets [n]} }{\Inv(f(x);f) \in f^{-1}(f(x))} \ge \alpha$, and let $\cQ \eqdef \set{\cX,\cY \subseteq [n] \colon \size{\cX \cap \cY} \le 1 }$.
	 Then for every $\varepsilon>0$ there exists a protocol that solves set disjointness with (one-sided) error $\varepsilon$ and communication $O\left(\frac{\log(\varepsilon)}{\log(1-\alpha^2/8)} \cdot (s+q \log n)\right)$, on all  inputs in $\cQ$. 
\end{theorem}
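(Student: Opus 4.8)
The plan is to reduce from the set disjointness problem, \cref{thm:razborov}, so throughout fix inputs $\cX$ (held by $\Ac$) and $\cY$ (held by $\Bc$) with $\size{\cX\cap\cY}\le 1$. The parties will jointly simulate the hypothesized inverter on a function $f$ built from $\cX,\cY$, chosen so that (i) $f$ is distributed exactly as a uniform $f\gets\FFam$ (so the inverter's guarantee applies verbatim, with no loss), and (ii) the element $\cX\cap\cY$, when nonempty, is a planted preimage of a value known to the parties, which the inverter will return with noticeable probability. The engine will be \cref{claim:getting_the_correct_preimage}: applied to $\Inv(y;f)=\decoder^f(y,\preprocessor(f))$, the hypothesis $\ppr{f\gets\FFam,\,x\gets[n]}{\Inv(f(x);f)\in f^{-1}(f(x))}\ge\alpha$ yields $\ppr{f\gets\FFam,\,x\gets[n]}{\Inv(f(x);f)=x}\ge\alpha^2/8$.

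For one ``attempt'', the parties use public randomness to sample a permutation $\pi$ of $[n]$, a function $g\in[n]^n$, a vector $r\in[n]^n$, and a value $y^\ast\in[n]$, all uniform (additions below are in the group $\cK$ of \cref{def:linearpreprocessing}, with identity $0$). $\Ac$ sets $f_\cX(i):=g(i)+r_i$ for $i\in\pi(\cX)$ and $f_\cX(i):=g(i)$ otherwise; $\Bc$ sets $f_\cY(i):=y^\ast-g(i)-r_i$ for $i\in\pi(\cY)$ and $f_\cY(i):=0$ otherwise; and they consider $f:=f_\cX+f_\cY$. A direct calculation shows that, for every fixed input and fixed $\pi$, over the choice of $(g,r,y^\ast)$ the function $f$ is distributed exactly as a uniform $f\gets\FFam$ (each coordinate is uniform and the coordinates are independent — in particular the $y^\ast$-dependence shared by all coordinates in $\pi(\cY)$ washes out), while for the unique $i\in\pi(\cX)\cap\pi(\cY)=\pi(\cX\cap\cY)$, if it exists, one has deterministically $f(i)=y^\ast$.

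The protocol runs the attempt as follows. $\Ac$ sends $\preprocessor(f_\cX)$ to $\Bc$ ($s$ bits); by linearity $\Bc$ forms $a:=\preprocessor(f_\cX)+\preprocessor(f_\cY)=\preprocessor(f)$. $\Bc$ then runs $\decoder(y^\ast,a)$, answering each of its (at most $q$) oracle queries $i$ by requesting $f_\cX(i)$ from $\Ac$ and adding $f_\cY(i)$, which $\Bc$ knows — $O(q\log n)$ bits. Let $x'$ be the output. $\Bc$ sends $x'$; $\Ac$ replies whether $\pi^{-1}(x')\in\cX$; $\Bc$ checks whether $\pi^{-1}(x')\in\cY$. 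If both hold the parties declare ``intersecting'' (with witness $\pi^{-1}(x')$), else the attempt declares ``disjoint''. This costs $O(s+q\log n)$ bits, and is one-sided: an attempt declares ``intersecting'' only when it has exhibited an element of $\cX\cap\cY$, so if $\cX\cap\cY=\emptyset$ every attempt declares ``disjoint''.

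It remains to show that when $\cX\cap\cY=\set{j}$ an attempt declares ``intersecting'' with probability $\ge\alpha^2/8$; this happens exactly when $x'=\pi(j)$. Fix $\pi$: then $f$ is a uniform $f\gets\FFam$ with $f(\pi(j))=y^\ast$, hence $\Inv(y^\ast;f)=\Inv(f(\pi(j));f)$, and $\ppr{g,r,y^\ast}{\Inv(y^\ast;f)=\pi(j)\mid\pi}=\phi(\pi(j))$ where $\phi(p):=\ppr{f\gets\FFam}{\Inv(f(p);f)=p}$. Averaging over the uniform $\pi$ (for fixed $j$, $\pi(j)$ is uniform over $[n]$) gives $\ppr{\pi,g,r,y^\ast}{\Inv(y^\ast;f)=\pi(j)}=\tfrac1n\sum_{p\in[n]}\phi(p)=\ppr{f\gets\FFam,\,p\gets[n]}{\Inv(f(p);f)=p}\ge\alpha^2/8$ by \cref{claim:getting_the_correct_preimage}. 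Finally, running $k:=\lceil\log(\varepsilon)/\log(1-\alpha^2/8)\rceil$ independent attempts (fresh public randomness each) and declaring ``intersecting'' iff some attempt does gives one-sided error $\le(1-\alpha^2/8)^k\le\varepsilon$ and communication $O\!\left(\frac{\log\varepsilon}{\log(1-\alpha^2/8)}\,(s+q\log n)\right)$. I expect the delicate point to be the verification that the constructed $f$ is genuinely uniform over $\FFam$ — especially that the common $y^\ast$ appearing in every $\pi(\cY)$-coordinate (and as the value at the planted point) introduces no correlations — since that exact uniformity is what lets the inverter's average-case guarantee be used without loss; the random relabeling $\pi$ is the second, small but essential, ingredient, turning the Claim's average over starting points into the needed per-input success bound.
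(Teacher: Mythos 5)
Your proposal is correct and follows essentially the same route as the paper: a reduction to Razborov-hard set disjointness in which the parties additively build a uniformly random $f$ with the (relabeled) intersection element planted as a preimage of a known value, combine advice strings via linearity, invoke \cref{claim:getting_the_correct_preimage} to get the planted point back with probability $\alpha^2/8$, and amplify by independent repetition. The only differences are cosmetic — the paper uses a random cyclic shift and private per-party masking randomness where you use a public permutation and public $(g,r,y^\ast)$ — and your verification that the shared $y^\ast$ washes out is sound.
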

Combining \cref{thm:razborov,thm:linear_encoding:inverter_solves_set_disjointness} yields the following bound on linear-advice inverters.
\begin{corollary}[\cref{thm:intro:LinearAdvice}, restated]\label{linear_encoding:corollary_cant_invert_well}
	Let $\Inv=(\preprocessor,\decoder)$ be an $s$-advice $\numqry$-query inversion algorithm with linear preprocessing such that $\ppr{\stackrel{f\gets  \FFam}{x\gets [n]} }{\Inv(f(x);f) \in f^{-1}(f(x))} \ge \alpha$. 
	Then $s + \numqry \log n \in \Omega(\alpha^2 \cdot n)$.
\end{corollary}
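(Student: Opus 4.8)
\cref{linear_encoding:corollary_cant_invert_well} is immediate once we have the reduction of \cref{thm:linear_encoding:inverter_solves_set_disjointness}: instantiating it with the absolute constant $\varepsilon$ supplied by \cref{thm:razborov} produces a set-disjointness protocol over $\cQ=\set{\cX,\cY\subseteq[n]\colon\size{\cX\cap\cY}\le 1}$ of communication $O\!\left(\tfrac{\log\varepsilon}{\log(1-\alpha^2/8)}\,(s+\numqry\log n)\right)=O\!\left(\tfrac1{\alpha^2}(s+\numqry\log n)\right)$, and since \cref{thm:razborov} forces any such protocol to communicate $\Omega(n)$, we get $s+\numqry\log n\in\Omega(\alpha^2 n)$. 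So the whole task is to build the reduction, and the plan below does exactly that.

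\emph{The additive encoding.} Let $\cK$ be the size-$n$ group witnessing the linearity of $\preprocessor$, identified with $[n]$. On input $(\cX,\cY)$, Alice and Bob use public randomness to sample a uniformly random permutation $\pi$ of $[n]$, three independent uniformly random functions $\tau,\rho,\sigma\colon[n]\to\cK$, and a uniformly random $z\in\cK$. Alice forms $f_\cX\in\FFam$ by $f_\cX(i):=\tau(i)$ if $\pi^{-1}(i)\in\cX$ and $f_\cX(i):=\rho(i)$ otherwise; Bob forms $f_\cY\in\FFam$ by $f_\cY(i):=z-\tau(i)$ if $\pi^{-1}(i)\in\cY$ and $f_\cY(i):=\sigma(i)$ otherwise; and we set $g:=f_\cX+_\cK f_\cY$. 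A four-way case split on whether $\pi^{-1}(i)$ lies in $\cX$ and/or $\cY$ shows: (i) if $\cX\cap\cY=\emptyset$ then $g$ is distributed as a uniformly random function (each coordinate is a sum of fresh uniform group elements, and one checks that the contribution $z$ shared across the $\pi(\cY)$-coordinates does not spoil joint uniformity, since $(z,W)\mapsto z\mathbf 1+W$ is measure-preserving); and (ii) if $\cX\cap\cY=\set{j^\ast}$ then, writing $i^\ast:=\pi(j^\ast)$, $g$ is uniform on all coordinates except $i^\ast$, where $g(i^\ast)=z$ — and, crucially, since $\pi$ is uniform, $i^\ast$ is a uniform element of $[n]$, and $(g,z,i^\ast)$ is distributed exactly as $(f,f(x),x)$ for $f\gets\FFam$, $x\gets[n]$.

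\emph{Simulating the inverter cheaply.} Alice sends $\preprocessor(f_\cX)$ to Bob ($s$ bits), and by linearity Bob computes $a:=\preprocessor(g)=\preprocessor(f_\cX)+_\cG\preprocessor(f_\cY)$. Bob runs $\decoder^g(z,a)$, answering each of its at most $\numqry$ (possibly adaptive) oracle queries $i$ by forwarding $i$ to Alice and receiving $f_\cX(i)$, from which he forms $g(i)=f_\cX(i)+_\cK f_\cY(i)$; this costs $O(\numqry\log n)$ bits. Bob sends the output $\hat x:=\decoder^g(z,a)$ to Alice, and the players declare ``not disjoint'' in this round iff $\pi^{-1}(\hat x)\in\cX\cap\cY$ (one more bit, after which both know the verdict). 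On a disjoint input this never triggers, so the protocol errs only on the non-disjoint side. On an input with $\cX\cap\cY=\set{j^\ast}$ the round declares ``not disjoint'' exactly when $\hat x=i^\ast$; the theorem's hypothesis is precisely that $\Inv$ returns a preimage with probability $\ge\alpha$ over $f\gets\FFam,x\gets[n]$, so \cref{claim:getting_the_correct_preimage} gives $\pr{\decoder^f(f(x),\preprocessor(f))=x}\ge\alpha^2/8$, and by the distributional identity $(g,z,i^\ast)\equiv(f,f(x),x)$ the round succeeds with probability $\ge\alpha^2/8$. Running $k:=\lceil\log\varepsilon/\log(1-\alpha^2/8)\rceil$ independent rounds with fresh public randomness and declaring ``not disjoint'' iff some round does, the miss probability on a non-disjoint input is $\le(1-\alpha^2/8)^k\le\varepsilon$, disjoint inputs stay correct with certainty, and the total communication is $O(k(s+\numqry\log n))$, as claimed.

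\emph{Main obstacle.} The delicate step is designing the additive encoding $(\cX,\cY)\mapsto(f_\cX,f_\cY)$ so that it does three incompatible-seeming things at once: make $g=f_\cX+f_\cY$ a \emph{genuinely uniform} random function on disjoint inputs (so the inverter's average-case guarantee applies), plant on a singleton intersection a single coordinate whose value is \emph{known to the players} while the conditional distribution of the whole object is \emph{exactly} $(f,f(x),x)$ (so \cref{claim:getting_the_correct_preimage} applies with no loss beyond the $\alpha^2/8$), and remain simulatable within $O(s+\numqry\log n)$ bits via the linearity of $\preprocessor$. The other easy-to-miss ingredient is the public random permutation $\pi$: without it the planted coordinate $j^\ast$ is a worst-case fixed domain point and \cref{claim:getting_the_correct_preimage}, being only an average-over-$x$ statement, yields nothing; permuting the domain makes $i^\ast$ uniform and rescues the bound. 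The remaining pieces — the four-way case verification, the boosting-by-repetition calculation, and the plug-in to \cref{thm:razborov} — are routine.
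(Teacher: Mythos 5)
Your proposal is correct and follows essentially the same route as the paper: a reduction to set disjointness over inputs with intersection size at most one, in which the players additively share a random function with the challenge planted at the (randomized) intersection coordinate, use linearity of $\preprocessor$ to assemble the advice with $s$ bits, forward the $\numqry$ oracle queries, invoke \cref{claim:getting_the_correct_preimage} to get success $\alpha^2/8$ per round, amplify by independent repetition, and conclude via \cref{thm:razborov}. The only differences are cosmetic: the paper randomizes the planted location with a public cyclic shift $d$ rather than a full random permutation $\pi$, and plants $0$ and $y$ at the special coordinates instead of your $\tau(i)$ and $z-\tau(i)$ masking, which is unnecessary but harmless.
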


\begin{proof}[Proof of \cref{linear_encoding:corollary_cant_invert_well}]
	By \cref{thm:linear_encoding:inverter_solves_set_disjointness}, the existence of an $s$-advice, $\numqry$-query linear-advice inverter $\Inv$ with success probability $\ge \alpha$ implies  that set disjointness can be solved over $\cQ$, with error $\varepsilon > 0$ and communication complexity $O\left(\frac{\log(\varepsilon)}{\log(1-\alpha^2/8)} \cdot (s+q \log n)\right)$.
	Thus, \cref{thm:razborov} yields that $\frac{\log(\varepsilon)}{\log(1-\alpha^2/8)} \cdot (s+q \log n)\in \Omega(n)$.
	Since $\frac{\log(\varepsilon)}{\log(1-\alpha^2/8)} = \log(1/\varepsilon)\cdot\frac{1}{\log(1/(1-\alpha^2/8))}$, and since, 
	by \cref{pre:log_bound},  it holds that  $ \log(1/(1-\alpha^2/8)) \ge \al^2/8$, it follows that $s+\numqry \log n \in \Omega(\alpha^{2} \cdot n)$. 
\end{proof}

The rest of this section is devoted to proving \cref{thm:linear_encoding:inverter_solves_set_disjointness}. Fix an  $s$-advice, $\numqry$-query inverter $\Inv=(\preprocessor,\decoder)$  with linear preprocessing. We use $\Inv$ in  \cref{proto:set_intersection_reduction} to solve set disjointness. In the protocol below we identify a vector $v \in \str{n}$ with the set $\set{i : v_i =1}$.
\begin{protocol}[$\prtcl = ( \prtclA, \prtclB )$]~\label{proto:set_intersection_reduction}
	\item[$\prtclA$'s input:] $a\in \str{n}$.
	\item[$\prtclB$'s input:] $b\in \str{n}$.
	
	\item [Public randomness:] $d \in [n]$. 

	\item [Operation:] ~
	
		\begin{enumerate}

		\item $\prtclB$ chooses $y \gets [n]$.
		
		\item  $\prtclA$ constructs a function $f_\Ac: [n] \to [n]$ as follows:
			\begin{itemize}
				\item for $i$ such that $a_i=0$, it  samples $f_\Ac(i+d \mod n)$  uniformly at 	random.
			
				\item for $i$ such that $a_i=1$,  it sets $f_\Ac(i + d \mod n) = 0$.
			\end{itemize}
		\item  $\prtclB$ constructs a function $f_\Bc: [n] \to [n]$ as follows:
			\begin{itemize}
				\item for $i$ such that $b_i=0$, it samples $f_\Bc(i+d \mod n)$  uniformly at 	random.
			
				\item for $i$ such that $b_i=1$, it sets  $f_\Bc(i + d \mod n) = y$.
			\end{itemize}
		
		   \item[$-$] Let $f\eqdef f_\Ac + f_\Bc$.

			\item $\prtclA$ sends $\preprocessor(f_\Ac)$  to $\prtclB$.\label{proto:set_intersection_reduction:1}

			\item $\prtclB$ sets $c \eqdef \preprocessor(f_\Ac) +_\cG \preprocessor(f_\Bc)=\preprocessor(f)$.\label{proto:set_intersection_reduction:1:5}
		
			\item $\prtclB$ emulates $\decoder^{f}(y,c)$: whenever  $\decoder$ sends a query $r$ to $f$, algorithm $\prtclB$ forwards it to $\prtclA$, and feeds $f_\Ac(r) + f_\Bc(r)$ back into $\decoder$.\label{proto:set_intersection_reduction:2}
		
			\item[$-$] Let $x$ be  $\decoder$'s output in the above emulation, and let $i = x -d \mod n$.
		
			\item $\prtclB$ sends $(i,b_{i})$ to $\prtclA$. If $a_{i}=b_{i}=1$, algorithm $\prtclA$ outputs False  and informs $\prtclB$.\label{proto:set_intersection_reduction:accept}
	
		\item Otherwise, both parties output True.\label{proto:set_intersection_reduction:reject}
	\end{enumerate}
\end{protocol}
In the following we analyze the communication complexity and success probability of $\Pi$. 

\begin{claim}[$\Pi$'s communication complexity] \label{clm:set_intersection_reduction:CC}
It holds that ${CC(\Pi) \le s + 2\numqry (\log n + 1) + \log n + 3}$.
\end{claim}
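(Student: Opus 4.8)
The plan is to walk through \cref{proto:set_intersection_reduction} line by line and tally exactly the bits that cross the channel, observing that everything else is purely local: reading the public coin $d$, $\prtclB$'s sampling of $y$, $\prtclA$'s construction of $f_{\Ac}$, $\prtclB$'s construction of $f_{\Bc}$, and---crucially using the linearity of $\preprocessor$---$\prtclB$'s computation in \stepref{proto:set_intersection_reduction:1:5} of $c = \preprocessor(f_{\Ac}) +_\cG \preprocessor(f_{\Bc}) = \preprocessor(f)$. None of these contribute to $CC(\Pi)$, so it suffices to account for the three places where bits are actually sent.

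First, in \stepref{proto:set_intersection_reduction:1}, $\prtclA$ transmits $\preprocessor(f_{\Ac}) \in \str{s}$, at a cost of exactly $s$ bits. Second, in the emulation of \stepref{proto:set_intersection_reduction:2}, $\decoder$ makes at most $\numqry$ oracle queries, and servicing one query $r$ costs $\prtclB$ at most $\lceil \log n \rceil \le \log n + 1$ bits to send $r \in [n]$ to $\prtclA$, plus at most $\lceil \log n \rceil \le \log n + 1$ bits for $\prtclA$ to send back $f_{\Ac}(r) \in [n]$; $\prtclB$ then forms $f(r) = f_{\Ac}(r) + f_{\Bc}(r)$ on its own (it already knows $f_{\Bc}(r)$) and feeds it to its internal copy of $\decoder$, with no further communication. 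Fixing the round structure---e.g.\ by padding $\decoder$ to exactly $\numqry$ queries, or equivalently by prepending a one-bit type flag to each message, which is absorbed into the ``$+1$''---the whole emulation costs at most $2\numqry(\log n + 1)$ bits. Third, in \stepref{proto:set_intersection_reduction:accept}, $\prtclB$ sends $(i, b_i)$, where $i \in [n]$ costs $\le \lceil \log n \rceil \le \log n + 1$ bits and $b_i \in \zo$ costs $1$ bit, and $\prtclA$ then replies with a single bit reporting whether its output is \textit{False} or \textit{True}; together this is at most $\log n + 3$ bits. Summing the three contributions gives $CC(\Pi) \le s + 2\numqry(\log n + 1) + \log n + 3$, which is the claimed bound.

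The only step that calls for any thought is the emulation, and the issue is conceptual rather than technical: one must note (i) that $\prtclB$ never needs $\prtclA$ to transmit a \emph{combined} oracle answer $f(r)$, only $\prtclA$'s share $f_{\Ac}(r)$, since $\prtclB$ reconstructs $f_{\Bc}(r)$ itself; and (ii) that runs of $\decoder$ using fewer than $\numqry$ queries must still be accounted for---handled cleanly by padding to exactly $\numqry$ queries so the protocol has a fixed message pattern and no explicit end-of-queries signal is required. Once these observations are in place the counting is entirely routine, and I anticipate no real obstacle.
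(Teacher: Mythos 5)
Your proposal is correct and matches the paper's proof, which performs the same step-by-step tally: $s$ bits for $\preprocessor(f_\Ac)$ in \stepref{proto:set_intersection_reduction:1}, at most $2\log n + 2$ bits per query in the emulation of \stepref{proto:set_intersection_reduction:2}, and at most $\log n + 3$ bits in \stepref{proto:set_intersection_reduction:accept}. Your additional observations (that $\prtclA$ only sends its share $f_\Ac(r)$ and that $\prtclB$ combines the answers locally) are accurate readings of the protocol and only make the accounting more explicit.
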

\begin{proof}~
	\begin{enumerate}
		\item In \stepref{proto:set_intersection_reduction:1}, party $\prtclA$ sends $\preprocessor(f_\Ac)$  to $\prtclB$.
		
		\item  In \stepref{proto:set_intersection_reduction:2}, the parties exchange at most $2\log n +2$ bits for every query $\decoder$ makes.

		\item  In \stepref{proto:set_intersection_reduction:accept}, the parties exchange at most
		$\log n +3$ bits.
	\end{enumerate}
	Thus, the total communication is bounded by $s + 2\numqry (\log n + 1) + \log n + 3$.
\end{proof}

\begin{claim}[$\Pi$'s success probability]\label{clm:linear_encoding:one_round_success_probability}~
	\begin{enumerate}
		\item $\pr{(\prtclA(a), \prtclB(b)) = (\true,\true)} =1$ for every  $(a,b)\in \cQ^0 \eqdef \set{\cX,\cY \subseteq [n] \colon \size{\cX \cap \cY} = 0}$.
		
		\item  $\pr{(\prtclA(a), \prtclB(b)) = (\false,\false)} \ge \frac{\alpha^2}{8}$  for  every $(a,b)\in \cQ^1 \eqdef \set{\cX,\cY \subseteq [n] \colon \size{\cX \cap \cY} = 1}$.
	\end{enumerate}
\end{claim}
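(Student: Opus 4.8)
The plan is to establish the two probability bounds separately, the first being immediate and the second resting on \cref{claim:getting_the_correct_preimage}. For the first item, note that the parties output anything other than $(\true,\true)$ only if the test in \stepref{proto:set_intersection_reduction:accept} fires, i.e.\ if $a_i = b_i = 1$ for the index $i = x - d \bmod n$ derived from $\decoder$'s output. When $(a,b)\in\cQ^0$, i.e.\ $a\cap b=\emptyset$, no such index exists, so regardless of the (possibly even ill-behaved) output of $\decoder$, both parties output $\true$ with probability $1$.

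For the second item, fix $(a,b)\in\cQ^1$, let $i^\ast$ be the unique index in $a\cap b$, and write $j^\ast \eqdef i^\ast + d \bmod n$. The first step is to identify the distribution of the (implicitly defined) function $f = f_\Ac + f_\Bc$. I would go through the four cases for $(a_i,b_i)$: when $a_i=b_i=0$ the value $f(i+d\bmod n)$ is the sum of a uniform element of $\cK$ with an independent value, hence uniform; when exactly one of $a_i,b_i$ equals $1$, $f(i+d\bmod n)$ is a uniform element of $\cK$ translated by a fixed group element ($0$ or $y$), hence again uniform; and when $a_i=b_i=1$ — which happens only for $i=i^\ast$ — we get $f(j^\ast)=0+y=y$. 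Since all these choices are independent and $y\gets[n]$, $d\gets[n]$ are uniform and independent of the remaining coins, the pair $(j^\ast,f)$ is distributed exactly as $(x,f)$ for independent $x\gets[n]$ and $f\gets\FFam$, with $y=f(x)$.

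The second step is to check that $\prtclB$'s emulation is faithful: by linearity of $\preprocessor$ the value $c$ computed in \stepref{proto:set_intersection_reduction:1:5} equals $\preprocessor(f)$, and the query forwarding in \stepref{proto:set_intersection_reduction:2} gives $\decoder$ genuine oracle access to $f$, so $\decoder$'s output in the emulation equals $\decoder^f(y,\preprocessor(f)) = \Inv(f(x);f)$. Since by hypothesis $\Inv$ returns a preimage of $f(x)$ with probability at least $\alpha$, \cref{claim:getting_the_correct_preimage}, applied to the deterministic map $(f,x)\mapsto\Inv(f(x);f)$, yields that this output equals $x=j^\ast$ with probability at least $\alpha^2/8$. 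Whenever that happens, $i = j^\ast - d \bmod n = i^\ast$, so $a_{i^\ast}=b_{i^\ast}=1$, $\prtclA$ outputs $\false$ and informs $\prtclB$, and both parties output $\false$; hence $\pr{(\prtclA(a),\prtclB(b))=(\false,\false)}\ge\alpha^2/8$.

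The main obstacle is the mismatch between what the inverter guarantees (\emph{some} preimage of the challenge) and what the reduction needs (the particular preimage $j^\ast$ encoding the intersection index); this gap is exactly what \cref{claim:getting_the_correct_preimage} closes, and it relies crucially on the synthetic $f$ being a genuinely uniform random function — which is why the case analysis in the first step must verify not merely that $f(j^\ast)=y$ but that every other coordinate of $f$ is uniform and independent, so that the inverter's average-case success guarantee applies verbatim. Everything else is routine bookkeeping about the protocol's control flow and the linearity/forwarding identities.
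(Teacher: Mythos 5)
Your proof is correct and follows essentially the same route as the paper's: verify that the test in \stepref{proto:set_intersection_reduction:accept} can never fire when the sets are disjoint, show that $(j^\ast, f, y)$ is distributed exactly as $(x, f, f(x))$ for independent uniform $x$ and $f$ (the paper does the same one-time-pad argument, just phrased coordinate-wise rather than as a four-case table), and then invoke \cref{claim:getting_the_correct_preimage} to upgrade ``inverts $y$'' to ``outputs the specific preimage $j^\ast$.'' The only difference is that you make the faithfulness of $\prtclB$'s emulation explicit, which the paper leaves implicit; nothing is missing.
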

\begin{proof}
	By construction, it is clear that $\Pi$ always accepts (the parties output  True)  on inputs $(a,b)\in \cQ^0$.  Fix $(a,b)\in \cQ^1$, and let $Y, D, F, F_\Ac, F_\Bc$ and $I$ be the values of $y,d, f, f_\Ac,f_\Bc$ and $i$ respectively, in a random execution of $(\prtclA(a),\prtclB(b))$.  By construction,  $F(j) = F_\Ac(j) + F_\Bc(j)$ for all $j\in [n]$.  For $j$ not in the intersection,  either $F_\Ac(j) $ or $F_\Bc(j)$ is chosen uniformly at random by one of the parties, and therefore $F(j)$ is uniformly distributed and independent of all other outputs. For the intersection element $w$, it holds that $F(w)=y$, which  is uniform, and since there is exactly one intersection, is  independent from all other outputs.

	Let  $W \eqdef w +D \mod n$. Note that $W$ is uniformly distributed over $[n]$ and is independent of $F$. Also note that, by construction,  $Y = F(W)$. Therefore, $(F,W,Y)$ is distributed exactly as $(F, X, F(X))$ for $X\gets [n]$. Hence, the assumption on $\Inv$ yields that
	\begin{align*}
	\pr{ \Inv(Y ; F)  \in F^{-1}(Y)} \ge \alpha
	\end{align*}
	and by \cref{claim:getting_the_correct_preimage},
	\begin{align*}
	\pr{ \Inv(Y ; F) = W } \ge {\alpha^2}/{8}.
	\end{align*}
	Therefore, both parties output False with probability at least ${\alpha^2}/{8}$.
\end{proof}

\paragraph{Proving  \cref{thm:linear_encoding:inverter_solves_set_disjointness}} 
We now use  \cref{clm:set_intersection_reduction:CC,clm:linear_encoding:one_round_success_probability} to prove \cref{thm:linear_encoding:inverter_solves_set_disjointness}.

\begin{proof}[Proof of  \cref{thm:linear_encoding:inverter_solves_set_disjointness}]
Let $t= \ceil{ \frac{\log(\varepsilon)}{\log(1-\alpha^2/8)}}$, and consider the protocol $\Pi^t$, in which on input $(a,b)$ the parties interact in protocol $\Pi$ for $t$  times, and accept only if they do so in  \emph{all} iterations. By \cref{clm:set_intersection_reduction:CC,clm:linear_encoding:one_round_success_probability}, the communication complexity and success probability   of $\Pi^t$ in solving set disjointness over $\cQ$ match the theorem statement.
\end{proof}

\subsection{Additive-advice Inverters}\label{subsec:AdditiveAdvice}

The following result generalizes \cref{linear_encoding:corollary_cant_invert_well} by replacing the  restriction on the decoder (\eg linear and short output) with the ability to compute the advice string of $f_1 + f_2$  by a low-communication protocol over the inputs $f_1$ and $f_2$.

\begin{theorem}[Bound on additive-advice inverters]\label{thm:linear_encoding:additive_advice}
	Let $\Inv=(\preprocessor,\decoder)$ be an $\numqry$-query inversion algorithm such that $\ppr{\stackrel{f\gets  \FFam}{x\gets [n]} }{\Inv(f(x);f) \in f^{-1}(f(x))} \ge \alpha$. Assume exists a two-party protocol $(\Pc_1, \Pc_2)$ with communication complexity $k$ such that for every $f_1,f_2\in \FFam$,
	the output of $\Pc_2$ in $(\Pc_1(f_1), \Pc_2(f_2))$ equals to $\preprocessor(f_1+f_2)$  with probability at least $1-\gamma$  for some $\gamma \ge 0$, letting $f_1+f_2$ be according to  \cref{def:linearpreprocessing}. 
	Then ${k + \numqry \log n \in \Omega(\alpha^2 (1-\gamma)\cdot n)}$.
\end{theorem}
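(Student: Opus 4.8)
The plan is to mimic the reduction underlying \cref{thm:linear_encoding:inverter_solves_set_disjointness}, modifying \cref{proto:set_intersection_reduction} only at the one point where the linearity of $\preprocessor$ was exploited — namely \stepref{proto:set_intersection_reduction:1} and \stepref{proto:set_intersection_reduction:1:5}, in which $\prtclA$ transmits $\preprocessor(f_\Ac)$ and $\prtclB$ reconstructs $\preprocessor(f) = \preprocessor(f_\Ac) +_\cG \preprocessor(f_\Bc)$. I would replace these two steps by a single sub-routine in which $\prtclA$ and $\prtclB$ run the assumed protocol $(\Pc_1(f_\Ac), \Pc_2(f_\Bc))$, with $\prtclA$ playing $\Pc_1$ and $\prtclB$ playing $\Pc_2$, after which $\prtclB$ sets $c$ to be the output of $\Pc_2$. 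Reading $f_\Ac + f_\Bc$ as the coordinate-wise sum of \cref{def:linearpreprocessing}, the hypothesis guarantees $c = \preprocessor(f_\Ac + f_\Bc) = \preprocessor(f)$ except with probability $\gamma$. Call the resulting protocol $\Pi'$; every other step — in particular the decoder emulation of \stepref{proto:set_intersection_reduction:2}, where a query $r$ is still answered by $f_\Ac(r)+f_\Bc(r)$ — is left unchanged. The replaced sub-routine costs $k$ bits rather than $s$, so the accounting of \cref{clm:set_intersection_reduction:CC} gives $\CC(\Pi') \le k + 2q(\log n + 1) + \log n + 3$.

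Next I would re-run the success analysis of \cref{clm:linear_encoding:one_round_success_probability}. First, one-sidedness survives a faulty value of $c$: on an input $(a,b)\in\cQ^0$ there is no index $i$ with $a_i = b_i = 1$, so the check of \stepref{proto:set_intersection_reduction:accept} never fires and both parties output $\true$ with probability $1$, regardless of the decoder's output or the correctness of $c$. For $(a,b)\in\cQ^1$, I would condition on the event that $\Pc_2$ returns the correct value $c = \preprocessor(F)$. This is legitimate because the hypothesis is pointwise in $(f_\Ac,f_\Bc)$: for every fixed realization of $(y,d,f_\Ac,f_\Bc)$ the event has probability $\ge 1-\gamma$ over the internal randomness of $(\Pc_1,\Pc_2)$ alone, and conditioned on it the decoder emulation outputs the deterministic quantity $\decoder^{f}(y,\preprocessor(f)) = \Inv(y;f)$. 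Writing $W$ for the shifted intersection element as in \cref{clm:linear_encoding:one_round_success_probability} (so that $(F,W,Y)$ is distributed as $(F,X,F(X))$ for $X\gets[n]$), we obtain $\Pr[\text{both parties output }\false] \ge \Pr[c = \preprocessor(F) \wedge \Inv(Y;F) = W] \ge (1-\gamma)\cdot\Pr[\Inv(Y;F)=W] \ge (1-\gamma)\alpha^2/8$, where the middle step uses the pointwise $\ge 1-\gamma$ guarantee and the last is \cref{claim:getting_the_correct_preimage}, exactly as in the linear-advice case.

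Finally, as in the proof of \cref{thm:linear_encoding:inverter_solves_set_disjointness}, I would run $\Pi'$ independently $t = \ceil{\log(\varepsilon)/\log(1-(1-\gamma)\alpha^2/8)}$ times, accepting iff every run accepts; by the two points above this solves set disjointness over $\cQ = \set{\cX,\cY\subseteq[n] : \size{\cX\cap\cY}\le 1}$ with one-sided error $\varepsilon$ and communication $O(t\cdot(k + q\log n))$. Applying \cref{thm:razborov} gives $t\cdot(k+q\log n)\in\Omega(n)$, and since $\log(1/(1-(1-\gamma)\alpha^2/8)) \ge (1-\gamma)\alpha^2/8$ by \cref{pre:log_bound}, the claimed bound $k + q\log n \in \Omega((1-\gamma)\alpha^2\cdot n)$ follows, precisely as \cref{linear_encoding:corollary_cant_invert_well} was derived from \cref{thm:linear_encoding:inverter_solves_set_disjointness,thm:razborov}.

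There is no new conceptual ingredient here: this is a routine strengthening of \cref{linear_encoding:corollary_cant_invert_well}, in which the linear preprocessing is invoked only as a black box through a low-communication protocol computing $\preprocessor(f_1+f_2)$. The one place that needs care — and which I expect to be the only real (and minor) obstacle — is the bookkeeping around the faulty advice: verifying that a wrong $c$ cannot convert a disjoint instance into a reported intersection (handled by the structure of \stepref{proto:set_intersection_reduction:accept}), and that the $\ge 1-\gamma$ correctness guarantee, being pointwise in $(f_\Ac,f_\Bc)$, can be cleanly pulled out of the decoder's success probability without disturbing the distribution of $(F,W,Y)$.
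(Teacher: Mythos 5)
Your proposal is correct and follows essentially the same route as the paper: replace the two advice-transmission steps of \cref{proto:set_intersection_reduction} by an execution of $(\Pc_1(f_\Ac),\Pc_2(f_\Bc))$, observe that the communication becomes $k+2q(\log n+1)+\log n+3$, that one-sidedness on disjoint inputs is unaffected, and that the success probability on intersecting inputs drops by the factor $(1-\gamma)$, then conclude via repetition and \cref{thm:razborov}. Your write-up is in fact more detailed than the paper's (which states these modifications without spelling out the conditioning argument), and the details you supply are sound.
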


\begin{proof}
The proof follows almost the  exact same lines as that of \cref{thm:linear_encoding:inverter_solves_set_disjointness}, with the following changes: first, steps $4.$ and $5.$ in \cref{proto:set_intersection_reduction} are replaced by the parties $\Ac$ and $\Bc$ interacting in $(\Pc_1(f_\Ac), \Pc_2(f_\Bc))$,  resulting in $\Bc$ outputting  $\preprocessor(f_\Ac+f_\Bc)$  (thus, transmitting a total of $k+{2q(\log n+1)}+\log n + 3 \in O(k+q\log n)$ bits over the entire execution of the protocol). Second,  note that due to the constant failure probability of $(\Pc_1, \Pc_2)$ in  computing $\preprocessor(f_\Ac+f_\Bc)$, the success probability of each execution of the protocol is now lowered by a constant factor $(1-\gamma)$. This means that the rate of success when $\cX \cap \cY \neq \emptyset$ is now bounded from below by only $\al^2(1-\gamma)/8$ (rather than $\al^2/8$). The rest of the analysis is  identical to that of \cref{thm:linear_encoding:inverter_solves_set_disjointness}.

\end{proof}

\newcommand{\gx}{ \cG_\cX }
\newcommand{\gy}{ \cG_\cY }
\newcommand{\ilow}{{m^\ast}}
\newcommand{\maxithm}{m}
\newcommand{\mMH}{\mM^{i-1}}
\newcommand{\mMP}{\mM}
\newcommand{\vH}{V^{i-1}}
\newcommand{\vP}{V}

\section{Non-adaptive Inverters}\label{sec:nonAdaptInverters}
In this section we present  lower bounds on the time/memory tradeoff of non-adaptive function inverters.  In   \cref{sec:AffineDecoders}, we present a bound for non-adaptive affine decoders, and in  \cref{sec:DecisionTree} we extend this bound to  non-adaptive  affine decision trees.  To simplify notation, the following definitions and results are stated \wrt some fixed $n\in \N$, for which there exists a finite field of size $n$ which we  denote  by $\fld$.  Let $\FFam$ be the set of all functions from $[n]$ to $[n]$. All asymptotic notations (\eg $\Theta$) hide constant terms that are independent of $n$. We start by formally defining non-adaptive function inverters. 
\begin{definition}[Non-adaptive inverters]\label{def:nonAdaptInverters}
	An  {\sf $s$-advice $q$-query   non-adaptive  inverter} is a deterministic algorithm  triplet of the form $\Inv\eqdef \triplet$, where $
	\preprocessor\colon\FFam \to \str{s}$,
	$\queries\colon [n] \times \str{s} \to [n]^\numqry$, and 
	$\decoder \colon[n] \times \str{s} \times [n]^\numqry \to [n]$. For $f\in \FFam$ and $y\in [n]$, let 
	$$\Inv(y;f) \eqdef  \algOutputfa{y}{\preprocessor(f)}.$$
\end{definition}

That is,  $\preprocessor$ is  the \emph{preprocessing} algorithm.  It  takes the function description as input and outputs a string of length $s$, to which  we refer as the \emph{advice} string. In the case that $s=0$, we say that  $\Inv$ has  \emph{zero-advice}, and omit  $\preprocessor$ from the notation.  Algorithm $\queries$ is the \emph{query selection} algorithm.  It chooses  the queries according to the element to invert $y$ and the advice string, and  outputs $q$ indices, to which  we refer as the \emph{queries}.  Algorithm $\decoder$ is the \emph{decoder} algorithm that performs the actual inversion. It receives the element $y$,  the advice string and the function's answers to the (non-adaptive) queries selected by $\queries$ (the query indices themselves may be deduced from $y$ and the advice),  and attempts to output a preimage of $y$. Finally, $\Inv(y;f)$ is the  candidate preimage of $y$ produced by the algorithms of $\Inv$ given the (restricted) access to $f$.

\subsection{Affine Decoders}\label{sec:AffineDecoders}
In this section we present our bound for non-adaptive affine decoders,  defined as follows:

\begin{definition}[Affine decoder]\label{def:AffineDecoders}
A non-adaptive 	inverter $\Inv\eqdef \triplet$ has an {\sf   affine decoder}, if for every  $y\in [n]$ and  $a\in \zo^s$ there exists  a $\numqry$-sparse vector $\al_y^a\in\textbf{\fld}^n$ and a field element $\be_y^a\in\fld$, such that for every $f\in \FFam$:~~ $\decoder(y,a,f(\queries(y,a)))= \inrprd{\al_y^a, f}+ \be_y^a$.

\end{definition}
The following theorem bounds the probability, over a random function $f$, that a  non-adaptive inverter with an affine decoder inverts  a random output of $f$ with probability $\tau$.
\begin{theorem}\label{thm:AffineDecoders}
	Let $ \Inv = \triplet $ be an $s$-advice non-adaptive inverter with an affine decoder and  let  $\tau \in [0,1]$. Then for every  $\delta \in [0,1]$ and  $m \le  n/16$, it holds that 
	\begin{align*}
	\ppr{ f\gets  \FFam}{\ppr{\stackrel{x\gets [n]}{y=f(x)} }{\Inv(y;f) \in f^{-1}(y)} \ge \tau}
	\le
	 \al_{\tau,\delta} + 2^s\cdot  \delta^{-\maxithm} \cdot\prod_{j=1}^{\maxithm} \left( \frac{2 j }{n} + \max \set{ \sqrt[4]{ 1/  n}, \frac{4j }{ n  }}
	\right)
	\end{align*}

	for $\al_{\tau,\delta} \eqdef \ppr{f \from \FFam}{ \exists \tau n\text{-size set } \cX \subset[n]\colon \size{f(\cX)} \le \delta  n }$.
\end{theorem}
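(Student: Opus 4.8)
The plan is to fix the advice string, argue that for \emph{that} advice the affine decoder is so rigid that it can correctly invert more than $\delta n$ image points only with exponentially small probability over $f$, and then union‑bound over the $2^s$ possible advice strings; the additive term $\al_{\tau,\delta}$ absorbs exactly the functions whose image already compresses. Concretely, for $a\in\str{s}$ write $g_y^a(f)\eqdef\inrprd{\al_y^a,f}+\be_y^a$ for the decoder's candidate preimage of the image $y$ (so $g_y^a(f)=\Inv(y;f)$ once $a=\preprocessor(f)$), and set $\mathcal{G}_a(f)\eqdef\set{y\in[n]\colon f(g_y^a(f))=y}$, the images on which advice $a$ gives a correct inversion. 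If the inner success probability is at least $\tau$, then $\mathcal{X}^\ast\eqdef\set{x\colon\Inv(f(x);f)\in f^{-1}(f(x))}$ has size at least $\tau n$; taking any $\tau n$-size $\mathcal{X}\subseteq\mathcal{X}^\ast$ one gets $f(\mathcal{X})\subseteq\mathcal{G}_{\preprocessor(f)}(f)$, so off the event defining $\al_{\tau,\delta}$ we get $\abs{\mathcal{G}_{\preprocessor(f)}(f)}\ge\abs{f(\mathcal{X})}>\delta n$. Hence
\[
\ppr{f\gets\FFam}{\text{(inner success probability)}\ge\tau}\ \le\ \al_{\tau,\delta}+\sum_{a\in\str{s}}\ppr{f\gets\FFam}{\abs{\mathcal{G}_a(f)}>\delta n},
\]
and it remains to bound each summand by $\delta^{-m}\prod_{j=1}^{m}\bigl(\tfrac{2j}{n}+\max\set{n^{-1/4},\tfrac{4j}{n}}\bigr)$.

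Fix $a$. I would control the $m$-th factorial moment of $\abs{\mathcal{G}_a(f)}$: since $\binom{\cdot}{m}$ is nondecreasing on $\N$, $\ppr{f}{\abs{\mathcal{G}_a(f)}>\delta n}\le\ex{f\gets\FFam}{\binom{\abs{\mathcal{G}_a(f)}}{m}}\big/\binom{\lceil\delta n\rceil}{m}$ (the claimed bound exceeds $1$, hence is trivial, when $\delta n<m$). Expanding the factorial moment as a sum over ordered $m$-tuples of distinct images and applying the chain rule,
\[
\ex{f}{\binom{\abs{\mathcal{G}_a(f)}}{m}}\ =\ \frac1{m!}\sum_{y_1,\dots,y_m\ \mathrm{distinct}}\ \prod_{j=1}^{m}\ppr{f}{f(g_{y_j}^a(f))=y_j\ \middle|\ \forall i<j\colon f(g_{y_i}^a(f))=y_i},
\]
so everything reduces to a per‑round estimate: for every $j\le m$ and all distinct $y_1,\dots,y_{j-1}$,
\[
\sum_{y\in[n]}\ppr{f}{f(g_y^a(f))=y\ \middle|\ \forall i<j\colon f(g_{y_i}^a(f))=y_i}\ \le\ j+\max\set{n^{3/4}/2,\ 2j},
\]
after which one chains, uses $\binom{\lceil\delta n\rceil}{m}\ge(\delta n/m)^m$, and does routine bookkeeping to land on the stated product.

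The per‑round estimate is the crux, and it is precisely what the linear‑algebra preliminaries are built for. The conditioning event is not linear, but it is the disjoint union, over the tuples $\vec w=(w_1,\dots,w_{j-1})$ of possible values of $(g_{y_1}^a(f),\dots,g_{y_{j-1}}^a(f))$, of the \emph{linear} events $\set{Mf=v}$ where $M$ stacks the at most $2(j-1)$ rows $\al_{y_i}^a$ (value $w_i-\be_{y_i}^a$) and $e_{w_i}$ (value $y_i$); since $\sum_{y}\ppr{f}{f(g_y^a(f))=y\mid\text{(conditioning)}}$ is then a convex combination over $\vec w$ of $\sum_{y}\ppr{f}{f(g_y^a(f))=y\mid Mf=v}$, it suffices to bound the latter for a single consistent system with $\le 2(j-1)<n/8$ rows (the hypothesis $m\le n/16$ keeps every rank safely below $n$). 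Split the images: (i) if $\al_y^a\in\Span(M)$ then $g_y^a(f)$ is forced to a value $w_y$ determined by $v$, and $\ppr{f}{f(w_y)=y\mid Mf=v}$ equals $1/n$ whenever $w_y\notin\Cuv(M)$ by \cref{clm:known_unit_vectors} (these contribute $\le 1$ altogether) and is a forced $0/1$ value when $w_y\in\Cuv(M)$, with at most one good $y$ per element of $\Cuv(M)$ — so this part contributes $\le 1+\abs{\Cuv(M)}\le 1+2(j-1)$; in the version that matches the bound stated here this forced‑good count is instead controlled through the conditioned random‑function compression bound \cref{not_many_good_indices_conditioned} applied to the $\Cuv$-determined candidate‑preimage sets, and it is that coarser route that injects the $n^{-1/4}$ slack. (ii) If $\al_y^a\notin\Span(M)$ then appending $\al_y^a$ and then $e_w$ to $M$ increases the rank by at least one, so by \cref{clm:known_unit_vectors}, and since adding rows grows $\Cuv$ by at most the number of new rows (\cref{existence_general_specific_spans}, giving $\abs{\Cuv(\stack{M}{\al_y^a})}\le 2j-1$), $\ppr{f}{f(g_y^a(f))=y\mid Mf=v}\le(\abs{\Cuv(\stack{M}{\al_y^a})}+1)/n\le 2j/n$, contributing $\le 2j$ over all $y$. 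This last point — bounding, uniformly over the previous $j-1$ successes, the number of images whose forced candidate preimage happens to land correctly — is the only genuinely delicate step; the reduction, the moment inequality, the chain rule, and the closing bookkeeping are all routine.
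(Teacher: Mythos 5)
Your overall architecture is the paper's: fix the advice string, union-bound over the $2^s$ choices, use $\al_{\tau,\delta}$ to pass from ``a $\tau$ fraction of inputs'' to ``more than a $\delta$ fraction of images'', and control the number of correctly inverted images by a moment/chaining argument whose per-round step analyzes the linear conditioning system $Mf=v$ built from the rows $\al_{y_i}^a$ and $e_{w_i}$. Your per-round analysis is sound and in one respect cleaner than the paper's: the paper handles the single \emph{new} forced coordinate via the conditioned compression bound \cref{not_many_good_indices_conditioned} (which is where its $\mu$ and the $2^{(\cdots)}$ error term come from), whereas your case (ii) exploits that $\inrprd{\al_y^a,f}$ is uniform given $Mf=v$ when $\al_y^a\notin\Span(M)$ and gets a clean $2j/n$ per image. (Your stated per-round total $j+\max\set{n^{3/4}/2,2j}$ does not match your own accounting, which gives $(2j-1)+2j=4j-1$; this slip happens to be harmless, since $\frac{4j-1}{n}\le\frac{2j}{n}+\max\set{n^{-1/4},\frac{4j}{n}}$ in both regimes of the max.)

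The genuine gap is in the normalization of the moment method. Bounding $\ppr{f\gets\FFam}{|\mathcal{G}_a(f)|>\delta n}$ by $\ex{f}{\binom{|\mathcal{G}_a(f)|}{m}}/\binom{\lceil\delta n\rceil}{m}$ and chaining over \emph{distinct} tuples yields $\prod_{j=1}^m B_j\,/\,\prod_{k=0}^{m-1}(\lceil\delta n\rceil-k)$, where $B_j$ is your per-round sum. Even granting $B_j\le 4j-1$, your inequality $\binom{\lceil\delta n\rceil}{m}\ge(\delta n/m)^m$ turns this into $\frac{m^m}{m!}\cdot\delta^{-m}\prod_j\frac{B_j}{n}$, i.e.\ an extra factor of roughly $e$ per round that the stated factors $\frac{2j}{n}+\max\set{n^{-1/4},\frac{4j}{n}}$ cannot absorb (one would need roughly $e(4j-1)\le 6j$, which fails for $j\ge 1$); the loss is unbounded when $\delta n$ is close to $m$, and the fallback that the claimed bound is trivially $\ge 1$ when $\delta n<m$ is also not right, since $\delta^{-m}\prod_j\frac{2j}{n}$ can be as small as roughly $(2/e)^m$. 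So the ``routine bookkeeping'' does not land on the stated product. The fix --- and what the paper does --- is to drop distinctness: take $Y_1,\ldots,Y_m$ i.i.d.\ uniform (repeats allowed), so that $\pr{\forall j\colon Y_j\in\mathcal{G}_a(F)}=\ex{f}{(|\mathcal{G}_a(f)|/n)^m}\ge\delta^m\cdot\ppr{f}{|\mathcal{G}_a(f)|\ge\delta n}$ with no combinatorial loss; your per-round estimate never used distinctness, so the same bound $B_j/n\le\frac{2j}{n}+\max\set{n^{-1/4},\frac{4j}{n}}$ applies to each conditional factor and the stated constants follow.
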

While it is not easy to see what is the best choice, per $\tau$, of the parameters   $\delta$ and $\maxithm$ above,    the following corollary (proven in \cref{sec:AffineDecoders:corollary_proof})  exemplifies the usability of \cref{thm:AffineDecoders}  by considering the consequences of  such a choice.

\begin{corollary}[\cref{thm:intro:AffineDecoders}, restated]\label{cor:AffineDecoders}
	Let $\Inv$ be as in \cref{thm:AffineDecoders}, let $\tau\geq 2 \cdot n^{-1/8}$ and assume  
	
  $\ppr{ f\gets  \FFam}{\ppr{\stackrel{x\gets [n]}{y=f(x)} }{\Inv(y;f) \in f^{-1}(y)} \ge \tau} \ge \nfrac 1 2$, then $ s \in \Omega( \tau^2 \cdot n)$.~\footnote{The constant $1/2$ lower bounding the probability is arbitrary.} 
\end{corollary}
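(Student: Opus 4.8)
The plan is to derive the corollary from Theorem~\ref{thm:AffineDecoders} by choosing the free parameters $\delta$ and $\maxithm$ well as functions of $\tau$ and $n$, and arguing that the right-hand side of the theorem must be $>1/2$ unless $s\in\Omega(\tau^2 n)$. First I would handle the term $\al_{\tau,\delta}$: by Claim~\ref{pre:tau_delta_bound} we have $\al_{\tau,\delta}\le 2^{n(h(\tau)+h(\delta))+\lfloor\tau n\rfloor\log\delta}$, and I want this to be, say, at most $1/4$. Since $\log\delta<0$, the dominant (negative) contribution is $\lfloor\tau n\rfloor\log\delta\approx \tau n\log\delta$, which I need to beat the two entropy terms $n(h(\tau)+h(\delta))$. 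Using $h(\delta)\le -2\delta\log\delta$ (Fact~\ref{pre:binary_entropy_bound}), a choice like $\delta = \tau/c$ for a suitable constant $c$ (or more conservatively $\delta$ polynomially smaller than $\tau$) makes $\lfloor\tau n\rfloor\log\delta$ strictly dominate $n h(\delta)$; the remaining $nh(\tau)$ term is handled because $h(\tau)\le -2\tau\log\tau$ is small compared to $\tau n\log(1/\delta)$ once $\delta$ is a small enough constant multiple of $\tau$. The constraint $\tau\ge 2n^{-1/8}$ guarantees $\tau$ is not too tiny, so these estimates remain meaningful; I'd carry out this calculation to pin down the precise constant and confirm $\al_{\tau,\delta}\le 1/4$.

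Next I would bound the product term
$2^s\cdot \delta^{-\maxithm}\cdot\prod_{j=1}^{\maxithm}\bigl(\tfrac{2j}{n}+\max\{\sqrt[4]{1/n},\tfrac{4j}{n}\}\bigr)$.
The key observation is that each factor in the product is at most $\tfrac{6j}{n}+\sqrt[4]{1/n}$, and as long as $j\le \maxithm$ with $\maxithm$ chosen to be (a constant times) $\tau n$ — note the hypothesis $\maxithm\le n/16$ is compatible with this since we may take $\maxithm=\Theta(\tau n)$ and $\tau\le 1$, shrinking the constant if needed — each factor is at most $O(j/n)\le O(\tau)$. So the whole product is at most $(C\tau)^{\maxithm}$ for an absolute constant $C$ (here I should be slightly careful: when $j$ is small the $\sqrt[4]{1/n}$ term could dominate $j/n$, but since $\tau\ge 2n^{-1/8}$ we have $\sqrt[4]{1/n}=n^{-1/4}\le \tau^2/4\le \tau$, so each factor is still $O(\tau)$). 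Combining with $\delta^{-\maxithm}=(c/\tau)^{\maxithm}$ from the previous paragraph, the product term is at most $2^s\cdot (C'c)^{\maxithm}$, which with an appropriate relationship between the constants $c$ and $C'$ (choosing $\delta$ small enough, i.e. $c$ large) becomes $2^s\cdot 2^{-\Omega(\maxithm)}=2^{s-\Omega(\tau n)}$. For this to be $\le 1/4$ we need $s\le \Omega(\tau n)-2$, i.e. if the success probability is $\ge 1/2$ then $s\in\Omega(\tau n)$.

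Finally I would assemble: if $\ppr{f}{\cdots\ge\tau}\ge 1/2$, then Theorem~\ref{thm:AffineDecoders} forces $\tfrac12\le \al_{\tau,\delta}+2^{s-\Omega(\tau n)}$; the first term is $\le 1/4$ by the parameter choice, so $2^{s-\Omega(\tau n)}\ge 1/4$, giving $s\in\Omega(\tau n)$. I should remark that the corollary states $s\in\Omega(\tau^2 n)$, which is weaker than the $\Omega(\tau n)$ I expect to get; the $\tau^2$ likely comes from being less aggressive in the choice of $\delta$ (taking $\delta$ polynomially rather than linearly related to $\tau$, which costs a factor of $\tau$ in the exponent) — I would follow whichever choice makes the $\al_{\tau,\delta}$ bound cleanest, and $\tau^2 n$ is a safe target. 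The main obstacle I anticipate is the bookkeeping in the first paragraph: verifying that for the chosen $\delta$ the exponent $n(h(\tau)+h(\delta))+\lfloor\tau n\rfloor\log\delta$ is genuinely negative and bounded away from $0$ uniformly for all $\tau\ge 2n^{-1/8}$, since $h(\tau)$ and $\log\delta$ interact and one must be careful that the $nh(\tau)$ term (which is $\Theta(\tau n\log(1/\tau))$, carrying an extra $\log(1/\tau)$ factor) does not overwhelm $\tau n\log(1/\delta)$ — this is exactly why $\delta$ must be taken polynomially smaller than $\tau$, forcing the $\tau^2$ loss and explaining the corollary's stated bound.
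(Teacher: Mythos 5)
Your proposal follows essentially the same route as the paper's proof: apply \cref{thm:AffineDecoders} with $\delta$ polynomially smaller than $\tau$ and $m$ a constant fraction of $\delta n$, kill the $\al_{\tau,\delta}$ term via \cref{pre:tau_delta_bound}, and make each factor of the product at most $\delta/2$ so that $\delta^{-m}\prod(\cdot)\le 2^{-m}$, forcing $s\ge m-O(1)$. The paper's concrete choice is $\delta=\tau^2$ and $m=\delta n/16$, exactly the fallback you correctly identify at the end (your initial $\delta=\tau/c$ indeed fails at the $nh(\tau)$ term, for the reason you state), so the argument and the resulting $\Omega(\tau^2 n)$ bound coincide with the paper's.
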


 Our key step towards proving \cref{thm:AffineDecoders} is showing that even when conditioned  on the (unlikely) event that a zero-advice inverter successfully inverts $i-1$ random elements, the probability the inverter successfully inverts the next element is  still low. To formulate the above statement,  we define the following jointly distributed random variables: let $F$ be uniformly distributed over $\FFam$ and let  $Y= (Y_1,...,Y_n)$ be  a uniform vector over $[n]^n$. For  a zero-advice inverter,  we  define the following random variables (jointly distributed with $F$ and $Y$). 
\begin{notation}\label{notation:XZ}
For a  zero-advice inverter  $\InvB$, let  $X_i^\InvB \eqdef \InvB(Y_i;F)$, let $Z_i^\InvB$ be  the event  $\bigwedge_{j\in [i]} \set{F(X_j^\InvB)=Y_j}$, and let $X^\InvB = (X_1^\InvB,\ldots,X_n^\InvB)$.

\end{notation}
That is,  $X^\InvB_i$ is $\InvB$'s answers to the challenges $Y_i$,    and $Z^\InvB_i$ indicates whether $\InvB$ successfully answered each of the first $i$  challenges. 
Given the above notation, our main lemma is stated as follows:
\begin{lemma}\label{lem:AffineDecoders}
Let $\InvB$  be a zero-advice,  non-adaptive inverter with affine decoder  and let $Z^\InvB$ be as in \cref{notation:XZ}. Then for every $i\in [n]$ and $\mu \in [0,\half]$:
	\begin{align*}
	\pr{Z_{\maxi}^{\InvB} \mid Z_{\maxi-1}^{\InvB} } 	\le
	\frac{2i-1}{n} + \mu
	+
	2^{2\lceil \mu n\rceil\log(1/\mu) - \lceil \mu n\rceil \log (n) + (2i-2)\log n}.
\end{align*}
\end{lemma}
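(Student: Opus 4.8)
The statement to prove is \cref{lem:AffineDecoders}: conditioned on the (unlikely) event $Z_{i-1}^\InvB$ that a zero-advice non-adaptive affine-decoder inverter has already succeeded on the first $i-1$ random challenges, the probability that it also succeeds on the $i$-th challenge $Y_i$ is small. The plan is to analyze the structure of the information the inverter has extracted after conditioning. Each successful inversion of $Y_j$ means the affine decoder output $\inrprd{\al_{Y_j}, F} + \be_{Y_j}$ is a preimage $X_j$ of $Y_j$; moreover, since $F$ is a random function and $\InvB$ inverts, \cref{claim:getting_the_correct_preimage}-style reasoning suggests the inverter is very likely returning the "canonical" preimage, but here we do not even need that — we will argue directly. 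The key observation: conditioning on $Z_{i-1}^\InvB$ is conditioning on roughly $O(i)$ linear constraints on $F$ together with (a few) direct queries $F(\queries(Y_j,\cdot))$. By \cref{pre:solution_set_size} and the machinery of \cref{sec:prelim:Linear} (the spanned-unit-vectors set $\Cuv$, \cref{clm:known_unit_vectors}, \cref{existence_general_specific_spans}), the posterior distribution of $F$ given these constraints is close to a product distribution on the coordinates not "fixed" by the constraints, and only few coordinates can be fixed.

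**Main steps.** First I would unpack what $Z_{i-1}^\InvB$ says as a collection of linear equations: for each $j < i$, success means $F$ satisfies $F_{X_j} = Y_j$ where $X_j = \inrprd{\al_{Y_j},F} + \be_{Y_j}$ — this is \emph{not} itself linear in $F$ because the index $X_j$ depends on $F$. The standard fix is to case over the value $x_j$ of each $X_j$: for a fixed choice of $(x_1,\dots,x_{i-1})$, the event "$X_j = x_j$ for all $j$ and $F(x_j) = Y_j$ for all $j$" is the conjunction of the linear system $\inrprd{\al_{Y_j},F} = x_j - \be_{Y_j}$ (that's $i-1$ affine equations) and the point constraints $F_{x_j} = Y_j$ (another $i-1$ equations, i.e. $i-1$ unit-vector rows $e_{x_j}$). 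So conditioned on this refined event, $F$ is uniform on the solution set of a system with at most $2(i-1)$ rows. Next, by \cref{clm:known_unit_vectors} and \cref{existence_general_specific_spans}, for the new challenge $Y_i$ — which is independent and uniform — the value $F(X_i)$ where $X_i = \inrprd{\al_{Y_i},F} + \be_{Y_i}$ lands in $f^{-1}(Y_i)$ only if either (a) $Y_i$ happens to be the image of one of the $O(i)$ already-queried-or-fixed coordinates (a "bad index" event, controlled by a counting/\cref{not_many_good_indices}-type bound, accounting for the $\mu + 2^{2\lceil\mu n\rceil\log(1/\mu) - \lceil\mu n\rceil\log n + \cdots}$ term), or (b) $X_i$ hits one of the $\le i-1$ preimages $x_j$ already forced, contributing the $\frac{i-1}{n}$ part, or (c) the generic case, where by \cref{clm:known_unit_vectors} the coordinate $F(X_i)$ is uniform over $[n]$ conditioned on everything, so it equals $Y_i$ with probability $1/n$ — and there are about $i$ fresh preimages to possibly hit, giving the other $\frac{i}{n}$. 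Summing, $\frac{2i-1}{n} + \mu + 2^{2\lceil\mu n\rceil\log(1/\mu) - \lceil\mu n\rceil\log n + (2i-2)\log n}$, where the $(2i-2)\log n$ exponent is the union-bound cost of guessing the $2(i-1)$ values $(x_j,Y_j\text{-refinements})$ when we invoke \cref{not_many_good_indices_conditioned} with failure-probability budget $p = n^{-(2i-2)}$ for the conditioning event.

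**Carrying out the generic case carefully.** The delicate point is step (c): I want to say that, in the solution set of the linear system $\mA F = v$ (with $\mA$ the $\le 2(i-1)$ rows), the coordinate indexed by $X_i = \inrprd{\al_{Y_i},F}+\be_{Y_i}$ is uniform. But $X_i$ itself is a function of $F$, so I cannot directly apply \cref{clm:known_unit_vectors} with a fixed index. The resolution: further condition on the value of $X_i = x_i$ — this adds one more affine row $\al_{Y_i}$ to $\mA$. Now with $x_i$ fixed, I ask whether $e_{x_i} \in \Span$ of the augmented matrix. If $x_i \in \Cuv$ of the augmented system (i.e. $F_{x_i}$ is already determined) — this is the "bad" case absorbed into the counting term, because $x_i$ being determined means it's one of the $O(i)$ special indices, and by \cref{existence_general_specific_spans} there are at most $O(i)$ of them, and $Y_i$ (uniform, independent) equals $F_{x_i}$ with probability $\le$ something small after accounting; otherwise $e_{x_i}\notin\Span$ and \cref{clm:known_unit_vectors} gives $\Pr[F_{x_i} = Y_i \mid \cdots] = 1/n$. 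Then sum over $x_i$. I expect the main obstacle to be bookkeeping the conditioning: making the "refine to a fixed value then the event becomes a linear system" argument rigorous while keeping the union-bound cost (the $2(i-1)\log n$ in the exponent) tight, and correctly routing the contributions into the three buckets $\frac{2i-1}{n}$, $\mu$, and the exponentially small term via \cref{not_many_good_indices_conditioned}. A secondary subtlety is that $\queries(Y_j,\cdot)$ contributes genuine (not affine) knowledge of $q$ values of $F$ per challenge; but since we only need the affine-decoder output to be structured, and the decoder's output is $\inrprd{\al_{Y_j},F}+\be_{Y_j}$ regardless of the query answers, the query positions only enlarge the set of "known" coordinates by at most $q$ per challenge — and if $q$ is large this is where the later corollaries restrict $q$; within this lemma we can afford to fold those $O(qi)$ indices into the counting bound as well (which is presumably why the corollary needs $q < cn$, though the lemma as stated has no $q$-dependence because the affine decoder's output does not see the queries).
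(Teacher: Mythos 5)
Your proposal follows essentially the same route as the paper's proof: encode the conditioning on $Z_{i-1}$ as a linear system of $2(i-1)$ rows (one affine row $\al^{Y_j}$ and one unit row $e_{X_j}$ per past success), augment it with $\al^{Y_i}$ to handle the fact that the index $X_i$ depends on $F$, split on whether $e_{X_i}$ lies in the span of the augmented matrix, and finish with \cref{clm:known_unit_vectors} (giving $1/n$) in the generic case and \cref{existence_general_specific_spans} plus \cref{not_many_good_indices_conditioned} with $p = n^{-(2i-2)}$ in the spanned case. Your intermediate bucketing of the $\frac{2i-1}{n}$ term is looser than the paper's (which gets $\frac{2i-2}{n}$ from $Y_i$ landing in the image of the fixed $(2i-2)$-size set $\cS_\mA$ and $\frac{1}{n}$ from the generic case), but the decomposition, the key claims invoked, and the final bound all match.
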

We prove \cref{lem:AffineDecoders} below, but first  use it to prove \cref{thm:AffineDecoders}.

\paragraph{Proving \cref{thm:AffineDecoders}.}
\cref{lem:AffineDecoders}  immediately yields a bound on the probability that \emph{$\InvB$, a zero-advice inverter, successfully inverts the first $i$ elements of $Y$}. For proving \cref{thm:AffineDecoders}, however, we need to   bound the probability that $\InvB$, and later on, an  inverter with non-zero advice, finds a preimage of a \emph{random output}  of $f$. Yet, the conversion between     these two measurements  is rather straightforward. Hereafter we assume $n\geq16$, as otherwise \cref{thm:AffineDecoders} is trivial, as $m = 0$.

\begin{proof}[Proof of \cref{thm:AffineDecoders}.]
Let $ \Inv = \triplet $, $\tau \in [0,1]$, $\delta \in [0,1]$ and  $m$ be as in the theorem statement. 	 Fix an advice string $a\in\str{s}$, and let $\Inv^a= (\queries^a,\decoder^a)$ denote the \emph{zero-advice} inverter obtained by hardcoding $a$ as the advice of $\Inv$ (\ie $\preprocessor^a(f)=a$ for every $f$). For $j\in [n]$, let $Z_j = Z^{\Inv^a}_j$ and  let $\mu_j \eqdef \max \set{ \sqrt[4]{1 / n}, \frac{4j }{ n }}$.
We start by showing that for every $j \leq n/16$ it holds that 
\begin{align}\label{eq:lem:AffineDecoders:0}
\pr{Z_j \mid Z_{j-1} }\le\frac{2 j }{n} +\mu_j 
\end{align}

Indeed, by \cref{lem:AffineDecoders}
\begin{align}\label{eq:lem:AffineDecoders:1}
\pr{Z_j \mid \Zto{j-1}}&\le \frac{2j-1}{n} + \mu_j + 	2^{\underbrace{2\lceil \mu_j n\rceil\log(1/\mu_j) - \lceil \mu_j n\rceil \log n + (2j-2)\log n}_\beta}
\end{align}

We write,
\begin{align}\label{eq:lem:AffineDecoders:2}
\beta = 
\underbrace{2\lceil \mu_j n\rceil\log(1/\mu_j) - \frac{\lceil \mu_j n\rceil}{2} \log n}_{\beta_1} + \underbrace{ \left(-\frac{\lceil \mu_j n\rceil}{2}\right) \log n + (2j-2)\log n}_{\beta_2}
\end{align}

Since
\begin{align*}
\beta_1\le \lceil \mu_j n\rceil \left( \log \frac{1}{\mu_j^2} - \log \sqrt{n} \right) = 
\lceil \mu_j n\rceil \left( \log \frac{ 1}{\mu_j^2\sqrt{n}} \right) \le 0
\end{align*}
and 
\begin{align*}
\beta_2
= \frac{-\lceil \mu_j n\rceil}{2}  \log n + 2j\log n  - 2\log n
\le \frac{-2j}{ n  } n \log n + 2j\log n  - 2\log n\leq
- 2\log n,
\end{align*}
we conclude   that  $\pr{Z_j \mid \Zto{j-1}} \le \frac{2j-1}{n} + \mu_j + 2^{-2\log n}\le\frac{2j}{n} + \mu_j$,  proving \cref{eq:lem:AffineDecoders:0}.

\cref{eq:lem:AffineDecoders:0} immediately  yields that  
\begin{align}\label{eq:top_proof:2}
	\pr{Z_{\maxithm}}&=  
	\prod_{j=1}^{\maxithm} \pr{Z_j \mid Z_{j-1} }
	 \le \prod_{j=1}^{\maxithm} \left( \frac{2 j }{n} +\mu_j \right)
\end{align}
We  use the above to produce a bound on the number of elements that $\Inv^a$ successfully inverts. Let $\gy^a(f) \eqdef \set{ y \in [n] \colon \Inv^a(y;f) \in f^{-1}(y)}$, and compute:
	\begin{align}\label{eq:AffineDecoders:1}
	\pr{Z_{\maxithm}} & = \ppr{ f \from \FFam } { \forall j \in[\maxithm] \colon Y_j \in \gy^a(f) } \\
	& \ge \ppr{ f \from \FFam }{ \forall j \in[\maxithm] \colon Y_j \in \gy^a(f) \bigwedge	|\gy^a(f)| \ge \delta n} \nonumber \\
	& =\ppr{ f \from \FFam }{\forall j \in[\maxithm] \colon Y_j \in \gy^a(f) \mid	|\gy^a(f)| \ge \delta n}\cdot \ppr{ f \from \FFam } { |\gy^a(f)| \ge \delta n }
	\nonumber \\
	& \ge \delta^\maxithm \cdot \ppr{ f \from \FFam }{ |\gy^a(f)| \ge \delta n }.
	\nonumber
	\end{align}
	
Combining \cref{eq:top_proof:2,eq:AffineDecoders:1} yields the following  bound on the number of images $\Inv^a$ successfully inverts:
\begin{align}\label{eq:AffineDecoders:2}
	\pr{ |\gy^a(f)| \ge \delta n } \le  \delta^{-\maxithm} \cdot \prod_{j=1}^{\maxithm} \left( \frac{2 j }{n} + \mu_j \right)
\end{align}
We now adapt the above bound to  (the non zero-advice ) $\Inv$.
Let $\gy(f) \eqdef \set{ y \in [n] \colon \Inv(y;f) \in f^{-1}(y)}$ and let $\gx(f) = f^{-1}(\gy(f) )$. By \cref{eq:AffineDecoders:2} and a union bound, 
\begin{align}\label{eq:AffineDecoders:3}
\ppr{ f \from \FFam }{ |\gy(f)| \ge \delta n } \le  2^s \cdot \delta^{-\maxithm}  \cdot \prod_{j=1}^{\maxithm} \left( \frac{2 j }{n} + \mu_j \right)
\end{align}
We conclude that 
\begin{align*}
\lefteqn{\ppr{ f\gets  \FFam}{\ppr{\stackrel{x\gets [n]}{y=f(x)} }{\Inv(y;f) \in f^{-1}(y)} \ge \tau} =\ppr{f \from \FFam }{ |\gx(f) |  \ge \tau n}}\\
 & = \ppr{ f \from \FFam }{ |\gx(f) | \ge \tau n \bigwedge |\gy(f)| < \delta n } +\ppr{ f \from \FFam }{ |\gx(f) | \ge \tau n \bigwedge |\gy(f)| \ge \delta n }\\
& \le\ppr{ f \from \FFam }{ |\gx(f) | \ge \tau n \bigwedge |\gy(f)| < \delta n } +
\ppr{ f \from \FFam }{ |\gy(f)| \ge \delta n } \\
& \le \al_{\tau,\delta}+ 2^s \cdot \delta^{-\maxithm} \cdot \prod_{j=1}^{\maxithm} \left( \frac{2 j}{n} + \mu_j \right).
\end{align*}
The second inequality follows by  the definition of $ \al_{\tau,\delta}$ and \cref{eq:AffineDecoders:3}.
\end{proof}

\subsubsection{Proving \cref{lem:AffineDecoders}}
In the rest of this section we prove \cref{lem:AffineDecoders}. Fix a zero-advice  non-adaptive inverter with an affine decoder $\InvB= (\queriesB, \decoderB)$, $i \in [n]$ and $\mu \in [0,\half]$. Let  $X\eqdef X^\InvB$ and, for  $j\in [n]$ let $Z_j\eqdef Z^\InvB_j$.  We start by  proving the following claim that bounds the probability in hand, assuming $X_i$, the inverter's answer, is  coming from a small linear space.  (Recall, from \cref{def:CovUNitVectors},  that  $\Cuv(\mM) = \set{ {j \in [m]} \colon e_j \in \Span(\mM) }$, where $e_j$ is the \jth unit vector in $\fld^n$.) 

\begin{claim}\label{clm:not_many_good_indices_advanced}
	Let  $\mA \in \matl$, let $v \in \Image(\mA)$,  let $\mB^1,\ldots,\mB^n \in \fld^{t\times n}$,  and, for $y\in [n]$, let $\mA^y\eqdef\stack{\mA}{\mB^{y}}$.  Then 
	\begin{align*}
	\pr{Y_i \in F(\Cuv(\mA^{Y_i})) \mid \mA\times F= v} \le\left(\frac{\ell}{n} + \mu \right)	+
	2^{2\lceil \mu n\rceil\log(1/\mu) + \lceil \mu n\rceil \log (t / n) + \ell\log n}.
	\end{align*}
\end{claim}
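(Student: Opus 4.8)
The plan is to decompose the set of spanned unit vectors $\Cuv(\mA^{Y_i})$ into a ``global'' part $\cS_\mA$ that depends only on the fixed matrix $\mA$, and a ``local'' part of size at most $t$ that depends on the realized value of $Y_i$; then to bound the contribution of each part separately and conclude by a union bound. Concretely, I would first invoke \cref{existence_general_specific_spans} with $\mA$ in the role of the $\ell$-row matrix to obtain an $\ell$-size set $\cS_\mA \subseteq [n]$, and then, for each $y\in[n]$, apply it again with $\mB^y$ in the role of the $t$-row matrix to get a $t$-size set $\cS_{\mB^y}\subseteq [n]$ satisfying $\Cuv(\mA^y)=\Cuv\stack{\mA}{\mB^y}\subseteq \cS_\mA\cup\cS_{\mB^y}$. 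Since $\mA,\mB^1,\dots,\mB^n$ are all fixed, each $\cS_{\mB^y}$ is a fixed set, and we may pad it up to size exactly $t$ (this only makes the events below more likely, so it is harmless). Writing $W$ for the conditioning event $\set{\mA\times F=v}$, the inclusion $F(\Cuv(\mA^{Y_i}))\subseteq F(\cS_\mA)\cup F(\cS_{\mB^{Y_i}})$ gives, by a union bound conditioned on $W$,
\[
\pr{Y_i\in F(\Cuv(\mA^{Y_i}))\mid W}\ \le\ \pr{Y_i\in F(\cS_\mA)\mid W}\ +\ \pr{Y_i\in F(\cS_{\mB^{Y_i}})\mid W}.
\]

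For the first term, $\cS_\mA$ is a fixed set of size at most $\ell$, and $Y_i$ is uniform over $[n]$ and independent of $F$, hence of $W$ (which is a function of $F$ alone). Thus, conditioning on any value of $F$ consistent with $W$, the probability that the uniform $Y_i$ lands in $F(\cS_\mA)$ is $|F(\cS_\mA)|/n\le \ell/n$, so $\pr{Y_i\in F(\cS_\mA)\mid W}\le \ell/n$. For the second term I would apply \cref{not_many_good_indices_conditioned} with the $t$-size sets $\cS_{\mB^1},\dots,\cS_{\mB^n}$ (so $\qrymatrix=t$), the event $W$, and $\gamma=\mu$. The only quantitative input required is a lower bound on $p:=\pr{W}$: since $v\in\Image(\mA)$ and $\mA$ has $\ell$ rows, \cref{pre:solution_set_size} gives $\size{\set{f\in[n]^n:\mA\times f=v}}=n^{n-\rank(\mA)}$, whence $p=n^{-\rank(\mA)}\ge n^{-\ell}$ and $\log(1/p)\le \ell\log n$. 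Plugging this in yields $\pr{Y_i\in F(\cS_{\mB^{Y_i}})\mid W}\le \mu + 2^{2\lceil \mu n\rceil\log(1/\mu)+\lceil \mu n\rceil\log(t/n)+\ell\log n}$. Adding the two bounds gives exactly the inequality in the claim.

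There is no real obstacle here once the decomposition via \cref{existence_general_specific_spans} is in place; the only points that need a line of care are (i) that \cref{not_many_good_indices_conditioned} is stated for sets of one fixed common size, so one must pad each $\cS_{\mB^y}$ (guaranteed by \cref{existence_general_specific_spans} only to have size $\le t$) up to size exactly $t$ before applying it, and (ii) verifying that $Y_i$ is genuinely independent of the conditioning event $W$ — which holds precisely because $W=\set{\mA\times F=v}$ is determined by $F$, and $Y_i$ is independent of $F$.
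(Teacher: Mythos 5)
Your proposal is correct and follows essentially the same route as the paper's proof: the same decomposition of $\Cuv(\mA^{Y_i})$ into $\cS_\mA\cup\cS_{\mB^{Y_i}}$ via \cref{existence_general_specific_spans}, the same bound $\pr{\mA\times F=v}\ge n^{-\ell}$ from \cref{pre:solution_set_size}, and the same application of \cref{not_many_good_indices_conditioned} to the second term. The two points of care you flag (padding the sets and the independence of $Y_i$ from the conditioning event) are handled implicitly in the paper but are correctly resolved in your write-up.
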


\begin{proof}
	 By \cref{existence_general_specific_spans} there exist an $\ell$-size set   $\cS \eqdef \cs_\mA$ and $t$-size sets $\set{\cS_k\eqdef \cS_{\mB^k}}_{k\in [n]}$ such that 
	\begin{align}
	\Cuv(\mA^y) \subseteq \cS \cup \cS_{y}
	\end{align}
	for every $y\in [n]$.  By \cref{pre:solution_set_size}, 
	
	\begin{align}
	\pr{\mA\times F= v}= \frac{n^{n-\rank(\mA)}}{n^n} \geq n^{-\ell}
	\end{align}
 Compute,
	\begin{align}
\pr{ Y_i \in F(\Cuv(\mA^{Y_i}))\mid  \mA\times F= v }
	 &\le\pr{ Y_i \in F(\cS \cup \cS_{Y_i})\mid  \mA\times F= v } \\
	&\le  \pr{ Y_i \in F(\cS)\mid \mA\times F= v } +		\pr{ Y_i \in F(\cS_{Y_i})\mid \mA\times F= v }\nonumber\\
	&\le  \frac{\ell}{n} + \pr{ Y_i \in F(\cS_{Y_i})\mid  \mA\times F= v }.\nonumber
	\end{align}
	The first inequality  holds since $\Cuv(\mA^{Y_i}) \subseteq \cS \cup \cS_{Y_i}$, and the last one since $\size{\cS} \le \ell$ and  $Y_i$ is independent of $F$.
	Applying \cref{not_many_good_indices_conditioned} \wrt  $p \eqdef n^{-\ell}$, $\gamma\eqdef\mu$,  $W \eqdef \set{\mA\times F=v}$,  $Y\eqdef Y_i$ and  the  sets $\cS_1, \ldots \cS_n$,  yields that 
	\begin{align}
	\pr{ Y_i \in F(\cS_{Y_i})\mid  \mA\times F=v } &\le 
	\mu + 2^{2\lceil \mu n\rceil\log(1/\mu) + \lceil \mu n\rceil \log (t / n) + \ell\log n}
	\end{align}
We conclude that  $	\pr{Y_i \in F(\Cuv(\mA(Y_i))) \mid \mA\times F= v} \le \frac{\ell}{n} + \mu + 2^{2\lceil \mu n\rceil\log(1/\mu) + \lceil \mu n\rceil \log (t / n) + \ell\log n}$.
\end{proof}

Given  the above  claim, we  prove \cref{lem:AffineDecoders} as follows.
\begin{proof}[Proof of \cref{lem:AffineDecoders}]  Since $\InvB$ has an affine decoder, for every $y \in [n]$ and $X \eqdef \InvB(y;F)$  there exist a $\numqry$-sparse vector $\al^y \in \fld^n$ and a field element $\be^y \in \fld$ such that $ \inrprd{\al^y, F} + \be^y = X$. Therefore, for every $j < i$:
	\newcounter{saveenum}
	\begin{enumerate}
		\item $\inrprd{\al^{Y_j}, F}= - \be^{Y_j} + X_{j}$.
	\end{enumerate}
Conditioning on $Z_{i-1}$ further   implies that  for every $j < i$:
	\begin{enumerate}
		\item[2.]  $ F(X_j) = Y_j$.
	\end{enumerate}
	Let $\ell \eqdef 2i-2$, and let $\mMH\in \matl$ be the (random) matrix defined, for every $j\in [i-1]$, by $\mMH_{2j-1} \eqdef \alpha^{Y_j}$ and $\mMH_{2j} \eqdef e_{X_j}$. Let $\vH \in \vecl$ be the (random) vector defined by
	$\vH_{2j-1} \eqdef -\be^{Y_j} + X_j$ and $\vH_{2j} = Y_j$. By definition, conditioned on $\Zto{i-1}$ it holds that $\mMH \times F = \vH$.
	This incorporates in a single equation all that is known about $F$ given $\Zto{i-1}$. To take into account the knowledge gained from the queries made while attempting to invert $Y_i$, we combine the above with $\alpha^{Y_i}$ and $\inrprd{\alpha^{Y_i},F}$, into the matrix 
	$\mMP \eqdef \stack{\mMH}{\alpha^{Y_i}} $ and vector $ \vP \eqdef \stack{\vH}{\inrprd{\alpha^{Y_i},F}} $.   By definition,  $\mMP \times F = \vP$. We write
	\begin{align}\label{eq:split_by_eM}
	\pr{Z_i \mid  \Zto{i-1}} &= \pr{Z_i \land X_i \in \Cuv(\mMP) \mid  \Zto{i-1} }
	+ \pr{Z_i \land X_i \notin \Cuv(\mMP) \mid  \Zto{i-1}}
	\end{align}
	and  prove the lemma by separately bounding the two terms of the above equation. Let $H \eqdef (Y_{< i}, \mMH,\vH)$, and  note that

\begin{align}\label{eq:not_many_good_indices_bound}
&\pr{Z_i \land X_i \in \Cuv(\mMP) \mid  \Zto{i-1} }  \le \pr{Y_i \in F(\Cuv(\mMP)) \mid   \Zto{i-1}} \\
&= \ex{h \gets H \mid \Zto{i-1}}{\pr{Y_i \in F(\Cuv(\mMP)) \mid H = h,Z_{i-1}}}\nonumber\\
&= \ex{h=(y_{< i}, m^{i-1},v^{i-1}) \gets H \mid \Zto{i-1}}{\pr{Y_i \in F\left(\Cuv\stack{m^{i-1}}{\alpha^{Y_i}}\right) \mid H = h,m^{i-1}\times F = v^{i-1}}}\nonumber\\
&= \ex{(y_{< i}, m^{i-1},v^{i-1}) \gets H \mid \Zto{i-1}}{\pr{Y_i \in F\left(\Cuv\stack{m^{i-1}}{\alpha^{Y_i}}\right) \mid Y_{< i}=y_{<i}, m^{i-1}\times F = v^{i-1}}}\nonumber\\
&= \ex{(y_{< i}, m^{i-1},v^{i-1}) \gets H \mid \Zto{i-1}}{\pr{Y_i \in F\left(\Cuv\stack{m^{i-1}}{\alpha^{Y_i}}\right) \mid  m^{i-1}\times F = v^{i-1}}}\nonumber\\
&\le \left( \frac{2i-2}{n} + \mu \right)
+ 2^{2\lceil \mu n\rceil\log(1/\mu) + \lceil \mu n\rceil \log (1 / n) + (2i-2)\log n}.\nonumber
\end{align}

The first inequality holds by the definition of $Z_i$. The second equality holds by the definition of $Z_{i-1}$. The third equality holds since   the event $\set{Y_{< i}=y_{<i}, m^{i-1}\times F = v^{i-1}}$ implies that $\set{\mM^{i-1}=m^{i-1},V^{i-1}=v^{i-1}}$. The last equality holds since $F$ is independent of $Y$, and the last inequality follows by  \cref{clm:not_many_good_indices_advanced} with respect to $\mA \eqdef m^{i-1}, v \eqdef v^{i-1} $, and $(\mB^1,\ldots,\mB^n) \eqdef (\al^1,\ldots,\al^n)$ (viewing  $\al^i$ as a matrix in $\fld^{1 \times n}$).

 For bounding the right-hand term of \cref{eq:split_by_eM}, let $H \eqdef (X_i,Y_{\le i}, \mM,V)$,  and compute

\begin{align}\label{eq:clm:known_unit_vectors_bound}
&\pr{Z_i \land X_i \notin \Cuv(\mMP) \mid  \Zto{i-1}} \le \pr{Z_i \mid  X_i \notin \Cuv(\mMP) , \Zto{i-1}} \\
&= \ex{h \gets H\mid X_i \notin \Cuv(\mMP) , \Zto{i-1}}{\pr{Z_i \mid H=h,Z_{i-1}}}\nonumber\\
&= \ex{h= (x_i,y_{\le i}, m,v) \gets H\mid X_i \notin \Cuv(\mMP) , \Zto{i-1}}{\pr{F(x_i) =y_i \mid H=h,m\times F = v}}\nonumber\\
&= \ex{(x_i,y_{\le i}, m,v) \gets H\mid X_i \notin \Cuv(\mMP) , \Zto{i-1}}{\pr{F(x_i) =y_i \mid Y_{\leq i}=y_{\le i},m\times F = v}}\nonumber\\
&= \ex{(x_i,y_{\le i}, m,v) \gets H\mid X_i \notin \Cuv(\mMP) , \Zto{i-1}}{\pr{F(x_i) =y_i \mid m\times F = v}}\nonumber\\
&= 1/n. \nonumber
\end{align}

 The second equality holds by the definition of $Z_{i-1}$. The third equality holds since   the event $\set{Y_{\le i}=y_{\le i}, m\times F = v}$ implies that $\set{\mM=m,V=v}$,  and $X_i$ is a function of $\vP$. The fourth equality holds since $F$ is independent from $Y$.  The last inequality follows by \cref{clm:known_unit_vectors}. Combining \cref{eq:split_by_eM,eq:not_many_good_indices_bound,eq:clm:known_unit_vectors_bound}, we conclude that
	\begin{align*}
	\pr{Z_i \mid  \Zto{i-1}}\le& \left( \frac{2i-2}{n} + \mu \right) + 2^{2\lceil \mu n\rceil\log(1/\mu) + \lceil \mu n\rceil \log (1 / n) + (2i-2)\log n} + 1/n \\
	&=  \frac{2i-1}{n} + \mu + 2^{2\lceil \mu n\rceil\log(1/\mu) - \lceil \mu n\rceil \log (n) + (2i-2)\log n}.
	\end{align*}
\end{proof}

\subsubsection{Proving \cref{cor:AffineDecoders}}\label{sec:AffineDecoders:corollary_proof}

\begin{proof}[Proof of \cref{cor:AffineDecoders}]
	We prove for  $\tau \le 0.16$, which clearly yields the same bound for larger values of $\tau$. Let  $\delta  \eqdef \tau^2$, and let $\alpha_{\tau,\delta}$ be  as in \cref{thm:AffineDecoders}. By \cref{pre:tau_delta_bound},
	\begin{align}
	\al_{\tau, \tau^2} \le  2^{n ( h(\tau) + h(\tau^2)) + \lfloor \tau n \rfloor \log \tau^2}\le & 2^{n ( h(\tau) + h(\tau^2) + \tau \log \tau^2)-\log \tau^2}
	\\& \le 2^{n \underbrace{( h(\tau) + h(\tau^2) + \tau \log \tau^2)}_\beta}\cdot \tau^{-2}
	\end{align}
	Since $\tau  \leq 0.16$, it holds that    $h(\tau) \le \frac{3}{2}\cdot \tau  \cdot \log (1/\tau)$. We also note that
	\begin{align}
	\beta   \le & \frac{3}{2} \cdot \tau \cdot \log \frac{1}{\tau}
	+ \frac{6}{2}\cdot \tau^2 \cdot \log \frac{1}{\tau} +2\tau \log \tau =   (3\tau-1/2) \cdot\tau\cdot \log \frac{1}{\tau}  \le \frac{-\log n}{200\sqrt[8]{n}}
	\end{align}
	The last inequality holds since, by  assumption, $0.16 \geq \tau \geq \frac{2}{\sqrt[8]{n}}$, noting that $\tau \cdot \log 1/\tau$ is monotonically increasing over $[0,0.16]$.  Given the above bound on $\al_{\tau, \tau^2} $ and the assumption on $\Inv$'s success probability,  \cref{thm:AffineDecoders} yields that for every  $m \le {n }/{16}$: 
	\begin{align}\label{corollary:eq:first_beta_bound}
	1/2 \le 2^{-(n^{7/8}\log n)/200}\cdot \tau^{-2}+2^s \delta^{-\maxithm} \prod_{j=1}^{\maxithm} \left( \frac{2 j }{n} + \max \left\{ \sqrt[4]{1 / n}, \frac{4j}{ n } \right\}\right)
	\end{align}
	Let $\maxithm \eqdef  {\delta n }/{16 }$.  Since $\tau \ge \frac{2}{\sqrt[8]{n}}$, for every $j\in [m]$ it holds that  $\frac{2 j}{n} + \max \set{ \sqrt[4]{1 / n}, \frac{4j }{ n  }}\le \delta/2$.  Thus, by \cref{corollary:eq:first_beta_bound}, 
	\begin{align}
	1/2 \le 2^{-(n^{7/8}/200-1)\log n} + 2^{s}\cdot \delta^{-m} \cdot (\delta/2)^m \le  2^{-\log n} + 2^{s - \maxithm}
	\end{align}
	We conclude that $s \in \Omega(m)$ and  thus $s\in \Omega(\tau^2 \cdot n)$.
\end{proof}

\subsection{Affine Decision Trees}\label{sec:DecisionTree}
In this section we present lower bounds for non-adaptive affine decision trees. These  are formally defined as follows:

\newcommand{\Ep}{g}
\newcommand{\Np}{p}

\begin{definition}[Affine decision trees]\label{def:affineTrees}
An $n$-input {\sf affine decision tree} over $\fld$ is  a labeled, directed,  degree $\size{\fld}$    tree $\cT$. Each internal node $v$ of $\cT$ has label $\alpha_v\in \fld^n$,  each leaf $\ell$ of $\cT$ has label $o_\ell \in \fld$,   and the $\size{\fld}$ outgoing edges of every internal node are labeled by the  elements of $\fld$.  Let $\Gamma_\cT(v,\gamma)$ denote the (direct) child  of $v$ connected via the edge labeled by $\gamma$.  The {\sf node path} $\Np = (\Np_1,\ldots,\Np_{d+1})$ of $\cT$  on input $w \in \fld^n $ is defined by:
\begin{itemize}
	\item $\Np_1$ is the root of $\cT$.
	\item $\Np_{i+1}=\Gamma_\cT(\Np_i,\inrprd{\alpha_{\Np_i},w})$.
\end{itemize}
The {\sf edge path of $\cT$ on   $w$} is defined by  $(\inrprd{\alpha_{\Np_1},w},\cdots,\inrprd{\alpha_{\Np_{d}},w})$. 
Lastly, the {\sf output of $\cT$ on  $w$}, denoted $\cT(w)$,  is the value of $o_{\Np_{d+1}}$. 
\end{definition} 
Note that the edge path determines the computation path and output. Given the above,  affine decision tree decoders are defined as follows.

\begin{definition}[Affine decision tree decoder]\label{def:AffineTreesDecoders}
	An 	inversion algorithm $\Inv\eqdef \triplet$ has a  {\sf $d$-depth  affine decision tree decoder}, if for every  $y\in [n]$,   $a\in \zo^s$ and $v= \queries(y,a)$, there exists  an $n$-input, $d$-depth affine decision tree $\cT^{y,a}$ such that $\decoder(y, a, f(v))=\cT^{y,a}(f)$. 
\end{definition}
Note that such a decision tree may be of size $O(n^d)$. The following theorem bounds the probability, over a random function $f$, that a  non-adaptive inverter with  an affine decision tree decoder inverts  a random output of $f$ with probability $\tau$.
\begin{theorem}\label{thm:AffineTreeDecoders}
	Let $ \Inv$ be an $s$-advice, $(\numqry \le n/16)$-query,  non-adaptive inverter with a $d$-depth affine  decision tree decoder,  and let  $\tau \in [0,1]$. Then for every  $\delta \in [0,1]$ and  $m \le \frac{n \log (n/\numqry)}{4(d+1) \log n}$ it holds that 
	\begin{align*}
	\ppr{ f\gets  \FFam}{\ppr{\stackrel{x\gets [n]}{y=f(x)} }{\Inv(y;f) \in f^{-1}(y)} \ge \tau}
	\le
	\al_{\tau,\delta} + 2^s\cdot  \delta^{-\maxithm} \prod_{j=1}^{\maxithm} \left( \frac{(d+1) j }{n} + \max \set{ \sqrt[4]{ \numqry/  n}, \frac{2(d+1)j \log n}{ n \log (n/\numqry) }}
	\right)
	\end{align*}
	for $\al_{\tau,\delta} \eqdef \ppr{\getf}{ \exists \tau n\text{-size set } \cX \subset[n]\colon |f(\cX)| \le \delta  n }$.
\end{theorem}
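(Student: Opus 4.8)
The argument runs entirely parallel to the proof of \cref{thm:AffineDecoders}. By a union bound over the $2^s$ possible advice strings it suffices to treat a \emph{zero-advice} inverter $\InvB=(\queriesB,\decoderB)$ whose decoder is a $d$-depth affine decision tree, to bound the probability (over $f\gets\FFam$) that $\InvB$ inverts at least a $\delta$-fraction of the images of $f$, and then to perform the same conversions --- from that quantity, via $\al_{\tau,\delta}$, to the theorem's left-hand side --- as in the proof of \cref{thm:AffineDecoders}. So the only genuinely new ingredient is the decision-tree analog of \cref{lem:AffineDecoders}.

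For that lemma the plan is to show, with $F\gets\FFam$ and $Y=(Y_1,\dots,Y_n)$ uniform and independent of $F$ and $X_i,Z_i$ as in \cref{notation:XZ} (instantiated for $\InvB$), that for every $i\in[n]$ and $\mu\in[0,\tfrac12]$,
\[
\pr{Z_i \mid Z_{i-1}} \;\le\; \frac{(d+1)(i-1)+1}{n} + \mu + 2^{\,2\lceil \mu n\rceil\log(1/\mu)\,-\,\lceil \mu n\rceil\log(n/\numqry)\,+\,(d+1)(i-1)\log n}.
\]
The new observation is that running $\cT^{Y_j}$ on $F$ exposes its \emph{edge path} $\bigl(\inrprd{\alpha_{\Np_1},F},\dots,\inrprd{\alpha_{\Np_d},F}\bigr)$, that this edge path recursively determines the visited nodes and the reached leaf, hence the output $X_j$, and that each edge value is a \emph{linear} functional of $F$. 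Consequently, conditioning on $Z_{i-1}$ together with $Y_{<i}$ and the edge paths of $\cT^{Y_1},\dots,\cT^{Y_{i-1}}$ pins $F$ down to the solution set of a linear system $\mMH\times F=\vH$, where $\mMH$ has exactly $(d+1)(i-1)$ rows: for each $j<i$, the $d$ (now fixed) node labels $\alpha_{\Np_1},\dots,\alpha_{\Np_d}$ encountered by $\cT^{Y_j}$, plus the unit vector $e_{X_j}$ recording $F(X_j)=Y_j$. This is the $(d{+}1)$-factor blow-up over the $2(i-1)$ rows of the affine-decoder case. To account for the queries made while attempting $Y_i$ I append the block of unit vectors $e_t$, $t$ ranging over the (at most $\numqry$) coordinates queried by $\queriesB$ on $Y_i$, forming $\mMP\eqdef\stack{\mMH}{\,e_{\queriesB(Y_i)}\,}$ with $\mMP\times F=\vP$. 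Two points make this work: the appended block depends only on $Y_i$ (not on $F$), and $X_i=\InvB(Y_i;F)$ is a deterministic function of $\vP$, since it depends on $F$ only through $F$'s values on the queried coordinates. Now split $\pr{Z_i\mid Z_{i-1}}$ according to whether $X_i\in\Cuv(\mMP)$. For $X_i\in\Cuv(\mMP)$, bound $\pr{Z_i\wedge X_i\in\Cuv(\mMP)\mid Z_{i-1}}\le\pr{Y_i\in F(\Cuv(\mMP))\mid Z_{i-1}}$ and, after the same $H\eqdef(Y_{<i},\mMH,\vH)$-expectation bookkeeping as in the proof of \cref{lem:AffineDecoders} (using that $\set{Y_{<i}=y_{<i},\,m^{i-1}\times F=v^{i-1}}$ determines $\mM^{i-1},V^{i-1}$, the outputs $X_{<i}$, and implies $Z_{i-1}$), apply \cref{clm:not_many_good_indices_advanced} with $\mA\eqdef m^{i-1}$ (so $\ell=(d+1)(i-1)$), $v\eqdef v^{i-1}$, and $\mB^y$ the $\numqry$-row matrix whose rows are the unit vectors of the coordinates queried by $\queriesB$ on $y$ (so $t=\numqry$); invoking \cref{existence_general_specific_spans,not_many_good_indices_conditioned} (the latter with $p\eqdef n^{-\ell}\le\pr{m^{i-1}\times F=v^{i-1}}$, by \cref{pre:solution_set_size}) yields $\bigl(\tfrac{(d+1)(i-1)}{n}+\mu\bigr)+2^{2\lceil\mu n\rceil\log(1/\mu)+\lceil\mu n\rceil\log(\numqry/n)+(d+1)(i-1)\log n}$. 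For $X_i\notin\Cuv(\mMP)$, condition further on $(X_i,Y_{\le i},\mMP,\vP)$: since $X_i$ is a function of $\vP$, this fixes $\mMP$ and $X_i\eqdef x_i$, and \cref{clm:known_unit_vectors} gives $\pr{F(x_i)=Y_i\mid\mMP\times F=\vP}=1/n$; averaging, $\pr{Z_i\wedge X_i\notin\Cuv(\mMP)\mid Z_{i-1}}\le 1/n$. Summing the two contributions gives the displayed inequality.

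Given the lemma, the theorem follows exactly as \cref{thm:AffineDecoders} follows from \cref{lem:AffineDecoders}. Set $\mu_j\eqdef\max\set{\sqrt[4]{\numqry/n},\ \tfrac{2(d+1)j\log n}{n\log(n/\numqry)}}$; the hypotheses $\numqry\le n/16$ and $j\le m\le\tfrac{n\log(n/\numqry)}{4(d+1)\log n}$ ensure $\mu_j\le\tfrac12$, and splitting the exponent as $\beta_1+\beta_2$ with $\beta_1=2\lceil\mu_j n\rceil\log(1/\mu_j)-\tfrac{\lceil\mu_j n\rceil}{2}\log(n/\numqry)\le 0$ (using $\mu_j\ge\sqrt[4]{\numqry/n}$) and $\beta_2=-\tfrac{\lceil\mu_j n\rceil}{2}\log(n/\numqry)+(d+1)(j-1)\log n\le-(d+1)\log n$ (using $\mu_j\ge\tfrac{2(d+1)j\log n}{n\log(n/\numqry)}$) yields $\pr{Z_j\mid Z_{j-1}}\le\tfrac{(d+1)j}{n}+\mu_j$, hence $\pr{Z_m}\le\prod_{j=1}^m\bigl(\tfrac{(d+1)j}{n}+\mu_j\bigr)$. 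Fixing an advice string $a$ and writing $\Inv^a$ for the zero-advice inverter obtained by hardcoding $a$, the same argument as in \cref{thm:AffineDecoders} gives $\pr{Z_m}\ge\delta^m\cdot\ppr{f\gets\FFam}{|\gy^a(f)|\ge\delta n}$; a union bound over the $2^s$ advice strings and the $\al_{\tau,\delta}$-split (splitting on whether $|\gy(f)|\ge\delta n$) complete the proof.

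\textbf{Main obstacle.} The crux is the structural claim that a depth-$d$ affine decision tree leaves a \emph{linear} imprint on $F$ of dimension only $d+1$ per successfully answered challenge (the $d$ edge-path functionals, plus one unit vector), combined with the bookkeeping that lets us append --- for the current challenge --- a \emph{fixed}, $F$-independent block of $\numqry$ unit vectors whose span captures everything the decoder could learn about $F$, so that \cref{clm:not_many_good_indices_advanced,clm:known_unit_vectors} apply essentially verbatim. Everything downstream --- the two-term split, the expectation manipulations, and the optimization over $\mu_j$ --- is a routine adaptation of the affine-decoder proof, paying a factor $d+1$ in the row count of $\mMH$ and replacing the set-size parameter $1$ by $\numqry$.
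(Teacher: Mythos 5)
Your proposal is correct and follows essentially the same route as the paper: the same $(d{+}1)(i{-}1)$-row linear system built from the edge-path functionals and output unit vectors, the same appended query-index block $\mE^{Y_i}$ with $t=\numqry$ in \cref{clm:not_many_good_indices_advanced}, the same split on $X_i\in\Cuv(\mMP)$ resolved by \cref{clm:known_unit_vectors}, and the same downstream product/union-bound conversion. The only (immaterial) difference is that your intermediate lemma keeps the additive term as $\frac{(d+1)(i-1)+1}{n}$ where the paper rounds up to $\frac{(d+1)i-1}{n}$.
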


Comparing to the bound we derive on affine decoders (\cref{thm:AffineDecoders}), we are  paying above for the tree depth $d$, but also  for the number of queries $q$. In particular, we essentially  multiply each term of the  above product by the tree depth $d$, and  by $\frac{\log n}{\log(n/q)}$.  In addition, the theorem  only holds for smaller values of $m$.
The following corollary exemplifies the usability of \cref{thm:AffineTreeDecoders}  by considering the consequences of  two choices of parameters. 
\begin{corollary}[\cref{thm:intro:AffineTree}, restated]\label{cor:AffineTreeDecoders}
	Let $\Inv$ be as in \cref{thm:AffineTreeDecoders} and assume  
	
	$\ppr{ f\gets  \FFam}{\ppr{\stackrel{x\gets [n]}{y=f(x)} }{\Inv(y;f) \in f^{-1}(y)} \ge \tau} \ge \nfrac 1 2$, then the following holds:
	\begin{itemize}
		\item 	If $\numqry \le  n \cdot (\nfrac \tau 2)^8$, then $ s \in \Omega( \nfrac n d\cdot \nfrac {\tau^2} {\log n})$.
		\item 	If $\numqry \le  n^{1-\eps}$, then $ s \in \Omega(\nfrac n {d} \cdot \tau^2  \eps)$.
	\end{itemize}
\end{corollary}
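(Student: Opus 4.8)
The plan is to mirror the proof of \cref{cor:AffineDecoders}: instantiate \cref{thm:AffineTreeDecoders} with compression parameter $\delta \eqdef \tau^2$ and a carefully chosen $m$, show that the first summand $\al_{\tau,\tau^2}$ is exponentially small, that the product term collapses to $2^{s-m}$, and conclude $s \in \Omega(m)$; then read off $m$ in each of the two regimes. As in \cref{cor:AffineDecoders}, I would first reduce to the case where $\tau$ is below a small universal constant $\tau_0$ (for larger $\tau$ one uses $\delta\eqdef\tau_0^2$ instead, in which form the hypothesis on $q$ becomes $q\le c_0 n$ for a universal $c_0$ — the shape in which the corollary is usually applied, with $\tau=1/2$), and I may assume $\tau\ge 2n^{-1/8}$ since otherwise $\tau^2 n\in o(1)$ and the claimed bounds are vacuous.

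Controlling the compression term is immediate from the already-proved estimate: by \cref{pre:tau_delta_bound}, $\al_{\tau,\tau^2}\le 2^{n(h(\tau)+h(\tau^2))+\lfloor\tau n\rfloor\log\tau^2}$, and the exact computation done in the proof of \cref{cor:AffineDecoders} — using $h(\tau)\le\tfrac32\tau\log(1/\tau)$ for $\tau\le\tau_0$, \cref{pre:binary_entropy_bound}, and $\tau\ge 2n^{-1/8}$ — gives $\al_{\tau,\tau^2}\le 1/4$ for $n$ large (in fact $\al_{\tau,\tau^2}\le 2^{-\Omega(n^{7/8}\log n)}$).

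Next I would choose $m\eqdef\big\lfloor\tfrac{\tau^2 n\log(n/q)}{16(d+1)\log n}\big\rfloor$, which satisfies $m\le\tfrac{n\log(n/q)}{4(d+1)\log n}$ because $\tau\le 1$, so (together with the standing $q\le n/16$) \cref{thm:AffineTreeDecoders} applies. For every $j\le m$ one has $\tfrac{(d+1)j}{n}\le\tfrac{(d+1)m}{n}\le\tfrac{\tau^2}{16}$ and $\tfrac{2(d+1)j\log n}{n\log(n/q)}\le\tfrac{\tau^2}{8}$; and $\sqrt[4]{q/n}\le\tfrac{\tau^2}{4}$ is exactly the hypothesis $q\le n(\tau/2)^8$ (in the second regime $q\le n^{1-\eps}$ implies $q\le n(\tau/2)^8$ once $\tau\ge 2n^{-\eps/8}$, which is again WLOG). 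Hence each factor $\tfrac{(d+1)j}{n}+\max\{\sqrt[4]{q/n},\tfrac{2(d+1)j\log n}{n\log(n/q)}\}$ is at most $\tfrac{\tau^2}{16}+\tfrac{\tau^2}{4}\le\tfrac{\tau^2}{2}=\tfrac\delta2$, so $\prod_{j=1}^m(\cdots)\le(\delta/2)^m$ and $2^s\delta^{-m}\prod_{j=1}^m(\cdots)\le 2^{s-m}$. Plugging both bounds into \cref{thm:AffineTreeDecoders} yields $\tfrac12\le\al_{\tau,\tau^2}+2^{s-m}\le\tfrac14+2^{s-m}$, hence $2^{s-m}\ge\tfrac14$ and $s\ge m-2\in\Omega(m)$.

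It then remains to specialize $m$. In the first regime $q\le n(\tau/2)^8$ gives $\log(n/q)\ge 8+8\log(1/\tau)\ge 8$, so $m\in\Omega\!\big(\tfrac{\tau^2 n}{(d+1)\log n}\big)=\Omega\!\big(\tfrac nd\cdot\tfrac{\tau^2}{\log n}\big)$. In the second regime $q\le n^{1-\eps}$ gives $\log(n/q)\ge\eps\log n$, the $\log n$ factors cancel, and $m\in\Omega\!\big(\tfrac{\tau^2 n\eps}{d+1}\big)=\Omega\!\big(\tfrac nd\cdot\tau^2\eps\big)$. I expect the only genuine difficulty — exactly as in \cref{cor:AffineDecoders} — to be the simultaneous choice of $\delta$ and $m$: $\delta$ must be small enough that the counting bound \cref{pre:tau_delta_bound} on $\al_{\tau,\delta}$ is nontrivial, yet large enough relative to $\sqrt[4]{q/n}$ that the product of the $m$ factors stays below $\delta^m$; the hypotheses on $q$ are precisely calibrated so that $\delta=\tau^2$ threads this needle, and making the edge cases in $\tau$ (very small, or above $\tau_0$) conform is where I expect to spend the extra care.
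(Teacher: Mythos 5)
Your proposal is correct and matches the paper's intended argument exactly: the paper omits this proof, stating only that it follows from \cref{thm:AffineTreeDecoders} along the same lines used to derive \cref{cor:AffineDecoders} from \cref{thm:AffineDecoders}, and your choices $\delta=\tau^2$ and $m=\big\lfloor\tau^2 n\log(n/q)/(16(d+1)\log n)\big\rfloor$, together with the factor-by-factor bound $\le\delta/2$ and the two specializations of $\log(n/q)$, fill in those lines correctly. One small caveat: your justification for assuming $\tau\ge 2n^{-1/8}$ (that otherwise $\tau^2 n\in o(1)$) is false---there $\tau^2 n\approx 4n^{3/4}$---but the assumption itself is legitimate, since \cref{cor:AffineDecoders} imposes it explicitly and this corollary should be read as inheriting it (in \cref{thm:intro:AffineTree} the success probability is the constant $1/2$ anyway).
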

\begin{proof}
Omitted, follows by \cref{thm:AffineTreeDecoders} using very similar lines to those used to derive  \cref{cor:AffineDecoders} from \cref{thm:AffineDecoders}.
\end{proof}

In the rest of this section we explain how to modify the proof of \cref{thm:AffineDecoders}, in order to derive the proof of \cref{thm:AffineTreeDecoders}.  Hereafter we assume $n \ge 16$, as otherwise the bound trivially holds.

 Let $F \from \FFam$ and let  $Y= (Y_1,...,Y_n)$ be  a uniform vector over $[n]^n$. For  a zero-advice affine decision tree inverter   $\InvB=(\queriesB, \decoderB)$, let $X^\InvB$ and $Z^\InvB$, jointly distributed with $F$ and $Y$, be according to \cref{notation:XZ}. The crux of the proof of \cref{thm:AffineTreeDecoders}  lies in the following lemma.
\begin{lemma}\label{lem:AffineTreeDecoders}
	Let $\InvB= (\queriesB, \decoderB)$  be a zero-advice, $(\numqry \le n/16)$-query, non-adaptive inverter with a $d$-depth  affine decision trees decoder, and let $Z^\InvB$ be as in \cref{notation:XZ}. Then for every $i\in [n]$ and $\mu\in[0,\half]$:
	\begin{align*}
	\pr{Z_{\maxi}^{\InvB} \mid Z_{\maxi-1}^{\InvB} } \le
	\frac{(d+1)i-1}{n} + \mu
	+
	2^{2\lceil \mu n\rceil\log(1/\mu) + \lceil \mu n\rceil \log (\numqry / n)  + (i-1)(d+1)\log n}.
	\end{align*}
\end{lemma}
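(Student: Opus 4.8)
The plan is to mirror the proof of \cref{lem:AffineDecoders} step by step, the only conceptual change being that, when a depth-$d$ affine decision tree is evaluated on $F$, it imposes up to $d$ (rather than one) linear constraints on $F$. Fix the zero-advice tree-inverter $\InvB=(\queriesB,\decoderB)$, $i\in[n]$ and $\mu\in[0,\half]$; write $Z_j\eqdef Z_j^\InvB$ and $X\eqdef X^\InvB$ (as in \cref{notation:XZ}); and for $y\in[n]$ let $\cT^y$ be the associated $d$-depth affine decision tree, $\cS_y$ its query set (of size at most $\numqry$), and $\pi^y(f)$ its edge path on $f$ (as in \cref{def:affineTrees}), recalling that every node label of $\cT^y$ is supported on $\cS_y$. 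Since $X_j=\cT^{Y_j}(F)$ is reached along the path $\pi^{Y_j}(F)$, the $d$ identities $\inrprd{\alpha,F}=\gamma$ hold ($\alpha$ ranging over the labels of the traversed internal nodes, $\gamma$ over the corresponding edge labels), and conditioning additionally on $Z_{i-1}$ we also get $\inrprd{e_{X_j},F}=Y_j$ for each $j<i$. Packing all of these into one system yields a random matrix equation $\mMH\times F=\vH$ with $\ell\eqdef(d+1)(i-1)$ rows (replacing the $2(i-1)$ rows of the affine-decoder proof), and adjoining the $\le d$ edge-path rows/values produced while trying to invert $Y_i$ gives $\mMP\times F=\vP$. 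We then split $\pr{Z_i\mid Z_{i-1}}$ according to whether $X_i\in\Cuv(\mMP)$, exactly as in \cref{eq:split_by_eM}.

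The branch $X_i\notin\Cuv(\mMP)$ is handled identically to the decoder case: conditioned on $\mMP\times F=\vP$, on $X_i\notin\Cuv(\mMP)$, and on $Y_{\le i}$ (which is independent of $F$ and, together with $(\mMP,\vP)$, determines $X_i$), \cref{clm:known_unit_vectors} gives $\pr{F(X_i)=Y_i\mid\cdots}=1/n$, so this branch contributes at most $1/n$. For the branch $X_i\in\Cuv(\mMP)$ one needs a tree-version of \cref{clm:not_many_good_indices_advanced}, and this is the one genuinely new point. Unlike in the affine-decoder case, where the adjoined row $\alpha^{Y_i}$ is a fixed vector determined by $Y_i$, here the adjoined rows are the labels of the nodes of $\cT^{Y_i}$ traversed on $F$, which \emph{depend on $F$}; hence \cref{existence_general_specific_spans} cannot be invoked on a matrix fixed by $Y_i$ alone. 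The remedy is that, whatever the traversed path is, all these rows are supported on the fixed set $\cS_{Y_i}$, so their span lies in $\Span\set{e_j:j\in\cS_{Y_i}}$, and therefore $\Cuv(\mMP)\subseteq\cS_{\mMH}\cup\cS_{Y_i}$, where $\cS_{\mMH}$ is the $\ell$-size set given by \cref{existence_general_specific_spans} for $\mMH$ (the containment follows from that lemma applied to $\mMH$ and the $\size{\cS_{Y_i}}\times n$ matrix of unit vectors $\set{e_j:j\in\cS_{Y_i}}$).

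Consequently, conditioning on $\mMH\times F=\vH$ only (exactly the chain of equalities in \cref{eq:not_many_good_indices_bound}, now taking $H\eqdef(Y_{<i},\set{\pi^{Y_j}(F)}_{j<i},\mMH,\vH)$ so that the conditioning pins down $(\mMH,\vH)$), we have $\pr{Y_i\in F(\cS_{\mMH})\mid\mMH\times F=\vH}\le\ell/n=(d+1)(i-1)/n$ since $\size{\cS_{\mMH}}\le\ell$ and $Y_i$ is independent of $F$, while $\pr{Y_i\in F(\cS_{Y_i})\mid\mMH\times F=\vH}$ is bounded by \cref{not_many_good_indices_conditioned} applied with the $\numqry$-size sets $\set{\cS_y}_{y\in[n]}$, the event $W\eqdef\set{\mMH\times F=\vH}$ and $p\eqdef n^{-\ell}$ (a lower bound on $\pr W$ by \cref{pre:solution_set_size}), and $\gamma\eqdef\mu$, giving $\mu+2^{2\lceil\mu n\rceil\log(1/\mu)+\lceil\mu n\rceil\log(\numqry/n)+(d+1)(i-1)\log n}$. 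This bounds the $X_i\in\Cuv(\mMP)$ branch by $\tfrac{(d+1)(i-1)}{n}+\mu+2^{2\lceil\mu n\rceil\log(1/\mu)+\lceil\mu n\rceil\log(\numqry/n)+(d+1)(i-1)\log n}$.

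Adding the two branches and using $(d+1)(i-1)+1=(d+1)i-d\le(d+1)i-1$ (for $d\ge1$) yields
\[
\pr{Z_i\mid Z_{i-1}}\le\frac{(d+1)i-1}{n}+\mu+2^{2\lceil\mu n\rceil\log(1/\mu)+\lceil\mu n\rceil\log(\numqry/n)+(i-1)(d+1)\log n},
\]
which is the assertion of \cref{lem:AffineTreeDecoders}; from here \cref{thm:AffineTreeDecoders} follows by the same argument that derives \cref{thm:AffineDecoders} from \cref{lem:AffineDecoders}. The main obstacle — and essentially the only place the proof departs from the affine-decoder case — is the replacement of the $F$-dependent edge-path rows by the fixed coordinate subspace spanned by $\set{e_j:j\in\cS_{Y_i}}$ (which is what forces $\numqry$, and the factor $d+1$, into the bound); I expect the most delicate part of writing this out cleanly to be the measurability bookkeeping required to apply \cref{clm:known_unit_vectors} and \cref{not_many_good_indices_conditioned} to the correct conditioned distributions.
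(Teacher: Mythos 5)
Your proposal reproduces the paper's overall architecture faithfully: the same $(d+1)(i-1)$-row system $\mMH\times F=\vH$ built from the $d$ edge-path constraints plus the one output constraint per past instance, the same split of $\pr{Z_i\mid Z_{i-1}}$ according to whether $X_i\in\Cuv(\mMP)$, the same $1/n$ bound via \cref{clm:known_unit_vectors} on the second branch, the same application of \cref{not_many_good_indices_conditioned} with the $\numqry$-size query sets and $p=n^{-(d+1)(i-1)}$, and the same closing arithmetic. The quantitative bound you obtain is identical to the paper's.

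There is, however, one step that is not justified as written, and it is exactly at the point you yourself flag as the "genuinely new" one. You adjoin to $\mMH$ the $F$-dependent edge-path rows of $\cT^{Y_i}$ and then claim $\Cuv(\mMP)\subseteq\cS_{\mMH}\cup\cS_{Y_i}$ because "every node label of $\cT^{y}$ is supported on $\cS_y$". That property is not part of \cref{def:affineTrees,def:AffineTreesDecoders}: the definition only requires that the tree's \emph{output} agree with $\decoder(y,a,f(\queries(y,a)))$ for every $f$; the tree is an arbitrary $n$-input tree and its internal labels may have support outside the query set (a tree can branch on un-queried coordinates while its output still depends only on the queried ones, so even a "restrict the labels to $\cS_y$" reduction is not immediate, since it can change the paths). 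If some path row equals $e_k$ for $k\notin\cS_{Y_i}\cup\cS_{\mMH}$, your containment fails and the first-branch bound breaks. The paper sidesteps this entirely by adjoining, for the $i$-th instance, not the path rows but the fixed $\numqry\times n$ matrix $\mE^{Y_i}$ of query unit vectors, setting $\mM\eqdef\stack{\mMH}{\mE^{Y_i}}$ and $V\eqdef\stack{\vH}{A^{Y_i}}$; then $X_i$ is a function of $(Y_i,V)$ because the decoder's output depends only on the query answers, \cref{clm:not_many_good_indices_advanced} applies verbatim with $(\mB^1,\ldots,\mB^n)\eqdef(\mE^1,\ldots,\mE^n)$ and $t=\numqry$, and no assumption on the tree labels is needed. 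So your argument needs either this substitution or an added (and currently unproven) normalization lemma stating that the trees may be taken with all labels supported on their query sets; with the former fix the rest of your write-up, including the conditioning bookkeeping, goes through as you describe.
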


\paragraph{Proving \cref{thm:AffineTreeDecoders}.}
\begin{proof}[Proof of \cref{thm:AffineTreeDecoders} ]
Omitted, follows \cref{lem:AffineTreeDecoders} using essentially the  same lines we used to derive   \cref{thm:AffineDecoders} from \cref{lem:AffineDecoders}. 
\end{proof}

\subsubsection{Proving \cref{lem:AffineTreeDecoders}}
 \begin{proof}[Proof of \cref{lem:AffineTreeDecoders}] 
 	Fix  $\InvB= (\queriesB, \decoderB)$,  $i \in [n]$ and $\mu \in [0,\half]$.  Let $\cT^y$ be the affine decision tree associated with the computation   of $\decoderB$ on input $y$,  let   $\Np^y(f)$ and $\Ep^{y}(f)$  be the node and edge paths, receptively,   of $\cT^{y}$ on $f$, and let  $\alpha^{y}(f)\eqdef (\alpha^{y}_1(f),\ldots, \alpha^{y}_d(f)),o^y(f)$ be the labels of  $\Np^y(f)$ according to $\cT^y$. For $j \in [n]$, let  $\Np^j = \Np^{Y_j}(F)$, $\Ep^j = \Ep^{Y_j}(F)$, $\alpha^j \eqdef  \alpha^{Y_j}(F)$ and  $o^j \eqdef  o^{Y_j}(F)$. Finally, let  $X\eqdef X^\InvB$ and $Z_j\eqdef Z^\InvB_j$.   By definition,  for every $j < i$:
	
	\begin{enumerate}
			\item	$\forall k\in [d]: \quad \inrprd{\alpha^j_k, F}=\Ep^k_j$, 

			\item$o^j=X_j$.
	\end{enumerate}
	Conditioning on $Z_{i-1}$ further   implies that  for every $j < i$:
	\begin{enumerate}
		\item[2.]  $ F(X_j) = Y_j$.
	\end{enumerate}
Let $\ell \eqdef (d+1)(i-1)$, and let $\mM^{i-1}\in \matl$ be the (random) matrix defined by $\mM^{i-1}_{(d+1)(j-1)+k}\eqdef \alpha^j_k$ for  $k\in [d]$,  and $\mM^{i-1}_{(d+1)j} \eqdef e_{X_j}$. Let $V^{i-1} \in \vecl$ be the (random) vector defined by $V^{i-1}_{(d+1)(j-1)+k} \eqdef \Ep^k_{j}$ and $V^{i-1}_{(d+1)j} \eqdef Y_j$. By definition, conditioned on $\Zto{i-1}$ it holds that $\mM^{i-1} \times F = V^{i-1}$. That is, the matrix $\mM^{i-1}$ contains also the internal computations done by the $i-1$ trees (and not only the final outcome of the each computation as in the affine decoder case). For  $y \in [n]$, let $Q^y \eqdef \queriesB(y) \in [n]^\numqry$  be the queries that $\InvB$ makes on input $y$, and	let $A^y \in [n]^\numqry$ be $F$'s answers to these queries. That is, for every $k \in [\numqry]$:
\begin{enumerate}
	\item[3.] $ F(Q^y_k )=A^y_k$.
\end{enumerate}
Let $\mE^y\in \matd$ be the matrix defined by $\mE^y_k \eqdef e_{Q^y_k}$. Let $\mM \eqdef \stack{\mM^{i-1}}{\mE^{Y_i}}$ and let $V \eqdef \stack{V^{i-1}}{A^{Y_i}}$. 
That is, we add to $\mM$ all queries made by $\InvB$ on input $Y_i$ (and not only the output of the computations made by $\decoderB$).  We write
\begin{align}\label{eq:split_by_eM_trees}
\pr{Z_i \mid  \Zto{i-1}} &= \pr{Z_i \land X_i \in \Cuv(\mM) \mid  \Zto{i-1} }
+ \pr{Z_i \land X_i \notin \Cuv(\mM) \mid  \Zto{i-1}}
\end{align}

	and, using the above notions,  prove the claim by separately bounding the two terms of the above equation. Let $H \eqdef (Y_{< i}, \mMH,\vH)$, and  note that

\begin{align}\label{eq:not_many_good_indices_bound_trees}
&\pr{Z_i \land X_i \in \Cuv(\mMP) \mid  \Zto{i-1} }  \le \pr{Y_i \in F(\Cuv(\mMP)) \mid   \Zto{i-1}} \\
&= \ex{h \gets H \mid \Zto{i-1}}{\pr{Y_i \in F(\Cuv(\mMP)) \mid H = h,Z_{i-1}}}\nonumber\\
&= \ex{h=(y_{< i}, m^{i-1},v^{i-1}) \gets H \mid \Zto{i-1}}{\pr{Y_i \in F\left(\Cuv\stack{m^{i-1}}{\mE^{Y_i}}\right) \mid H = h,m^{i-1}\times F = v^{i-1}}}\nonumber\\
&= \ex{(y_{< i}, m^{i-1},v^{i-1}) \gets H \mid \Zto{i-1}}{\pr{Y_i \in F\left(\Cuv\stack{m^{i-1}}{\mE^{Y_i}}\right) \mid Y_{< i}=y_{<i}, m^{i-1}\times F = v^{i-1}}}\nonumber\\
&= \ex{(y_{< i}, m^{i-1},v^{i-1}) \gets H \mid \Zto{i-1}}{\pr{Y_i \in F\left(\Cuv\stack{m^{i-1}}{\mE^{Y_i}}\right) \mid  m^{i-1}\times F = v^{i-1}}}\nonumber\\
&\le \left( \frac{(d+1)(i-1)}{n} + \mu \right)
+ 2^{2\lceil \mu n\rceil\log(1/\mu) + \lceil \mu n\rceil \log (\numqry / n) + (d+1)(i-1)\log n}.\nonumber
\end{align}

The first inequality holds by the definition of $Z_i$. The second equality holds by the definition of $Z_{i-1}$. The third equality holds since   the event $\set{Y_{< i}=y_{<i}, m^{i-1}\times F = v^{i-1}}$ implies that $\set{\mM^{i-1}=m^{i-1},V^{i-1}=v^{i-1}}$. The last equality holds since $F$ is independent of $Y$, and the last inequality follows by  \cref{clm:not_many_good_indices_advanced} with respect to $\mA \eqdef m^{i-1}, v \eqdef v^{i-1} $, and $(\mB^1,\ldots,\mB^n) \eqdef (\mE^1,\ldots,\mE^n)$.

For bounding the right-hand term of \cref{eq:split_by_eM}, let $H \eqdef (X_i,Y_{\le i}, \mM,V)$,  and compute

\begin{align}\label{eq:clm:known_unit_vectors_bound_trees}
&\pr{Z_i \land X_i \notin \Cuv(\mMP) \mid  \Zto{i-1}} \le \pr{Z_i \mid  X_i \notin \Cuv(\mMP) , \Zto{i-1}} \\
&= \ex{h \gets H\mid X_i \notin \Cuv(\mMP) , \Zto{i-1}}{\pr{Z_i \mid H=h,Z_{i-1}}}\nonumber\\
&= \ex{h= (x_i,y_{\le i}, m,v) \gets H\mid X_i \notin \Cuv(\mMP) , \Zto{i-1}}{\pr{F(x_i) =y_i \mid H=h,m\times F = v}}\nonumber\\
&= \ex{(x_i,y_{\le i}, m,v) \gets H\mid X_i \notin \Cuv(\mMP) , \Zto{i-1}}{\pr{F(x_i) =y_i \mid Y_{\leq i}=y_{\le i},m\times F = v}}\nonumber\\
&= \ex{(x_i,y_{\le i}, m,v) \gets H\mid X_i \notin \Cuv(\mMP) , \Zto{i-1}}{\pr{F(x_i) =y_i \mid m\times F = v}}\nonumber\\
&= 1/n. \nonumber
\end{align}

The second equality holds by the definition of $Z_{i-1}$. The third equality holds since   the event $\set{Y_{\le i}=y_{\le i}, m\times F = v}$ implies that $\set{\mM=m,V=v}$,  and $X_i$ is a function of $\vP$ (which contains all the answers to the queries of the decoder). The fourth equality holds since $F$ is independent from $Y$.  The last inequality follows by \cref{clm:known_unit_vectors}. Combining \cref{eq:split_by_eM_trees,eq:not_many_good_indices_bound_trees,eq:clm:known_unit_vectors_bound_trees}, we conclude that
\begin{align*}
\pr{Z_i \mid  \Zto{i-1}}\le& \left( \frac{(d+1)(i-1)}{n} + \mu \right) + 2^{2\lceil \mu n\rceil\log(1/\mu) + \lceil \mu n\rceil \log (\numqry / n) + (d+1)(i-1)\log n} + 1/n \\
&\leq \left( \frac{(d+1)i-1}{n} + \mu \right) +2^{2\lceil \mu n\rceil\log(1/\mu) + \lceil \mu n\rceil \log (\numqry / n) + (d+1)(i-1)\log n}.
\end{align*}
\end{proof}

\section*{Acknowledgment}
We are  thankful to Dmitry Kogan, Uri Meir and Alex Samorodnitsky for very useful discussions.
We also thank the anonymous reviewers for their comments.

\bibliographystyle{abbrvnat}
\bibliography{crypto}
\end{document}